\definecolor{penndarkestblue}{cmyk}{1,0.74,0,0.77}
\definecolor{penndarkerblue}{cmyk}{1,0.74,0,0.70}
\definecolor{pennblue}{cmyk}{0.99,0.66,0,0.57} 
\definecolor{pennlighterblue}{cmyk}{0.98,0.44,0,0.35}
\definecolor{pennlightestblue}{cmyk}{0.38,0.17,0,0.17} 
\definecolor{penndarkestred}{cmyk}{0,1,0.89,0.66}
\definecolor{penndarkerred}{cmyk}{0,1,0.88,0.55}
\definecolor{pennred}{cmyk}{0,1,0.83,0.42} 
\definecolor{pennlighterred}{cmyk}{0,1,0.6,0.24}
\definecolor{pennlightestred}{cmyk}{0,0.43,0.26,0.12} 
\definecolor{penndarkestgreen}{cmyk}{1,0,1,0.68}
\definecolor{penndarkergreen}{cmyk}{1,0,1,0.57}
\definecolor{penngreen}{cmyk}{1,0,1,0.44} 
\definecolor{pennlightergreen}{cmyk}{1,0,1,0.25}
\definecolor{pennlightestgreen}{cmyk}{0.43,0,0.43,0.13}
\definecolor{penndarkestorange}{cmyk}{0,0.65,1,0.49}
\definecolor{penndarkerorange}{cmyk}{0,0.65,1,0.33}
\definecolor{pennorange}{cmyk}{0,0.54,1,0.24} 
\definecolor{pennlighterorange}{cmyk}{0,0.32,1,0.13}
\definecolor{pennlightestorange}{cmyk}{0,0.15,0.46,0.06}
\definecolor{penndarkestpurple}{cmyk}{0,1,0.11,0.86}
\definecolor{penndarkerpurple}{cmyk}{0,1,0.13,0.82}
\definecolor{pennpurple}{cmyk}{0,1,0.11,0.71} 
\definecolor{pennlighterpurple}{cmyk}{0,1,0.05,0.46}
\definecolor{pennlightestpurple}{cmyk}{0,0.35,0.02,0.23}
\definecolor{pennyellow}{cmyk}{0,0.20,1,0.05} 
\definecolor{pennlightgray1}{cmyk}{0,0,0,0.05}
\definecolor{pennlightgray3}{cmyk}{0.01,0.01,0,0.18}
\definecolor{pennmediumgray1}{cmyk}{0.04,0.03,0,0.31}
\definecolor{pennmediumgray4}{cmyk}{0.08,0.06,0,0.54}
\definecolor{penndarkgray2}{cmyk}{0.09,0.07,0,0.71}
\definecolor{penndarkgray4}{cmyk}{0.1,0.1,0,0.92}
\def\SO3{\mathrm{SO(3)}}
\newtheorem{assumption}{\hspace{0pt}\bf Assumption \hspace{-0.15cm}}
\newtheorem{lemma}{\hspace{0pt}\bf Lemma}
\newtheorem{theorem}{\hspace{0pt}\bf Theorem}
\newtheorem{corollary}{\hspace{0pt}\bf Corollary}
\newtheorem{definition}{\hspace{0pt}\bf Definition}
\begin{document}

\title{Graph Neural Networks over the Air \\ for Decentralized Tasks in Wireless Networks} 

\author{\IEEEauthorblockN{Zhan Gao$^{\dagger}$ and Deniz~G{\"u}nd{\"u}z$^{\ddagger}$, ~\IEEEmembership{IEEE Fellow}}\\

%
\thanks{Preliminary results were presented in EUSIPCO 2023 \cite{10289977}. $^{\dagger}$Department of Computer Science and Technology, University of Cambridge, Cambridge, UK (Email: zg292@cam.ac.uk). $^{\ddagger }$Department of Electrical and Electronic Engineering, Imperial College London, London, UK (Email: d.gunduz@imperial.ac.uk).}
}

\markboth{}%
{Graph Neural Networks over the Air for Decentralized Tasks in Wireless Networks}

\maketitle

\begin{abstract}
Graph neural networks (GNNs) model representations from networked data and allow for decentralized inference through localized communications. Existing GNN architectures often assume ideal communications and ignore potential channel effects, such as fading and noise, leading to performance degradation in real-world implementation. Considering a GNN implemented over nodes connected through wireless links, this paper conducts a stability analysis to study the impact of channel impairments on the performance of GNNs, and proposes graph neural networks over the air (AirGNNs), a novel GNN architecture that incorporates the communication model. AirGNNs modify graph convolutional operations that shift graph signals over random communication graphs to take into account channel fading and noise when aggregating features from neighbors, thus, improving architecture robustness to channel impairments during testing. We develop a channel-inversion signal transmission strategy for AirGNNs when channel state information (CSI) is available, and propose a stochastic gradient descent based method to train AirGNNs when CSI is unknown. The convergence analysis shows that the training procedure approaches a stationary solution of an associated stochastic optimization problem and the variance analysis characterizes the statistical behavior of the trained model. Experiments on decentralized source localization and multi-robot flocking corroborate theoretical findings and show superior performance of AirGNNs over wireless communication channels.
\end{abstract}

\begin{IEEEkeywords}
Graph neural networks, decentralized implementation, over-the-air computation, wireless channel impairments 
\end{IEEEkeywords}

\IEEEpeerreviewmaketitle


\section{Introduction} \label{sec:intro}

Graph neural networks (GNNs) are one of the key tools to extract features from networked data \cite{scarselli2008graph, defferrard2016convolutional, gao2021stochastic, kenlay2021interpretable, xu2018powerful}, and have found wide applications in wireless communications \cite{gao2020resource, eisen2020optimal, gao2023decentralized}, multi-agent coordination \cite{tolstaya2020learning, chen2021graph, gao2023environment, gao2023constrained} and recommendation systems \cite{ying2018graph, wu2019session, 10042031}. GNNs extend convolutional neural networks to the graph setting by employing a multi-layered architecture with each layer comprising a graph filter bank and a pointwise nonlinearity \cite{gama2018convolutional}. With the distributed nature of graph filters, GNNs can compute output features with local neighborhood information. This makes them well-suited candidates for decentralized tasks over networks, where each node takes actions by only sensing its local states and communicating with its neighboring nodes \cite{gama2022synthesizing, gao2022wide}. 

However, when implemented in a decentralized fashion over remote nodes, information exchanges among neighboring nodes of GNNs require wireless transmissions. Existing works often assume ideal communications and ignore channel impairments, such as fading and noise, which impose inevitable effects on all transmissions \cite{belloni2004fading, peppas2011overview, popa2008fading}. In real-world wireless networks, each node receives faded and noisy messages from its neighbors and computes perturbed outputs that mismatch the ones assuming ideal communications during training; hence, resulting in performance degradation during testing. We note here that it is possible to establish reliable communication links among nodes using conventional communication techniques, e.g., error correction codes, power adaption, multiplexing and retransmissions; however, these techniques introduce significant complexity and delay, and it has been shown that separating learning from communication can lead to severe suboptimality \cite{gunduz2019machine, gunduz2020communicate}. Therefore, in this work, our goal is to incorporate wireless channel characteristics into training and inference of GNNs through an end-to-end problem formulation. 

GNNs have been shown to be effective in solving various wireless resource allocation problems by leveraging channel state information (CSI) \cite{he2021overview}. The works in \cite{shen2019graph, chowdhury2021efficient} utilize GNNs to control transmission powers in interference networks based on instantaneous CSI, while \cite{eisen2020optimal, gao2020resource} impose real-world constraints and learn power allocation strategies with GNNs in a model-free manner. Subsequently, \cite{gu2023graph, yang2023implementing} implement GNNs for decentralized power allocation that are scalable to wireless links, while \cite{li2021energy, naderializadeh2023learning} consider energy efficiency and user selection on the basis of power allocation. On the other hand, \cite{he2020resource} proposes GNNs to extract features from CSI and solve channel allocation problems in vehicle-to-everything networks. The work in \cite{gao2023decentralized} considers the decentralized setting and uses GNNs to learn channel management strategies that minimize the mutual interference in wireless local area networks. Other applications include beamforming design \cite{shen2020graph, zhang2021scalable}, privacy-preserving inference \cite{10140175}, link scheduling \cite{9962800}, etc. However, these works only consider CSI as external (input) features of GNNs to design resource allocation strategies and still ignore the impact of channel impairments in the decentralized implementation of GNNs. Moreover, they focus on specific resource allocation problems, which are not applicable to general decentralized tasks (e.g., multi-robot coordination). 

To account for wireless communication channels in the decentralized implementation of GNNs, we take advantage of the structural similarity between graph convolutional operation and over-the-air computation (AirComp) \cite{goldenbaum2013harnessing, amiri2020federated, fiorellino2024topological, amiri2021blind}. In particular, graph convolutional operation aggregates neighborhood information via wireless communications to generate higher-level features, which is challenging with limited spectrum bandwidth especially when the aggregation needs to be dealt in a timely manner. On the other hand, the generated features do not depend on the individual signal of each neighbor but only concern the fusion of all neighbors' signals \cite{gama2018convolutional, ruiz2021graph}. This motivates to leverage AirComp, which is an analog wireless transmission technique that enables efficient data collection from concurrent channels for decentralized implementation. 

AirComp is capable of calculating the weighted sum of neighbors' signals by exploiting the signal-superposition property of wireless medium \cite{goldenbaum2013robust, amiri2020machine, zhu2019broadband}. It allows to transmit signals simultaneously with non-orthogonal wireless resources and effectively integrates communication with computation, which, in turn, reduce required spectrum resources and speed up wireless transmissions. The research on AirComp more focused on developing transmission schemes that minimize the impact of channel impairments on multi-signal aggregation, which include beamforming design \cite{zhu2018mimo, lin2022distributed}, power control \cite{li2019wirelessly, liu2020over} and user scheduling \cite{tang2022node, du2023gradient}. AirComp has been applied in various mobile applications with integrated communication and computation tasks \cite{csahin2023survey}. For example, it is used in federated learning for gradient transmissions among local edge devices and the global edge server \cite{yang2020federated, aygun2022hierarchical, jing2022federated}, or for collaborative inference across an ensemble of models at the edge \cite{yilmaz2022over}.

By leveraging the aforementioned connection between GNN implementation and over-the-air computation, this paper proposes graph neural networks over the air (AirGNNs) that incorporate wireless communication channels directly into the architecture for decentralized implementation. This is inspired by \cite{jankowski2022airnet}, which showed that incorporating channel impairments into training makes deep neural networks more robust when network parameters are delivered via wireless communications. AirGNNs account for channel fading and Gaussian noise when exchanging messages among neighboring nodes during training, where each node generates features based on faded and noisy information aggregated from its neighbors. This yields the learned parameters robust to channel impairments during implementation and improves performance in decentralized tasks. Our contributions are summarized as follows: 

\begin{enumerate}[(i)]
	
	\item \emph{Graph neural networks over the air (Section \ref{sec:AirGNNs}).} We develop the AirGNN framework as a multi-layered architecture where each layer consists of graph filters over the air (AirGFs) and a pointwise nonlinearity. AirGFs shift graph signals over wireless communication channels in an analog fashion to collect multi-hop neighborhood information and generate features by aggregating information with over-the-air computation. 
	
	\item \emph{Stability analysis (Section \ref{sec:stabilityAnalysis}).} We analyze the impact of random channel impairments on the output of AirGNNs, and show that the output variation caused by channel impairments is bounded by factors that are quadratic in the mean and standard deviation of fading and noise. We conduct such analysis in the spectral domain by generalizing the graph filter frequency response to scenarios over wireless communication channels, which allows to establish stability results that hold uniformly for all graphs.
	
	\item \emph{AirGNNs with and without CSI (Sections \ref{sec:airGNNsCSI}-\ref{sec:AirGNNNoCSI}).} We design different training and implementation schemes for AirGNNs with and without CSI. When CSI is available at transmitting nodes, we propose truncated channel-inversion power control for transmission, formulate the cost function w.r.t. truncated channel randomness, and develop a stochastic gradient descent (SGD) based method to train AirGNNs. When CSI is not available, we consider channel randomness directly and train AirGNNs over the entire CSI space.  
	
	\item \emph{Convergence and variance analysis (Section \ref{sec:convergence}).} We show that the training procedure of the AirGNN is equivalent to solving an associated stochastic optimization problem with SGD, and prove its convergence to a stationary solution. We further prove the output variance of the trained AirGNN is bounded by the mean and standard deviation of channel fading and noise, showing that its output does not diverge but varies around the optimal expected solution of SGD. 
	
\end{enumerate}

Numerical results on decentralized tasks of source localization and multi-robot flocking corroborate our theoretical findings in Section \ref{sec:experiments}. The conclusions are drawn in Section \ref{sec:conclusion}, and all proofs are summarized in the supplementary material.

\section{Graph Neural Networks over the Air}\label{sec:AirGNNs}

Consider a graph $\mathcal{G}= ( \mathcal{V}, \mathcal{E} )$ with node set $\mathcal{V} = \{ 1,\cdots,n \}$ and edge set $\ccalE$. The graph structure can be represented by an $n \times n$ matrix $\bbS$, referred to as the graph shift operator, with $(i,j)$th entry $[\bbS]_{ij}=s_{ij}$ non-zero if node $j$ is connected to node $i$, i.e., $(j,i)\in \mathcal{E}$, or $i=j$, e.g., the adjacency matrix $\bbA$ \cite{sandryhaila2013discrete}. The graph signal is a vector defined on the nodes $\bbx = [x_1,...,x_n]^\top$, where the $i$th entry $x_i$ is the signal value associated to node $i$. The graph serves as a mathematical representation of the network topology and the graph signal captures the network states. For example, in a multi-agent system, nodes correspond to agents, edges to communication links, and signal values to agent states \cite{gao2023online}. 

The graph shift operation $\bbS \bbx$ is one of the key operations in graph signal processing. It assigns to each node $i$ the aggregated information from its neighboring nodes $[\bbx^{(1)}]_i = [\bbS\bbx]_i$, which extends signal shifting from the time / space domain to the graph domain. In the decentralized setting over physical networked systems (e.g., sensor networks and multi-robot systems), this corresponds to message exchanges between the neighboring sensors / robots through available communication links. However, $\bbS\bbx$ assumes perfect signal transmissions among nodes and ignores the fact that communication links suffer from channel fading and additive noise in the real world.

\smallskip
\noindent \textbf{Communication channel.} We consider that each node exchanges information wirelessly with the neighboring nodes in its communication range. The wireless channel between nodes $i$ and $j$ is modeled as a slow fading channel
\begin{align}\label{eq:communicationChannelIndividual}
	[\bbz]_i = h_{ij} p [\bbx]_j + n_{i},
\end{align}
where $h_{ij}$ is the channel coefficient from node $j$ to node $i$ with Rayleigh distribution,\footnote{The channel coefficient $h_{ij}$ can follow any random distributions, which depend on specific application scenarios. We consider Rayleigh distribution following the common assumption without loss of generality.} $p$ is the transmitted power of node $j$, and $n_i$ is the zero-mean Gaussian noise at node $i$. The channel coefficients $\{h_{ij}\}$ are assumed independent across communication links, which are fixed during a single transmission but changes from one transmission to the next in an independent and identically distributed (i.i.d.) manner.

\smallskip
\noindent \textbf{Over-the-air computation (AirComp).} Since each node aggregates the received signal values by summing them up, it does not need signal values of individual neighbors but only their sum. This motivates to transmit these signals in an uncoded / analog manner with \textit{over-the-air computation} \cite{amiri2020federated, amiri2021blind}. Specifically, at each communication round, we assume that all nodes transmit signal values simultaneously in a synchronized manner. AirComp allows to collect neighbors' signals through wireless communication channels in a way as 
\begin{align}\label{eq:communicationChannel}
	[\bbx^{(1)}]_i \!=\! \sum_{j \in \ccalN_i}\! h_{ij}^{(1)} p [\bbx]_j + n_{i}^{(1)},~\text{for}~i=1,...,n,
\end{align}
where $\ccalN_i$ is the neighbor set of node $i$, i.e., $\ccalN_i = \{j | (i,j)\in \ccalE\}$, and the superscript $(1)$ represents the $1$-hop communication.

\smallskip
\noindent \textbf{Graph shift operation over the air (AirGSO).} While presenting an efficient integration of communication and computation by leveraging the signal-superposition property, AirComp introduces channel fading and noise as shown in \eqref{eq:communicationChannel}. With these channel impairments, each node receives faded and noisy versions of the messages sent by its neighbors. By multiplying the received signal with $1/p$, the resulting signal is
\begin{align}\label{eq:communicationChannel1}
	[\bbx^{(1)}]_i \!=\! \sum_{j\in\ccalN_i}\! h_{ij}^{(1)} [\bbx]_j + \frac{n_{i}^{(1)}}{p},~\text{for}~i=1,...,n,
\end{align}
and the graph shift operation through wireless communication channels takes the form of
\begin{align}\label{eq:siganlShiftingChannel}
	\bbx^{(1)} = \bbS_{\rm air}^{(1)} \bbx + \bbn^{(1)},
\end{align}
where $\bbS^{(1)}_{\rm air}$ is the graph shift operator with channel coefficients, i.e., 
$[\bbS^{(1)}_{\rm air}]_{ij} = h_{ij}^{(1)}$ for any $(i,j)\in \ccalE$, and $\bbn^{(1)}$ is the Gaussian noise vector with $[\bbn^{(1)}]_i = n^{(1)}_{i}/p$ for $i=1,...,n$. Note that, different from its ideal counterpart $\bbS\bbx$, \eqref{eq:siganlShiftingChannel} depends not only on the underlying graph topology but also on communication effects. We refer to the latter as \emph{graph shift operation over the air} (AirGSO).

\smallskip
\noindent \textbf{Graph filter over the air (AirGF).} AirGF is a linear combination of multi-shifted signals over the air. Specifically, an AirGSO shifts $\bbx$ over $\bbS$ through wireless communication channels and obtains the one-shifted signal $\bbx^{(1)}$ that aggregates $1$-hop neighborhood information [cf. \eqref{eq:siganlShiftingChannel}]. By recursively shifting $\bbx$ in this fashion, the $k$-shifted signal $\bbx^{(k)}$ becomes 
\begin{align}\label{eq:kShiftedSignal}
	&\bbx^{(k)} = \bbS_{\rm air}^{(k)} \Big(\bbS_{\rm air}^{(k-1)} \big( \cdots + \bbn^{(1)} \big) + \bbn^{(k-1)}\Big) + \bbn^{(k)}\\
	&= \prod_{\kappa=1}^k \bbS_{\rm air}^{(\kappa)}\bbx + \Big(\!\sum_{\kappa=1}^{k-1} \prod_{\tau = \kappa+1}^k \!\bbS_{\rm air}^{(\tau)} \bbn^{(\kappa)} \!+\! \bbn^{(k)}\!\Big)\nonumber \!=\! \bbP^{(k)} \!+\! \bbN^{(k)},
\end{align}
where $\bbS_{\rm air}^{(k)}$ and $\bbn^{(k)}$ are, respectively, the AirGSO and Gaussian noise vector at the $k$th graph shift operation,\footnote{For convenience of expression, we assume $\sum_{a}^b = 0$ and $\prod_{a}^b=1$ if $b < a$.} and $\bbP^{(k)}$ and $\bbN^{(k)}$ are, respectively, the signal and noise components of $\bbx^{(k)}$. The $k$-shifted signal $\bbx^{(k)}$ accesses farther nodes and aggregates the $k$-hop neighborhood information with communication impairments. Given a sequence of shifted signals $\{\bbx, \bbx^{(1)},\ldots,\bbx^{(K)}\}$, the AirGF aggregates them with a set of filter parameters $\{\alpha_k\}_{k=0}^K$ as
\begin{align}\label{eq:AirGF}
	&\bbH_{\rm air} (\bbS)\bbx = \sum_{k=0}^K \alpha_k \bbx^{(k)} = \bbP_{\rm air}(\bbS,\bbx) + \bbN_{\rm air}(\bbS,\bbn)\\
	& = \sum_{k=0}^K \alpha_k \prod_{\kappa=1}^k \bbS_{\rm air}^{(\kappa)}\bbx + \sum_{k=1}^{K} \sum_{\kappa=k}^{K} \alpha_{\kappa}\! \prod_{\tau=k+1}^{\kappa} \bbS_{\rm air}^{(\tau)} \bbn^{(k)} 
	\nonumber
\end{align}
with $\bbx^{(0)} = \bbx$. The first term $\bbP_{\rm air}(\bbS,\bbx)$ in \eqref{eq:AirGF} captures the signal information perturbed by channel fading, referred to as the signal component, while the second term $\bbN_{\rm air}(\bbS,\bbn)$ is the accumulated noise, referred to as the noise component. The AirGF is a shift-and-sum operator that extends graph convolution by incorporating wireless communication channels. It aggregates the neighborhood information up to a radius of $K$ while accounting for channel fading and Gaussian noise when shifting signals over the underlying graph. AirGFs reduce to conventional graph filters in ideal scenarios with perfect communication links, i.e., $h_{ij}^{(1)} \!=\! 1$ and $n_i^{(1)} \!=\! 0$ in \eqref{eq:communicationChannel1}.

\smallskip
\noindent \textbf{AirGNN.} AirGNN consists of multiple layers, where each layer comprises a bank of AirGFs and a pointwise nonlinearity. Specifically, at layer $\ell$, we have $F$ input signals $\{\bbx_{\ell-1}^g\}_{g=1}^{F}$ generated by the previous layer $\ell-1$. The latter are processed by a bank of AirGFs $\{\bbH_{{\rm air}, \ell}^{fg}\}_{fg}$, aggregated over the input index $g$, and passed through a pointwise nonlinearity $\sigma(\cdot)$ as
\begin{align}\label{eq:AirGNN}
	\bbx_{\ell}^f = \sigma\left( \sum_{g=1}^{F} \bbH_{{\rm air}, \ell}^{fg}(\bbS) \bbx_{\ell-1}^{g} \right),~\forall~f=1,\ldots,F.
\end{align}
The outputs $\{\bbx_\ell^f\}_{f=1}^F$ are fed into layer $\ell+1$ until the final layer $L$. Without loss of generality, we consider a single input $\bbx_0^1 = \bbx$ and output $\bbx_L^1$. The AirGNN can be represented as a nonlinear mapping $\bbPhi_{\rm air}(\bbx, \bbS, \ccalA)$ of graph signals $\bbx$, where $\ccalA$ are the architecture parameters comprising all filter parameters. AirGNNs reduce to conventional GNNs if assuming ideal communications without channel fading and Gaussian noise among the neighboring nodes. 

\smallskip
\noindent \textbf{Decentralized implementation.} AirGNNs and AirGFs are ready for decentralized implementation, i.e., each node can compute its own output with local neighborhood information. Specifically, the one-shifted signal $[\bbx^{(1)}]_i$ at node $i$ can be computed by receiving signal values $\{[\bbx]_j\}_{j\in \ccalN_i}$ of neighboring nodes transmitted through wireless communication links in an uncoded fashion [cf. \eqref{eq:communicationChannel1}] and summing them up with AirComp. Likewise, $K$ shifted signals $\{[\bbx^{(k)}]_i\}_{k=0}^K$ at node $i$ can be computed with recursive communications. Since the aggregation of the shifted signals $\{[\bbx^{(k)}]_i\}_{k=0}^K$ with the filter parameters $\{\alpha_k\}_{k=0}^K$ does not involve inter-node operations, node $i$ can compute the filter output $[\bbH_{\rm air}(\bbS)\bbx]_i$ locally and the AirGF allows for a decentralized implementation. AirGNN consists of AirGFs and nonlinearities, where the former is decentralized and the latter is pointwise and local; hence, inheriting the decentralized implementation. This implies that a node does not need signal values of all the nodes and the full knowledge of the graph to compute AirGNN / AirGF outputs, but only the ability to communicate local signals in a synchronized manner with its neighbors, in order to allow over-the-air aggregation \cite{amiri2020federated, amiri2021blind}.

\section{Stability Analysis}\label{sec:stabilityAnalysis}

In this section, we analyze the impact of channel impairments, i.e., channel fading and Gaussian noise, on the performance of AirGNNs. Specifically, AirGNNs generate output features based on the impaired neighborhood information, which deviate from those generated based on the ideal information and result in performance degradation. We aim to characterize the output difference caused by channel impairments and identify the effects of channel conditions, filter property and architecture hyper-parameters on the stability of AirGNNs. Since AirGNNs are built upon AirGFs, we first analyze the stability of AirGFs and then extend the result to AirGNNs. 

\subsection{Filter Frequency Response over the Air}

For stability analysis, we consider the channel coefficient $h$ drawn from any random distribution with expectation $\mu$ and standard deviation $\delta$, and the noise $n$ drawn from a Gaussian distribution with zero expectation and standard deviation $\eps$. In this context, the output of an AirGF is a random variable that differs from the deterministic output of a nominal graph filter. 

To characterize such output difference and establish a result that holds uniformly for any underlying graph, we conduct stability analysis in the graph spectral domain. Specifically, let $\bbS = \bbV \bbLambda \bbV^\top$ be the eigendecomposition with eigenvectors $\bbV = [\bbv_1,\ldots,\bbv_n]$ and eigenvalues $\Lambda = {\rm diag}(\lambda_1,\ldots,\lambda_n)$. The graph Fourier transform (GFT) projects the graph signal $\bbx$ on $\bbV$ as $\bbx = \sum_{i=1}^n \hat{x}_i \bbv_i$ and obtains the Fourier coefficients $\hat{\bbx}= [\hat{x}_1,\ldots,\hat{x}_n]^\top$. By substituting the GFT into the nominal graph filter, we have
\begin{align}\label{eq:frequencyResponse}
	\bbu \!=\! \bbH(\bbS)\bbx \!=\!\!\! \sum_{k=0}^K \alpha_k \bbS^k \bbx \!=\!\!\!  \sum_{k=1}^K \!\!\alpha_k \bbS^k \!\!\sum_{i=1}^n\!\! \hat{x}_i \bbv_i \!=\!\!\! \sum_{i=1}^n \!\hat{x}_i\! \sum_{k=0}^K \!\alpha_k \lambda_i^k \bbv_i. 
\end{align}
By further applying the GFT on the filter output $\bbu = \sum_{i=1}^n \hat{u}_i \bbv_i$, we obtain the input-output relation in the spectral domain as $\hat{u}_i = \sum_{k=0}^K \alpha_k \lambda_i^k \hat{x}_i$ for $i=1,\ldots,n$. This motivates to define the filter frequency response as an analytic function
\begin{align}\label{eq:filterResponse1}
	f(\lambda) = \sum_{k=0}^K \alpha_k \lambda^k
\end{align}
of a generic frequency variable $\lambda$. The underlying graph $\bbS$ instantiates specific eigenvalues $\{\lambda_i\}_{i=1}^n$ on the function variable $\lambda$ resulting in specific frequency responses $\{f(\lambda_i)\}_{i=1}^n$, while the filter parameters $\{\alpha_k\}_{k=0}^K$ determine the function shape. However, AirGFs account for channel impairments during message exchanges, such that signals are shifted over a sequence of graphs with random fading $\{\bbS_{\rm air}^{(k)}\}_{k=1}^K$ instead of a deterministic one $\bbS$. The latter requires extending the concept of filter frequency response. 

\smallskip
\noindent \textbf{Filter frequency response over the air.} Denote by $\bbS_{\rm air}^{(k)} = \bbV^{(k)} \bbLambda^{(k)} {\bbV^{(k)}}^\top$ the eigendecomposition of the $k$th AirGSO with eigenvectors $\bbV^{(k)} = [\bbv^{(k)}_1,...,\bbv^{(k)}_n]$ and eigenvalues $\bbLambda^{(k)} = {\rm diag}(\lambda_1^{(k)},...,\lambda_n^{(k)})$ for $k=1,...,K$. By applying the GFT on $\bbx$ over the $1$st AirGSO as $\bbx = \sum_{i_1=1}^n \hat{x}_{i_1}^{(1)} \bbv_{i_1}^{(1)}$, we can represent the signal component $\bbP^{(1)}$ of the one-shifted signal $\bbx^{(1)}$ [cf. \eqref{eq:kShiftedSignal}] as
\begin{equation}\label{eq:AirFrequencyResponse1}
	\bbP^{(1)} = \bbS^{(1)}_{\rm air}\bbx = \bbS^{(1)}_{\rm air} \sum_{i_1=1}^n \hat{x}_{i_1}^{(1)} \bbv_{i_1}^{(1)} = \sum_{i_1=1}^n \hat{x}_{i_1}^{(1)} \lambda_{i_1}^{(1)} \bbv_{i_1}^{(1)}.
\end{equation}
By considering each eigenvector $\bbv_{i_1}^{(1)}$ as an intermediate graph signal and applying the GFT on $\bbv_{i_1}^{(1)}$ over the $2$nd AirGSO as $\bbv_{i_1}^{(1)} = \sum_{i_2=1}^n \hat{x}^{(2)}_{i_1 i_2} \bbv_{i_2}^{(2)}$, the signal component $\bbP^{(2)}$ of the two-shifted signal $\bbx^{(2)}$ can be represented as
\begin{align}\label{eq:AirFrequencyResponse2}
	\bbP^{(2)} &= \bbS^{(2)}_{\rm air}\bbP^{(1)} = \bbS^{(2)}_{\rm air} \sum_{i_1=1}^n \hat{x}_{i_1}^{(1)} \lambda_{i_1}^{(1)} \bbv_{i_1}^{(1)} \\
	&= \sum_{i_1=1}^n \sum_{i_2=1}^n \hat{x}_{i_1}^{(1)}\hat{x}^{(2)}_{i_1 i_2} \lambda_{i_1}^{(1)} \lambda_{i_2}^{(2)} \bbv_{i_2}^{(2)}. \nonumber
\end{align}
Proceeding in a recursive manner, the signal component $\bbP^{(k)}$ of the $k$-shifted signal $\bbx^{(k)}$ can be represented as
\begin{align}\label{eq:AirFrequencyResponse3}
	\bbP^{(k)} = \sum_{i_1=1}^n \cdots \sum_{i_k=1}^n \hat{x}_{i_1}^{(1)} \hat{x}_{i_1i_2}^{(2)} \cdots \hat{x}_{i_{k-1}i_k}^{(k)} \prod_{j=1}^k \lambda_{i_j}^{(j)} \bbv_{i_k}^{(k)}.
\end{align}
We refer to \eqref{eq:AirFrequencyResponse3} as the GFT on $\bbx$ over a sequence of $k$ random AirGSOs $\{\bbS_{\rm air}^{(1)}, \ldots, \bbS_{\rm air}^{(k)}\}$, which depends on all eigenvalues $\bbLambda^{(1)},...,\bbLambda^{(K)}$ and eigenvectors $\bbV^{(1)},...,\bbV^{(K)}$. By aggregating the signal components of $K$ shifted signals $\{\bbP^{(k)}\}_{k=1}^K$ and the graph signal $\bbx$, we can represent the signal component of the AirGF output $\bbP_{\rm air}(\bbS, \bbx)$ [cf. \eqref{eq:AirGF}] as
\begin{align}\label{eq:AirFrequencyResponseFilter}
	\bbP_{\rm air}(\bbS,\! \bbx) \!\!=\!\! \sum_{i_1\!=\!1}^n \!\!\cdots\!\! \sum_{i_K\!=\!1}^n\! \hat{x}_{i_1}^{(1)}\!\hat{x}_{i_1i_2}^{(2)}\!\!\cdots \!\hat{x}_{i_{K\!-\!1}i_K}^{(K)}\!\! \sum_{k\!=\!0}^K\! \alpha_k\! \prod_{j\!=\!1}^k\!\lambda_{i_j}^{(j)} \bbv_{i_K}^{(K)}\!,
\end{align}
where $\{ \hat{x}_{i_1}^{(1)} \}_{i_1=1}^n$ and $\{ \hat{x}_{i_ji_{j+1}}^{(j)} \}_{j=1}^{K-1}$ are the Fourier coefficients of $\bbx$ over a sequence of $K$ random AirGSOs $\{\bbS_{\rm air}^{(k)}\}_{k=1}^K$. The expression in \eqref{eq:AirFrequencyResponseFilter} resembles that in \eqref{eq:frequencyResponse} while taking channel fading effects into account, which motivates to define the filter frequency response over the air of an AirGF as follows.
\begin{definition}[Filter frequency response over the air]\label{def:AirFrequencyResponseFilter1}
	Consider an AirGF defined by filter parameters $\{ \alpha_k \}_{k=0}^K$ over a sequence of $K$ AirGSOs [cf. \eqref{eq:AirGF}]. The filter frequency response over the air is a $K$-dimensional analytic function
	\begin{equation}\label{eq:AirFrequencyResponseFilter1}
		f(\bblambda) = \sum_{k=0}^K \alpha_k \prod_{\kappa=1}^k \lambda^{(\kappa)}
	\end{equation}
	of a generic vector variable $\bblambda = [\lambda^{(1)},\ldots,\lambda^{(K)}]^\top \in \mathbb{R}^K$.
\end{definition}
The filter frequency response over the air $f(\bblambda)$ is a multivariate function, where $\lambda^{(k)}$ is the frequency variable associated to the $k$th AirGSO $\bbS^{(k)}_{\rm air}$. The filter parameters $\{\alpha_k\}_{k=0}^K$ determine the function shape of $f(\bblambda)$, while a specific sequence of AirGSOs $\{\bbS_{\rm air}^{(k)}\}_{k=1}^K$ only instantiates the function variables $\{\lambda^{(k)}\}_{k=1}^K$ on specific eigenvalues. In this context, we can analyze filter behaviors that hold uniformly for any graph by focusing directly on properties of $f(\bblambda)$ -- see Fig. \ref{fig:FrequencyResponseAir} for an example of $f(\bblambda)$ with $K=2$. In ideal scenarios with perfect communication, i.e., $h^{(1)}_{ij} = 1$ and $n^{(1)}_i = 0$ in \eqref{eq:communicationChannel1}, AirGSOs $\{\bbS_{\rm air}^{(k)}\}_{k=1}^K$ boil down to a single GSO $\bbS$, the input-output relation \eqref{eq:AirFrequencyResponseFilter} reduces to \eqref{eq:frequencyResponse}, and the filter frequency response over the air \eqref{eq:AirFrequencyResponseFilter1} recovers the nominal filter frequency response \eqref {eq:filterResponse1}.

\begin{figure}[t]
	\centering
	\includegraphics[width=0.85\linewidth , height=0.575\linewidth, trim=10 10 10 10]{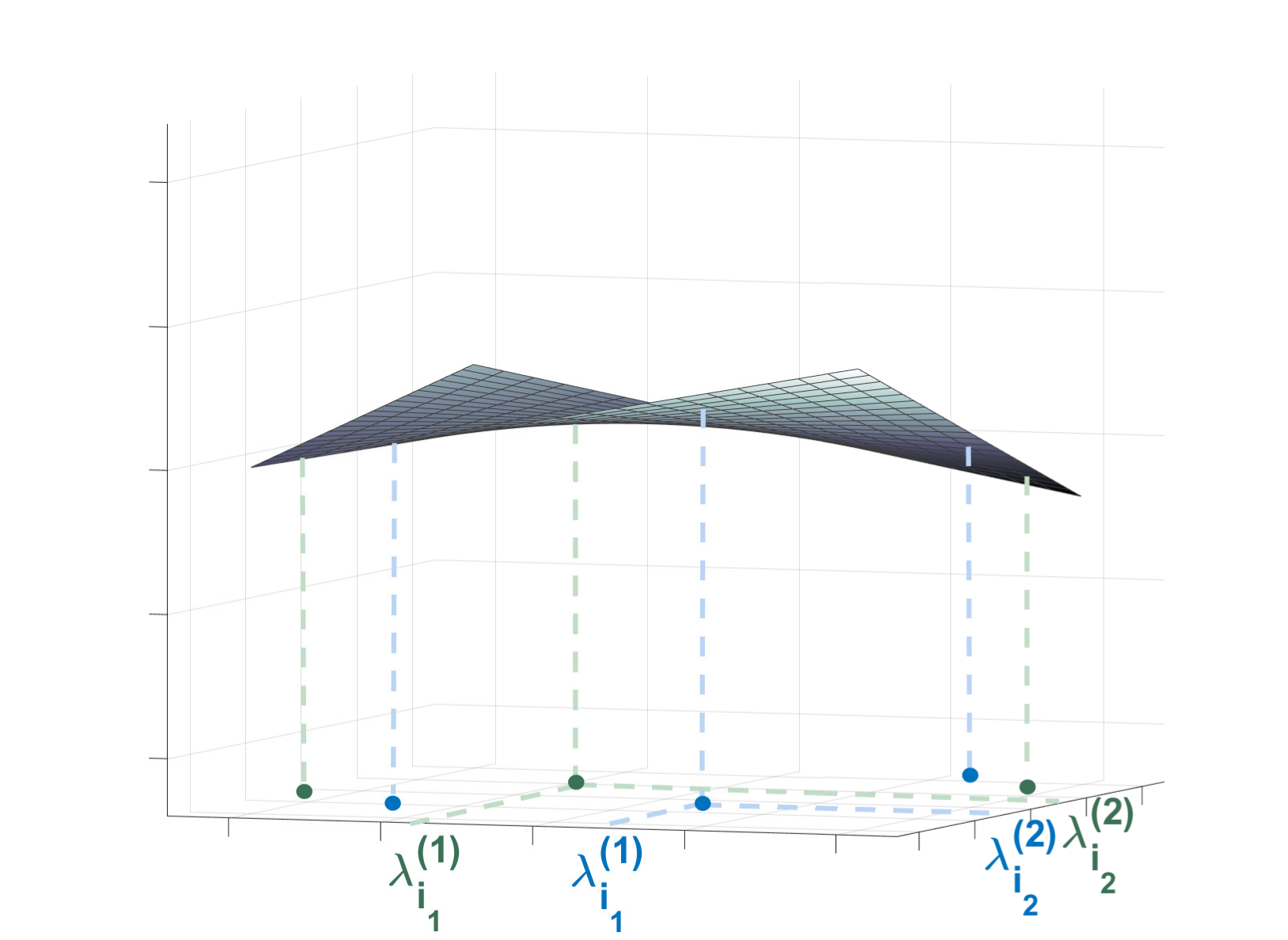}
	\caption{The filter frequency response over the air with $K=2$. The function $f(\bblambda)$ is determined by the filter parameters $\{ \alpha_k \}_{k=0}^K$ and independent of the underlying graph. For two specific sequences of AirGSOs, $h(\bblambda)$ is instantiated on two specific sets of eigenvalues in blue and green.}
	\label{fig:FrequencyResponseAir}\vspace{-4mm}
\end{figure}

\subsection{Stability of AirGFs}

We start by analyzing the stability of AirGFs to channel impairments. Specifically, channel fading effects perturb the underlying GSO $\bbS$ to a sequence of AirGSOs $\{\bbS_{\rm air}^{(k)}\}_{k=1}^K$, generate perturbed frequency variables $\{\lambda^{(k)}\}_{k=1}^K$, and result in random filter frequency responses $h(\bblambda)$. This indicates that the stability depends on the variability of $h(\bblambda)$ and we define the Lipschitz gradient to characterize this spectral variability. 
\begin{definition}[Lipschitz gradient]\label{def:LipschitzGradient}
	Consider the filter frequency response over the air $f(\bblambda)$ of the function variable $\bblambda = [\lambda^{(1)},\ldots,\lambda^{(K)}]^\top$ [cf. \eqref{eq:AirFrequencyResponseFilter1}]. For two specific instantiations $\bblambda_1 = [\lambda^{(1)}_1,\ldots,\lambda_1^{(K)}]^\top$ and $\bblambda_2 = [\lambda^{(1)}_2,\ldots,\lambda_2^{(K)}]^\top$ of the multivariate variable $\bblambda$, denote by $\bblambda_{1:2,k} = [\lambda^{(1)}_1,\ldots,\lambda_1^{(k)}, \lambda_2^{(k+1)}, \ldots, \lambda^{(K)}_2]^\top$ the vector formed by concatenating the first $k$ entries of $\bblambda_1$ and the last $K-k$ entries of $\bblambda_2$. The Lipschitz gradient of $f(\bblambda)$ over two instantiations $\bblambda_1$ and $\bblambda_2$ is defined as
	\begin{align}
		\nabla_L f(\bblambda_1, \bblambda_2) = \Big[ \frac{\partial f(\bblambda_{1:2, 1})}{\partial \lambda^{(1)}}, \ldots, \frac{\partial f(\bblambda_{1:2, K})}{\partial \lambda^{(K)}} \Big],
	\end{align}
	where $\partial f(\bblambda_{1:2, k}) / \partial \lambda^{(k)}$ is the partial derivative of $f(\bblambda)$ w.r.t. the $k$-th frequency variable $\lambda^{(k)}$ at the instantiation $\bblambda_{1:2, k}$.
\end{definition} 
The Lipschitz gradient $\nabla_L f(\bblambda_1, \bblambda_2)$ characterizes the variability of $f(\bblambda)$ between two multivariate frequencies $\bblambda_1$ and $\bblambda_2$, i.e., $f(\bblambda_1) - f(\bblambda_2) = \nabla_L f(\bblambda_1, \bblambda_2) \cdot (\bblambda_1 - \bblambda_2)$ with $\cdot$ the vector product. This is akin to the derivative of $f(\lambda)$ [cf. \eqref{eq:filterResponse1}] and allows to define integral Lipschitz AirGFs following \cite{gama2020stability}.
\begin{definition}[Integral Lipschitz AirGF]\label{def:LipschitzAirGF}
	Consider an AirGF with the filter frequency response over the air [cf. \eqref{eq:AirFrequencyResponseFilter1}] satisfying $|f(\bblambda)|\le 1$. The AirGF is integral Lipschitz if there exists a constant $C_L > 0$ such that for any two finite multivariate frequencies $\bblambda_1$ and $\bblambda_2$, it holds that 
	\begin{align}\label{eq:LipschitzGeneralizedResponse}
		\|\nabla_L f(\bblambda_1, \bblambda_2)\|_2 \le C_L~\text{and}~\|\nabla_L f(\bblambda_1, \bblambda_2) \odot \bblambda_2 \|_2 \le C_L,
	\end{align}
	where $\nabla_L f(\bblambda_1, \bblambda_2)$ is the Lipschitz gradient over $\bblambda_1$ and $\bblambda_2$ [Def \ref{def:LipschitzGradient}], and $\odot$ is the pointwise product of vectors.
\end{definition} 

\noindent An integral Lipschitz AirGF restricts the variability of its frequency response, i.e., $f(\bblambda)$ does not change faster than linear in the $K$-dimensional space and tends to change more slowly at larger frequency values. This allows to control spectral deviations caused by channel impairments in the graph spectrum. Definition \ref{def:LipschitzAirGF} generalizes the integral Lipschitz graph filter in \cite{gama2020stability} to the scenario over a sequence of random graphs $\{\bbS_{\rm air}^{(k)}\}_{k=1}^K$ with channel fading effects. 

With these preliminaries in place, we formally characterize the stability of AirGFs to channel impairments.
\begin{theorem}\label{thm:filterStability}
	Consider the AirGF $\bbH_{\rm air}(\bbS)$ [cf. \eqref{eq:AirGF}] and the graph filter $\bbH(\bbS)$ [cf. \eqref{eq:frequencyResponse}] with the same filter parameters $\{\alpha_k\}_{k=0}^K$ and the underlying graph $\bbS$. Let the channel coefficient $h$ follow a random distribution with expectation $\mu$ and standard deviation $\delta$, the noise $n$ follow a zero-mean Gaussian distribution with standard deviation $\eps$, and the filter frequency response over the air $f(\bblambda)$ be integral Lipschitz w.r.t. $C_L$ [cf. \eqref{eq:LipschitzGeneralizedResponse}]. Then, for any graph signal $\bbx$, it holds that 
	\begin{align}\label{eq:filterStability}
		&\mathbb{E}\!\big[\!\| \bbH_{\rm air}(\bbS)\bbx \!-\! \bbH(\bbS)\bbx \|^2_2\big]\!\le\! \big(C_1 (1\!-\!\mu)^2 \!+\! C_2 \delta^2\big) \|\bbx\|_2^2 \!\!+\! C_3 \eps^2 \nonumber \\
		&+\!\! \Big(\!\ccalO((1\!-\!\mu)^3) \!\!+\!\! \ccalO(\delta^2 (1\!-\!\mu)) \!\!+\! \!\ccalO\big((1\!-\!\mu) \eps^2\big)\!\!+\!\!\ccalO\big(\delta^2 \eps^2\big)\!\!\Big)\!, 
	\end{align}
	where $C_1 = (n+K-1) C_L^2$, $C_2 = ndC_L^2$ and $C_3 = n C_L^2$ are stability constants and $d$ is the maximal degree of the graph.
\end{theorem}
\begin{proof}
	See Appendix \ref{Proof:Theorem1}.
\end{proof}
Theorem \ref{thm:filterStability} states that the expected difference of the filter output caused by channel impairments is bounded proportionally by the deviation of the expectation $(1-\mu)^2$, the variance of the channel coefficient $\delta^2$, and the variance of the noise $\eps^2$. The stability bound decreases when (i) the expectation $\mu$ is close to one, i.e., the channel has less fading in expectation and does not affect the signal value significantly, 
and (ii) the variance $\delta^2$ and $\eps^2$ approach zero, i.e., the channel coefficients and the additive noise vary within a small range around their expectations. 

The stability constants $C_1$, $C_2$ and $C_3$ embed the role of the filter and the graph, i.e., the filter order $K$, the Lipschtiz constant $C_L$, the graph degree $d$, and the graph size $n$. First, all constants increase with the graph size $n$ and $C_2$ increases with the graph degree $d$. This indicates that a larger or denser graph introduces more channel impairments, results in an increased effect, and degrades the stability. Second, all constants increase with the Lipschitz constant $C_L$ and $C_1$ increases with the filter order $K$. This implies that an AirGF with a lower Lipschitz constant $C_L$ allows less variability of the filter frequency response $f(\bblambda)$ and is more stable to frequency deviations caused by channel impairments. However, this may decrease the discriminatory power of the AirGF because nearby frequencies $\bblambda_1$ and $\bblambda_2$ may yield similar responses $f(\bblambda_1)$ and $f(\bblambda_2)$, and the resulting AirGF may not identify the difference between $\bblambda_1$ and $\bblambda_2$, which degrades the performance.

\subsection{Stability of AirGNNs}

The stability of AirGNNs inherits from that of AirGFs by considering the additional effects of the nonlinearity and the multi-layer architecture. Before proceeding, we need a standard assumption for the nonlinearity.
\begin{assumption}\label{ass:LipschitzNonlinearity}
	The nonlinearity $\sigma(\cdot)$ satisfying $\sigma(0)=0$ is Lipschitz, i.e., there exists a constant $C_\sigma$ such that 
	\begin{align}\label{eq:LipschitzNonlinearity}
		|\sigma(a) - \sigma(b)|\le C_\sigma |a-b|~\forall~a, b\in \mathbb{R}.
	\end{align}
\end{assumption}
The Lipschitz nonlinearity is commonly used and examples include ReLU, absolute value and hyperbolic tangent. We then show the stability of AirGNNs in the following theorem.
\begin{theorem}\label{thm:AirGNNStability}
	Consider the AirGNN $\bbPhi_{\rm air}(\bbx, \bbS, \ccalA)$ of $L$ layers and $F$ features per layer [cf. \eqref{eq:AirGNN}]. For the same setting as in Theorem \ref{thm:filterStability}, let the nonlinearity satisfy Assumption \ref{ass:LipschitzNonlinearity} w.r.t. $C_\sigma$. Then, for any graph signal $\bbx$, it holds that 
	\begin{align}
		&\mathbb{E}\big[\| \bbPhi_{\rm air}(\bbx, \bbS, \ccalA) - \bbPhi(\bbx, \bbS, \ccalA) \|^2_2\big]\\
		&\le C \big(C_1 (1-\mu)^2 + C_2 \delta^2\big) \|\bbx\|_2^2 + C C_3 \eps^2 \nonumber \\
		&+\! \Big(\!\ccalO((1-\mu)^3) \!+\! \ccalO(\delta^2 (1\!-\!\mu)) \!+\! \ccalO\big((1-\mu) \eps^2\big)\!+\!\ccalO\big(\delta^2 \eps^2\big)\!\Big), \nonumber 
	\end{align}
	where $\bbPhi(\bbx, \bbS, \ccalA)$ is the nominal GNN output without channel impairments, $C = C_{\sigma}^{2L}L^2F^{2L-2}$ is the stability constant resulting from the nonlinearity and the multi-layer architecture, and $C_1$, $C_2$, $C_3$ are stability constants inheriting from the filter [cf. \eqref{eq:filterStability}].
\end{theorem} 
\begin{proof}
	See Appendix \ref{Proof:Theorem2}.
\end{proof}
Theorem \ref{thm:AirGNNStability} states that the AirGNN is stable to channel impairments as it follows from Theorem \ref{thm:filterStability}. The stability bound shares a similar form as that of the AirGF [cf. \eqref{eq:filterStability}] and thus, the conclusions of Theorem \ref{thm:filterStability} apply as well. Moreover, there is an additional stability constant $C$ that is composed of two terms representing the effects of the nonlinearity and the multi-layered architecture, respectively. The first term $C_{\sigma}^{2L}$ captures the amplification of the nonlinearity on the impact of channel impairments, while $C_\sigma$ is typically one in common nonlinearities, e.g., ReLU and absolute value. The second term $L^2F^{2L-2}$ is the consequence of the impaired signal propagating through multiple filters $F$ and layers $L$, which indicates that a wider and deeper AirGNN results in a looser stability bound. This is because AirGNNs with more filters and layers incorporate more channel impairments into the architecture and the latter increases the output deviation caused by these impairments.

\section{AirGNNs with CSI}\label{sec:airGNNsCSI}

From stability analysis, AirGNNs are able to maintain performance when channel impairments are mild, i.e., $\mu \to 1$ and $\eps, \sigma \to 0$. However, the performance gets degraded inevitably under substantial fading and noise. In this section, we propose a signal transmission strategy to alleviate channel effects on the AirGNN performance, in communication scenarios where CSI $\{h_{ij}\}_{ij}$ is available at transmitting nodes. 

\subsection{Truncated Channel-Inversion Power Control}\label{subsec:TCPC}

We propose to transmit signal values among neighboring nodes with \textit{truncated channel-inversion power control}. In particular, at each communication round, we adapt the transmission power of each node $p$ to the corresponding CSI $h$ subject to a long-term power constraint, i.e.,
\begin{align}\label{eq:powerConstraint}
	\mathbb{E}[|p(h)|^2] \le P,
\end{align}
where the expectation $\mathbb{E}[\cdot]$ is with respect to the distribution of random channel coefficients $h$. With CSI available at transmitting nodes, the goal is to invert channel coefficients through power control such that signal values are received without any channel fading effect. However, a brute-force approach that directly inverts channel coefficients is not possible under the power constraint \eqref{eq:powerConstraint}, because some channels may suffer from deep fading and require large powers for inversion. 

This necessitates to develop a practical channel-inversion strategy. Specifically, we allow each node to invert channel fading only if the corresponding channel coefficient exceeds a power-cutoff threshold $\gamma$, and allocate zero power otherwise, i.e.,
\begin{align}
	p (h) = 
	\begin{cases}
		\frac{\sqrt{P_0}}{h} & \text{if}~|h|^2 \ge \gamma, \\
		0 & \text{if}~|h|^2 < \gamma,
	\end{cases}
\end{align}
where $P_0$ is the scaling factor to satisfy the long-term power constraint \eqref{eq:powerConstraint}. The value of $P_0$ is determined by the constraint bound $P$ and the distribution of random channel coefficients $h$. We consider $h$ subject to a Rayleigh distribution with the scaling parameter $\beta$, i.e., the probability density function of $h$ is 
\begin{align}
	m(h) = \frac{h}{\beta^2} e^{-\frac{h^2}{2 \beta^2}},
\end{align}
and the channel gain $|h|^2$ is subject to an exponential distribution with the parameter $1/(2\beta^2)$, i.e., the probability density function of $z = |h|^2$ is
\begin{align}
	m(z) = \frac{1}{2 \beta^2} e^{-\frac{z}{2 \beta^2}}.
\end{align}
In this context, the power constraint \eqref{eq:powerConstraint} is equivalent to
\begin{align}
	P_0 \int_{\gamma}^\infty \frac{1}{2 \beta^2 z} e^{-\frac{z}{2 \beta^2}} dz = P.
\end{align} 
This allows to compute $P_0$ as
\begin{align}\label{eq:powerscalar}
	P_0 = \frac{P}{\int_{\gamma}^\infty \frac{1}{2 \beta^2 z} e^{-\frac{z}{2 \beta^2}} dz}.
\end{align}

{\linespread{1}
	\begin{algorithm}[t] \begin{algorithmic}[1]
			\STATE \textbf{Input:} Channel coefficient $h$, power-cutoff threshold $\gamma$ and scaling factor $P_0$
			\IF {$|h|^2 \ge \gamma$}
			\STATE Perform channel-inversion as $p(h) = \frac{\sqrt{P_0}}{h}$
			\ELSE 
			\STATE Assign zero power as $p(h) = 0$
			\ENDIF
		\end{algorithmic}
		\caption{Truncated Channel-Inversion Power Control}\label{alg:TCPC}
\end{algorithm}}

This strategy eliminates fading effects in admissible channels, i.e., $|h|^2 \ge \gamma$, but truncates signal transmissions in severe channels, i.e., $|h|^2 < \gamma$, referred to as truncated channel-inversion power control -- see Algorithm \ref{alg:TCPC}. While mitigating communication impact, it results in an inevitable information loss on received signals because of threshold truncation. The latter can be characterized by a non-truncation probability $\rho$ as
\begin{align}\label{eq:outageprobability}
	\rho = \text{Pr}(|h|^2 \ge \gamma) = \exp\big(-\frac{\gamma}{2 \beta^2}\big),
\end{align} 
which immediately results from the exponential distribution of the channel gain $|h|^2$. This indicates that the proposed truncated channel-inversion power control translates random channel fading effects to an unfaded binary communication link, where the signal value is transmitted without fading at a probability $\rho$ and not transmitted at a probability $1-\rho$.

By comparing \eqref{eq:powerscalar} and \eqref{eq:outageprobability}, we see that a higher threshold $\gamma$ yields a larger scaling factor $P_0$, which increases the transmission power and improves the signal-to-noise ratio (SNR). On the other hand, it leads to a lower non-truncation probability $\rho$ and results in an increased information loss. This identifies an explicit trade-off between the quality of transmitted signals and the outage probability of communication links, which is controlled by the selection of the power-cutoff threshold $\gamma$. The optimal value of $\gamma$ depends on specific problems of interest. For example, if the problem is more sensitive to the graph connectivity, a smaller $\gamma$ yields a lower outage probability and achieves a better performance; if the problem is more sensitive to the additive noise, a larger $\gamma$ leads to a higher SNR and allows for a better performance.

\subsection{Noisy Stochastic Graph Neural Networks}\label{subsec:noisySGNN}

The communication link in \eqref{eq:communicationChannel} with truncated channel-inversion power control can be re-written as
\begin{align}\label{eq:communicationChannelTruncated}
	[\bbx^{(1)}]_i \!=\! \sum_{j\in \ccalN_i}\! \sqrt{P_0} \delta_{ij} [\bbx]_j + n_{i}^{(1)},~\text{for}~i=1,...,n,
\end{align}
where $\delta_{ij}$ is a binary variable taking $1$ at a probability $\rho$ and $0$ otherwise. 
Multiplying the received signal with $1/\sqrt{P_0}$ yields 
\begin{align}\label{eq:communicationChannelTruncated1}
	[\bbx^{(1)}]_i \!=\! \sum_{j:(i,j)\in \ccalE}\! \delta_{ij} [\bbx]_j + \frac{n_{i}^{(1)}}{\sqrt{P_0}},~\text{for}~i=1,...,n.
\end{align}
This motivates to model the communication graph with the following random edge sampling (RES) model. 
\begin{definition}[Random edge sampling model]\label{def:RESModel}
	For a given graph $\ccalG = (\ccalV, \ccalE)$ and a non-truncation probability $\rho$, we define RES($\ccalG$, $\rho$) as a random graph with realizations $\ccalG^{(k)} = (\ccalV, \ccalE^{(k)})$ such that edge $(i, j) \in \ccalE$ is in $\ccalE^{(k)}$ at a probability $\rho$, i.e.,
	\begin{align}\label{eq:probability}
		\text{\rm Pr}\Big( (i,j) \in \ccalE^{(k)} \Big) = \rho,~\forall~(i,j)\in \ccalE.
	\end{align}
\end{definition}
\noindent Definition \ref{def:RESModel} allows to re-write the AirGSO in \eqref{eq:siganlShiftingChannel} as
\begin{align}\label{eq:siganlShiftingChannelTruncated}
	\bbx^{(1)} = \bbS^{(1)} \bbx + \bbn^{(1)},
\end{align}
where $\bbS^{(1)}$ is a random graph shift operator sampled from the RES($\ccalG$, $\rho$) model. The latter re-formulates the AirGF as a noisy stochastic graph filter
\begin{align}\label{eq:StochasticGF}
	&\bbH_{\rm sto} (\bbS)\bbx \!=\!\! \sum_{k=0}^K\! \alpha_k\! \prod_{\kappa=1}^k\! \bbS^{(\kappa)}\bbx \!+\!\! \sum_{k=1}^{K} \sum_{\kappa=k}^{K} \!\alpha_{\kappa}\!\! \prod_{\tau=k+1}^{\kappa}\! \bbS^{(\tau)} \bbn^{(k)}, 
\end{align}
where $\{\bbS^{(k)}\}_{k=1}^K$ are sampled from the RES($\ccalG$, $\rho$) model and $\{\bbn^{(k)}\}_{k=1}^K$ are zero-mean Gaussian noises. In this context, the AirGNN with truncated channel-inversion power control can be considered as a noisy stochastic graph neural network $\bbPhi_{\rm sto}(\bbx, \bbS, \ccalA)$, i.e., 
\begin{align}\label{eq:StoGNN}
	\bbx_{\ell}^f \!=\! \sigma\Big(\! \sum_{g=1}^{F} \bbH_{{\rm sto}, \ell}^{fg}(\bbS) &\bbx_{\ell-1}^{g} \!\Big)\!,\for~\!f\!=\!1,...,F,\ell\!=\!1,...,L,\\
	&\bbPhi_{\rm sto}(\bbx, \bbS, \ccalA) = \bbx_L^1,
\end{align}
where the threshold $\gamma$ determines the sampling probability $\rho$ of the RES model [cf. \eqref{eq:probability}] and the magnitude of the additive noise [cf. \eqref{eq:StochasticGF}] throughout the architecture.

With the aforementioned analysis, the truncated channel-inversion power control is equivalent to sampling the channel coefficient $h$ from a Bernoulli distribution with the expectation $\rho$ and variance $\rho (1-\rho)$. This allows to follow Theorem \ref{thm:AirGNNStability} to characterize the stability of AirGNNs with truncated channel-inversion power control under channel impairments. 
\begin{corollary}\label{coro:stabilityTruncated}
	Consider the AirGNN $\bbPhi_{\rm air}(\bbx, \bbS, \ccalA)$ with truncated channel-inversion power control in the same setting as Theorem \ref{thm:AirGNNStability}. Let $1-\rho$ be the link outage probability determined by the power-cutoff threshold $\gamma$ and the noise $n$ follow a zero-mean Gaussian distribution with standard deviation $\eps$ determined by the power scaling factor $P_0$ [cf. \eqref{eq:powerscalar} and \eqref{eq:communicationChannelTruncated1}]. Then, for any graph signal $\bbx$, it holds that
	\begin{align}\label{eq:stabilityCSI}
		&\mathbb{E}\big[\| \bbPhi_{\rm air}(\bbx, \bbS, \ccalA) - \bbPhi(\bbx, \bbS, \ccalA) \|^2_2\big]\\
		&\le\! \tilde{C}_1 \|\bbx\|_2^2 (1-\rho) \!+\! \tilde{C_2} \eps^2 \nonumber \!+\! \ccalO((1-\rho)^2) \!+\! \ccalO\big((1-\rho) \eps^2\big), \nonumber 
	\end{align}
	where $\tilde{C}_1 = C_{\sigma}^{2L}L^2F^{2L-2} n d C_L^2 \rho$ and $\tilde{C}_2 = C_{\sigma}^{2L}L^2F^{2L-2} n C_L^2$ are stability constants. 
\end{corollary}
Corollary \ref{coro:stabilityTruncated} states that AirGNNs with truncated channel-inversion power control are stable to channel impairments as well. The stability bound demonstrates an explicit trade-off between the link outage probability $\rho$ and the noise magnitude $\eps$ through the power-cutoff threshold $\gamma$. Specifically, a smaller $\gamma$ yields a lower outage probability $1-\rho$ and decreases the first term $\tilde{C}_1 \|\bbx\|_2^2 (1-\rho)$ in \eqref{eq:stabilityCSI}. However, it reduces the power scaling factor $P_0$ and results in a larger noise magnitude $\eps$, which increases the second term $\tilde{C_2} \eps^2$.

\subsection{Training Procedure}\label{subsec:trainingwCSI}

The truncated channel-inversion power control does not recover the ideal communication 
because it truncates the transmission when the channel coefficient $h$ is smaller than the power-cutoff threshold $\gamma$. The resulting AirGNN output is a random variable w.r.t. the non-truncation probability $\rho$ and Gaussian noise $\bbn$ [cf. \eqref{eq:StochasticGF}], which may still suffer from performance degradation. The latter makes it unclear how to train the AirGNN to further mitigate this performance degradation.

We propose to incorporate the randomness of the AirGNN output into the training procedure, and develop a stochastic gradient descent (SGD) based method to train the AirGNN. Specifically, given a task with the training set $\ccalR = \{ (\bbx_r, \bby_r) \}_{r=1}^R$ and loss $\ell(\cdot)$, we define the objective as the expected loss over the training set
\begin{equation} \label{eq:objectiveFunctionSto}
	\ccalL_{\rm sto}(\mathcal{R},\bbS,\ccalA) = \frac{1}{R} \sum_{r=1}^R \ell(\bby_r, \bbPhi_{\rm sto}(\bbx_r, \bbS, \ccalA)).
\end{equation}
Since $\bbPhi_{\rm sto}(\bbx_r, \bbS, \ccalA)$ is random [cf. \eqref{eq:StoGNN}], the objective function $\ccalL_{\rm sto}(\mathcal{R},\bbS,\ccalA)$ is random w.r.t. the non-truncation probability and Gaussian noise. The goal is to find the optimal architecture parameters $\ccalA^*$ that minimize $\ccalL_{\rm sto}(\mathcal{R},\bbS,\ccalA)$ in expectation. 

Define $\bbPhi_{\rm sto}(\bbx, \bbS, \ccalA | \{\bbS^{(k)}\}_k)$ as a deterministic realization of the AirGNN output, where $\{\bbS^{(k)}\}_k$ is a sequence of graph shift operators sampled from the RES model with probability $\rho$ [Def. \ref{def:RESModel}] for all AirGFs throughout the architecture. The training procedure contains successive iterations, where each iteration $t$ consists of a forward and a backward phase. In the forward phase, it samples graph shift operators $\{\bbS^{(k)}_t\}_k$ and generates a deterministic AirGNN architecture $\bbPhi_{\rm sto}(\cdot, \bbS, \ccalA_t | \{\bbS^{(k)}_t\}_k)$ with parameters $\ccalA_t$. The latter processes a random set of the training data $\ccalR_t \subseteq \ccalR$ to approximate the objective function as
\begin{equation}\label{eq:stochasticObjectivewithCSI}
	\ccalL\big(\mathcal{R}_t,\!\bbS,\!\ccalA_t | \{\bbS^{(k)}_t\}_k\!\big) \!=\! \frac{1}{|\ccalR_t|}\! \sum_{r=1}^{|\ccalR_t|} \! \ell(\bby_r,\! \bbPhi_{\rm sto}(\!\bbx_r,\! \bbS,\! \ccalA_t | \{\bbS^{(k)}_t\!\}_k\!)\!\big)\!,
\end{equation}
where $(\bbx_r,\bby_r) \in \ccalR_t$ and $|\ccalR_t|$ is the number of data samples in $\ccalR_t$. In the backward phase, the parameters $\ccalA_t$ are updated with SGD 
\begin{align}\label{eq:gradientUpdatewithCSI}
	\ccalA_{t+1} = \ccalA_t - \eta_t \nabla_\ccalA \ccalL\big(\mathcal{R}_t,\!\bbS,\!\ccalA_t | \{\bbS^{(k)}_t\}_k\big),
\end{align}
where $\eta_t$ is the step-size. It accounts for the effect of the transmission truncation by updating the parameters $\ccalA_t$ with a stochastic gradient $\nabla_\ccalA \ccalL(\mathcal{R}_t,\!\bbS,\!\ccalA_t | \{\bbS^{(k)}_t\}_k)$ at each iteration $t$. The stochasticity results from the randomness of the RES model $\text{RES}(\ccalG, \rho)$, the Gaussian noise $\bbn_t$, as well as the randomly chosen data samples $\ccalR_t$. 

Algorithm \ref{alg:trainingProcedurewithCSI} summarizes the proposed training procedure. It incorporates the graph stochasticity (caused by transmission truncation) and the noise randomness during training, to match those encountered during testing. Each node will rely on the truncated noisy information from its neighbors with some uncertainty, and the trained parameters will be more robust to these random perturbations during inference, yielding an improved performance and a robust transference for decentralized tasks.

\section{AirGNNs without CSI} \label{sec:AirGNNNoCSI}

{\linespread{1}
	\begin{algorithm}[t] \begin{algorithmic}[1]
			\STATE \textbf{Input:} training set $\mathcal{R}$, loss function $\ell$, initial parameters $\ccalA_0$, power-cutoff threshold $\gamma$, and power scaling factor $P_0$
			\FOR {$t=1,\dots,T$}
			\STATE Compute the non-truncation probability $\rho$ [cf. \eqref{eq:outageprobability}]
			\STATE Sample graph shift operators $\{\bbS_t^{(k)}\}_k$ from the RES model with the probability $\rho$ [Def. \ref{def:RESModel}] 
			\STATE Determine the corresponding AirGNN architecture $\bbPhi_{\rm sto}(\cdot, \bbS, \ccalA_t | \{\bbS^{(k)}_t\}_k)$
			\STATE Sample data $\ccalR_t \subseteq \ccalR$ and compute AirGNN outputs $\{\bbPhi_{\rm sto}(\bbx_r, \bbS, \ccalA_t | \{\bbS^{(k)}_t\}_k)\}_{r=1}^{|\ccalR_t|}$ 
			\STATE Compute the objective $\ccalL(\mathcal{R}_t,\bbS,\ccalA | \{\bbS^{(k)}_t\}_k)$ as in \eqref{eq:stochasticObjectivewithCSI}
			\STATE Compute the stochastic gradient $\nabla_\ccalA \ccalL(\mathcal{R}_t,\bbS,\ccalA | \{\bbS^{(k)}_t\}_k)$ 
			\STATE Update parameters $\ccalA_t$ with step-size $\eta_t$ as in \eqref{eq:gradientUpdatewithCSI}
			\ENDFOR
		\end{algorithmic}
		\caption{Training Procedure of AirGNN with CSI}\label{alg:trainingProcedurewithCSI}
\end{algorithm}}

In this section, we consider communication scenarios where CSI $\{h_{ij}\}_{ij}$ is unknown at transmitting nodes, such that the signal transmission strategy developed in Section \ref{sec:airGNNsCSI} cannot be applied to handle channel impairments. To overcome this issue, we propose to consider the AirGNN output directly as a random variable and conduct training by accounting for the distribution over the entire CSI space. 

We start by defining a realization of the AirGNN output over the graph signal $\bbx$ as $\bbPhi_{\rm air}(\bbx, \bbS, \ccalA | \bbh, \bbn)$, where $\bbh$ and $\bbn$ are samples of the CSI and the Gaussian noise, respectively, throughout graph shift operations of all AirGFs in the architecture. We then define a realization of the objective function over the AirGNN output $\bbPhi_{\rm air}(\bbx, \bbS, \ccalA | \bbh, \bbn)$ as
\begin{align}\label{eq:AirObjective}
	&\ccalL\big(\tilde{\ccalR},\!\bbS,\!\ccalA_t | \bbh, \bbn \big) = \frac{1}{|\tilde{\ccalR}|} \sum_{r=1}^{|\tilde{\ccalR}|} \! \ell(\bby_r, \bbPhi_{\rm air}(\bbx_r,\! \bbS, \ccalA | \bbh, \bbn)\big),
\end{align}
where $\tilde{\ccalR}$ is a subset of training data sampled from $\ccalR$ and $|\tilde{\ccalR}|$ is the number of samples in $\tilde{\ccalR}$. With these preliminaries in place, we develop the training procedure for AirGNNs without CSI. 

We similarly perform successive iterations, where each iteration $t$ contains a forward and a backward phase. The forward phase samples the CSI $\bbh_t$ and the Gaussian noise $\bbn_t$ for a deterministic AirGNN realization $\bbPhi_{\rm air}(\cdot, \bbS, \ccalA_t | \bbh_t, \bbn_t)$ with parameters $\ccalA_t$, which processes a random set of the training data $\ccalR_t \subseteq \ccalR$ to approximate the objective function $\ccalL(\mathcal{R}_t,\!\bbS,\!\ccalA_t | \bbh_t,\! \bbn_t)$ as in \eqref{eq:AirObjective}. The backward phase updates the parameters $\ccalA_t$ with the gradient of the approximated objective function as 
\begin{align}\label{eq:AirgradientUpdate}
	\ccalA_{t+1} = \ccalA_t - \eta_t \nabla_\ccalA \ccalL(\mathcal{R}_t,\bbS,\ccalA_t | \bbh_t, \bbn_t),
\end{align}
where $\eta_t$ is the step-size. The update in \eqref{eq:AirgradientUpdate} takes the effects of channel impairments into account, i.e., it updates the parameters $\ccalA_t$ with a stochastic gradient $\nabla_\ccalA \ccalL(\mathcal{R}_t,\bbS,\ccalA_t | \bbh_t, \bbn_t)$ w.r.t. the randomness of the CSI $\bbh_t$, the Gaussian noise $\bbn_t$, as well as the data subset $\ccalR_t$. Algorithm \ref{alg:trainingProcedurewithoutCSI} summarizes the above training procedure. 

We see that \eqref{eq:AirObjective}-\eqref{eq:AirgradientUpdate} incorporate random channel impairments directly into the training procedure, where each node relies on faded and noisy information from its neighbors [cf. \eqref{eq:communicationChannel1}]. This matches the distributions of channel fading and Gaussian noise encountered in the decentralized implementation at the inference phase, and renders the trained parameters more robust to communication perturbations. The latter improves the performance of AirGNNs without requiring any channel state information for signal transmission. 

{\linespread{1}
	\begin{algorithm}[t] \begin{algorithmic}[1]
			\STATE \textbf{Input:} training set $\mathcal{R}$, loss function $\ell$, initial parameters $\ccalA_0$
			\FOR {$t=1,\dots,T$}
			\STATE Sample CSI $\bbh_t$ and Gaussian noise $\bbn_t$
			\STATE Determine the corresponding AirGNN architecture $\bbPhi_{\rm air}(\cdot, \bbS, \ccalA_t | \bbh_t, \bbn_t)$
			\STATE Sample data $\ccalR_t \subseteq \ccalR$ and compute AirGNN outputs $\{\bbPhi_{\rm air}(\bbx_r, \bbS, \ccalA_t | \bbh_t, \bbn_t)\}_{r=1}^{|\ccalR_t|}$ 
			\STATE Compute the objective $\ccalL(\mathcal{R}_t,\bbS,\ccalA_t | \bbh_t, \bbn_t)$ as in \eqref{eq:AirObjective}
			\STATE Compute the stochastic gradient $\nabla_\ccalA \ccalL(\mathcal{R}_t,\bbS,\ccalA_t | \bbh_t, \bbn_t)$
			\STATE Update parameters $\ccalA_t$ with step-size $\eta_t$ as in \eqref{eq:AirgradientUpdate}
			\ENDFOR
		\end{algorithmic}
		\caption{Training Procedure of AirGNN without CSI}\label{alg:trainingProcedurewithoutCSI}
\end{algorithm}}

\smallskip
\noindent \textbf{Discussion.} The training procedure without CSI [Algorithm \ref{alg:trainingProcedurewithoutCSI}] resembles that with CSI [Algorithm \ref{alg:trainingProcedurewithCSI}], while the difference lies in two aspects: (i) the implementation condition and (ii) the gradient stochasticity. First, the training procedure without CSI does not require specific channel coefficients $h_{ij}$ but only the general distribution of $h_{ij}$ for implementation, while the training procedure with CSI requires specific channel coefficients $h_{ij}$ for signal transmission. Second, the gradient stochasticity of the training procedure without CSI is w.r.t. the random CSI and Gaussian noise [cf. \eqref{eq:communicationChannel1}], where the magnitude of the stochasticity depends on the variance of the CSI and noise. On the other hand, the gradient stochasticity of the training procedure with CSI is w.r.t. the random communication truncation (link loss) and Gaussian noise [Def. \ref{def:RESModel}]. The magnitude of the stochasticity depends on the truncation probability (link outage probability) $1-\rho$ and the noise variance, which are, in turn, determined by the cutoff threshold $\gamma$ of the truncated channel-inversion power control [cf. \eqref{eq:powerscalar}-\eqref{eq:outageprobability}].

It is important noting that while the AirGNN with CSI requires more information, i.e., the specific CSI $h_{ij}$ at transmitting nodes, it does not necessarily outperform the AirGNN without CSI in all scenarios. In particular, the specific performance comparison depends on the distribution of the CSI, the magnitude of the noise, and the selection of the threshold $\gamma$. For example in the scenario where the performance is sensitive to the Gaussian noise $\bbn$ but not sensitive to the graph topology or channel fading $\bbh$, assume that the AirGNN with CSI selects a small threshold $\gamma$ for channel inversion. While capable of inverting almost all channel fading, it will result in a small transmission power $P_0$ [cf. \eqref{eq:powerscalar}] and a large noise (a low SNR), and may perform worse than the AirGNN without CSI -- see detailed comparisons in experiments of Section \ref{sec:experiments}.

\section{Convergence and Variance}\label{sec:convergence}

The training procedures developed in Section \ref{subsec:trainingwCSI} and Section \ref{sec:AirGNNNoCSI} incorporate the randomness of the AirGNN output, which makes it unclear if these training procedures converge, what solution they search for, and how the trained model behaves. In this section, we analyze the convergence of the training procedures and characterize the variance of the trained model, to provide interpretations for the obtained solution.

\subsection{Convergence Analysis}

We begin by considering the following stochastic optimization problems 
\begin{align}\label{eq:stochasticOptimization}
	\min_{\ccalA}\bar{\ccalL}(\ccalA) &= \min_{\ccalA} \mathbb{E}_{\rho, \bbn} [\ccalL(\mathcal{R},\bbS,\ccalA | \{\bbS^{(k)}\}_{k})],\\\label{eq:stochasticOptimization1}
	\min_{\ccalA}\bar{\ccalL}(\ccalA) &= \min_{\ccalA} \mathbb{E}_{\bbh, \bbn} [\ccalL(\mathcal{R},\bbS,\ccalA | \bbh, \bbn)],
\end{align}
where the expectation $\mathbb{E}_{\rho, \bbn}[\cdot]$ in \eqref{eq:stochasticOptimization} is w.r.t. the randomness of the transmission truncation probability $1-\rho$ and the noise $\bbn$ [cf. \eqref{eq:powerscalar}-\eqref{eq:outageprobability}], and $\mathbb{E}_{\bbh, \bbn}[\cdot]$ in \eqref{eq:stochasticOptimization1} is w.r.t. the randomness of the CSI $\bbh$ and the noise $\bbn$ [cf. \eqref{eq:communicationChannel1}]. Problems \eqref{eq:stochasticOptimization} and \eqref{eq:stochasticOptimization1} differ from traditional stochastic optimization problems in that they consider the communication randomness as the problem stochasticity not only the data distribution. The standard method to solve these problems is the SGD, which approximates the true gradient with the gradient of some random sample and leverages the latter to update model parameters -- see Algorithm \ref{alg:SGD}. The following theorem relates the proposed training procedures in Section \ref{subsec:trainingwCSI} and Section \ref{sec:AirGNNNoCSI} to the stochastic optimization problems \eqref{eq:stochasticOptimization} and \eqref{eq:stochasticOptimization1}.
\begin{theorem}\label{thm:equivalence}
	Performing the proposed training procedure on the AirGNN model [Algorithm \ref{alg:trainingProcedurewithCSI} or Algorithm \ref{alg:trainingProcedurewithoutCSI}] is equivalent to running the corresponding SGD on the stochastic optimization problem \eqref{eq:stochasticOptimization} or \eqref{eq:stochasticOptimization1} [Algorithm \ref{alg:SGD}]. 
\end{theorem}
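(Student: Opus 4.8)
The plan is to verify that a single iteration of Algorithm \ref{alg:trainingProcedure} coincides with a single iteration of SGD applied to $\bar{\ccalL}(\ccalA)$ in \eqref{eq:stochasticOptimization}, and then to conclude by matching initializations and step-sizes. Concretely, the update \eqref{eq:gradientUpdate} has the form $\ccalA_{t+1}=\ccalA_t-\gamma_t\bbg_t$ with $\bbg_t=\nabla_\ccalA\ccalL(\ccalR_t,\bbS,\ccalA_t\mid\bbh_t,\bbn_t)$, so it suffices to show that, conditioned on $\ccalA_t$, the random vector $\bbg_t$ is an \emph{unbiased} estimator of the true gradient $\nabla_\ccalA\bar{\ccalL}(\ccalA_t)$, the randomness being over the mutually independent draws of the mini-batch $\ccalR_t$, the CSI $\bbh_t$, and the Gaussian noise $\bbn_t$. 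Everything else is just the definition of SGD (Alg. \ref{alg:SGD}).

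First I would handle the mini-batch randomness. Because $\ccalR_t$ is drawn from $\ccalR$ with a sampling distribution that weights every datum equally, linearity of $\nabla_\ccalA$ and of expectation together with \eqref{eq:objectiveFunction}--\eqref{eq:stochasticObjective} give, for each fixed realization $(\bbh,\bbn)$,
\[
\mathbb{E}_{\ccalR_t}\!\left[\nabla_\ccalA\ccalL(\ccalR_t,\bbS,\ccalA\mid\bbh,\bbn)\right]=\nabla_\ccalA\ccalL(\ccalR,\bbS,\ccalA\mid\bbh,\bbn).
\]
This is the standard mini-batch unbiasedness and is purely definitional.

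Next I would take expectation over the channel realizations, which is the crux. I claim
\[
\mathbb{E}_{\bbh,\bbn}\!\left[\nabla_\ccalA\ccalL(\ccalR,\bbS,\ccalA\mid\bbh,\bbn)\right]=\nabla_\ccalA\,\mathbb{E}_{\bbh,\bbn}\!\left[\ccalL(\ccalR,\bbS,\ccalA\mid\bbh,\bbn)\right]=\nabla_\ccalA\bar{\ccalL}(\ccalA),
\]
i.e., that one may differentiate under the expectation. To justify this I would invoke the dominated-convergence form of the Leibniz rule: for fixed $(\bbh,\bbn)$ the map $\ccalA\mapsto\bbPhi_{\rm air}(\bbx_r,\bbS,\ccalA\mid\bbh,\bbn)$ is, by \eqref{eq:AirGF}--\eqref{eq:AirGNN}, a composition of the filter maps -- which are polynomials in the entries of $\ccalA$ whose coefficients are finite products of fading gains plus accumulated noise terms -- with the pointwise nonlinearity $\sigma$; hence it is differentiable in $\ccalA$ for a.e. $(\bbh,\bbn)$ (if $\sigma$ is only piecewise smooth, as for ReLU, the non-differentiability set is null and one works with the a.e. derivative). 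Using the assumed finite moments of the fading distribution and of $\bbn$, together with (local) Lipschitz continuity of $\ell$ and $\sigma$, the gradient $\nabla_\ccalA\ccalL(\ccalR,\bbS,\cdot\mid\bbh,\bbn)$ admits an integrable dominating envelope on a neighborhood of $\ccalA_t$, which licenses the interchange. Combining this with the mini-batch step via the tower property and the independence of $\ccalR_t,\bbh_t,\bbn_t$ yields $\mathbb{E}\!\left[\bbg_t\mid\ccalA_t\right]=\nabla_\ccalA\bar{\ccalL}(\ccalA_t)$, so \eqref{eq:gradientUpdate} is exactly an SGD step on \eqref{eq:stochasticOptimization}; identical initial parameters and step-size sequence then force the two algorithms to generate the same iterate sequence.

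The main obstacle is precisely the exchange of $\nabla_\ccalA$ and $\mathbb{E}_{\bbh,\bbn}$: every other step is bookkeeping. Making it rigorous requires fixing the regularity hypotheses on $\sigma$, $\ell$, and the fading/noise laws under which a dominating integrable function exists -- in particular handling the measure-zero set on which $\sigma$ fails to be differentiable and using the finite-moment assumption on the channel gains to control the polynomial growth of the filter maps in $\bbh$.
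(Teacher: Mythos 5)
Your proof is correct in substance but takes a genuinely different and more demanding route than the paper. The paper's own proof is a purely structural comparison of the two algorithm listings: it observes that sampling the random objective $\ccalL(\ccalR_t,\bbS,\ccalA_t\mid\bbh_t,\bbn_t)$ in Algorithm \ref{alg:SGD} is, by definition, the same act as sampling $\bbh_t$, $\bbn_t$, $\ccalR_t$ and instantiating the AirGNN in steps 3--6 of Algorithm \ref{alg:trainingProcedure}, and that every remaining step of the two procedures is literally identical; equivalence then follows as bookkeeping, with no analysis of the gradient at all. You instead prove the property that makes the phrase ``running SGD on \eqref{eq:stochasticOptimization}'' mathematically meaningful, namely that $\nabla_\ccalA\ccalL(\ccalR_t,\bbS,\ccalA_t\mid\bbh_t,\bbn_t)$ is an unbiased estimator of $\nabla_\ccalA\bar{\ccalL}(\ccalA_t)$, which forces you to justify the interchange of $\nabla_\ccalA$ and $\mathbb{E}_{\bbh,\bbn}$ via a dominating-envelope argument. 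The paper never proves this interchange: it silently invokes the identity $\mathbb{E}[\nabla_\ccalA\ccalL(\ccalR_t,\bbS,\ccalA_t\mid\bbh_t,\bbn_t)]=\mathbb{E}[\nabla_\ccalA\bar{\ccalL}(\ccalA_t)]$ later, inside the proof of Theorem \ref{thm:convergence}, without justification. So your approach buys rigor where the paper has a latent gap (and your attention to the a.e.~differentiability of ReLU and to moment conditions on the unbounded Rayleigh/Gaussian channel variables is exactly where the care is needed), at the cost of needing regularity hypotheses the theorem statement does not list; the paper's approach buys brevity by treating the theorem as an essentially definitional identification of two procedures and deferring the analytic content. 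Note also that Assumption \ref{As:2}, had it been invoked here, would hand you the integrable dominating function directly.
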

\begin{proof}
	See Appendix \ref{Proof:Theorem3}.
\end{proof}
Theorem \ref{thm:equivalence} provides interpretations for the proposed training procedure, i.e., its goal is to search the solution of the associated stochastic optimization problem \eqref{eq:stochasticOptimization} or \eqref{eq:stochasticOptimization1} depending on the incorporated randomness during training. Moreover, the result indicates that we can show the convergence of the proposed training procedures by proving the convergence of the SGD on problems \eqref{eq:stochasticOptimization} and \eqref{eq:stochasticOptimization1}. Since AirGNN is a nonlinear parameterization, these problems are typically non-convex. This motivates us to consider the convergence criterion as the gradient norm $\| \nabla_\ccalA \bar{\ccalL}(\ccalA) \|^2_2$, which is commonly used to quantify the first-order stationarity in the non-convex setting. To proceed, we need the following assumptions. 

\begin{assumption} \label{As:1}
	The gradient of the expected objective function $\bar{\ccalL}(\ccalA)$ in problem \eqref{eq:stochasticOptimization} (and \eqref{eq:stochasticOptimization1}) is Lipschitz continuous, i.e., there exists a constant $C_\ell$ such that
	\begin{equation} \label{eq:As1}
		\begin{split}
			\| \nabla_\ccalA \bar{\ccalL}(\ccalA_1) \!-\! \nabla_\ccalA \bar{\ccalL}(\ccalA_2) \|_2 \!\le\! C_\ell \| \ccalA_1\!-\!\ccalA_2 \|_2,~\forall~\ccalA_1, \ccalA_2.
		\end{split}
	\end{equation}
\end{assumption}

\begin{assumption}\label{As:2}
	The gradient of the objective function $\ccalL(\mathcal{R},\bbS,\ccalA | \{\bbS^{(k)}\}_{k})$ in problem \eqref{eq:stochasticOptimization} (and $\ccalL(\mathcal{R},\bbS,\ccalA | \bbh, \bbn)$ in problem \eqref{eq:stochasticOptimization1}) is bounded, i.e., there exists a constant $C_g$ such that
	\begin{align}
		&\| \nabla_\ccalA \ccalL(\mathcal{R},\bbS,\ccalA | \{\bbS^{(k)}\}_{k}) \|_2 \le C_g, \\
		&\| \nabla_\ccalA \ccalL(\mathcal{R},\bbS,\ccalA | \bbh, \bbn) \|_2 \le C_g.
	\end{align}
\end{assumption}

Assumptions \ref{As:1}-\ref{As:2} are standard in optimization theory \cite{pardalos1994nonconvex,jongen2007optimization,gao2022balancing}, which provide a handle to deal with the gradient stochasticity during training. Theorem \ref{thm:convergence} characterizes the convergence of the proposed training procedures.

{\linespread{1}
	\begin{algorithm}[t]  \begin{algorithmic}[1]
			\STATE \textbf{Input:} training set $\mathcal{R}$, loss function $\ell$, initial parameters $\ccalA_0$
			\STATE Set batch-size of training data $\ccalR$ as $|\ccalR_t|$ and batch-size of graph shift operators $\{\bbS^{(k)}\}_{k}$ for \eqref{eq:stochasticOptimization} or communication channels $(\bbh,\bbn)$ for \eqref{eq:stochasticOptimization1} as $1$
			\FOR {$t = 1,\dots,T$}
			\STATE Sample a random objective function $\ccalL(\mathcal{R}_t,\bbS,\ccalA_t | \{\bbS_t^{(k)}\}_{k})$ 
			for \eqref{eq:stochasticOptimization} or $\ccalL(\mathcal{R}_t,\bbS,\ccalA_t | \bbh_t, \bbn_t)$ 
			for \eqref{eq:stochasticOptimization1}
			\STATE Compute the stochastic gradient $\nabla_\ccalA \ccalL(\mathcal{R}_t,\bbS,\ccalA_t | \{\bbS_t^{(k)}\}_{k})$ for \eqref{eq:stochasticOptimization} or $\nabla_\ccalA \ccalL(\mathcal{R}_t,\bbS,\ccalA_t | \bbh_t, \bbn_t)$ for \eqref{eq:stochasticOptimization1}
			\STATE Update the model parameters with step-size $\eta_t$ as $\ccalA_{t+1} = \ccalA_t - \eta_t \nabla_\ccalA \ccalL(\mathcal{R}_t,\bbS,\ccalA_t | \{\bbS^{(k)}\}_{k})$ for \eqref{eq:stochasticOptimization} or 
			$\ccalA_{t+1} = \ccalA_t - \eta_t \nabla_\ccalA \ccalL(\mathcal{R}_t,\bbS,\ccalA_t | \bbh_t, \bbn_t)$ for \eqref{eq:stochasticOptimization1}
			\ENDFOR
		\end{algorithmic}
		\caption{SGD for Problems \eqref{eq:stochasticOptimization} and \eqref{eq:stochasticOptimization1}}\label{alg:SGD}
\end{algorithm}}

\begin{theorem} \label{thm:convergence}
	Consider the AirGNN in \eqref{eq:AirGNN} with the training procedure in Algorithm \ref{alg:trainingProcedurewithCSI} or Algorithm \ref{alg:trainingProcedurewithoutCSI} and the stochastic optimization problem \eqref{eq:stochasticOptimization} or \eqref{eq:stochasticOptimization1} with the objective function satisfying Assumptions \ref{As:1}-\ref{As:2} w.r.t. $C_\ell$ and $C_g$. Let $T$ be the total number of training iterations and $\ccalA^*$ be the global optimal solution of problem \eqref{eq:stochasticOptimization} or \eqref{eq:stochasticOptimization1}. Then, for any initial parameters $\ccalA_0$ and the step-size
	\begin{equation} \label{eq:step-size}
		\begin{split}
			\eta_t = \eta = \sqrt{\frac{2\left( \bar{\ccalL}(\ccalA_0) -\bar{\ccalL}(\ccalA^*) \right)}{T C_\ell C_g^2 }},
		\end{split}
	\end{equation}
	it holds that
	\begin{equation}
		\begin{split}
			\min_{0\le t \le T-1} \mathbb{E} \left[ \| \nabla_\ccalA \bar{\ccalL}(\ccalA_t) \|^2_2 \right] \le \frac{C}{\sqrt{T}},
		\end{split}
	\end{equation}
	where $C\!=\! \sqrt{2\left( \bar{\ccalL}(\ccalA_0) \!-\!\bar{\ccalL}(\ccalA^*) \right)C_\ell}C_g$ is a constant. 
\end{theorem}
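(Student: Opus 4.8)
The plan is to use Theorem~\ref{thm:equivalence} to replace the analysis of Algorithm~\ref{alg:trainingProcedure} by the analysis of plain SGD on the smooth non-convex problem \eqref{eq:stochasticOptimization}, and then run the classical descent-lemma argument. The first step I would carry out is to record that the stochastic gradient produced at iteration $t$ is, conditioned on $\ccalA_t$, an unbiased estimator of $\nabla_\ccalA \bar{\ccalL}(\ccalA_t)$: since $\ccalR_t$ is drawn uniformly from $\ccalR$ and $(\bbh_t,\bbn_t)$ is sampled from the channel/noise law independently of the past, averaging over these sources and exchanging the gradient with the expectation over $(\bbh,\bbn)$ gives $\mathbb{E}\big[\nabla_\ccalA \ccalL(\ccalR_t,\bbS,\ccalA_t\mid\bbh_t,\bbn_t)\mid \ccalA_t\big] = \nabla_\ccalA \bar{\ccalL}(\ccalA_t)$, while Assumption~\ref{As:2} bounds its second moment by $C_g^2$.

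Next, Assumption~\ref{As:1} yields the standard descent inequality
\[
\bar{\ccalL}(\ccalA_{t+1}) \le \bar{\ccalL}(\ccalA_t) + \langle \nabla_\ccalA \bar{\ccalL}(\ccalA_t),\, \ccalA_{t+1}-\ccalA_t\rangle + \tfrac{C_L}{2}\|\ccalA_{t+1}-\ccalA_t\|_2^2 .
\]
Substituting the update $\ccalA_{t+1}-\ccalA_t = -\gamma\,\nabla_\ccalA \ccalL(\ccalR_t,\bbS,\ccalA_t\mid\bbh_t,\bbn_t)$ and taking the conditional expectation given $\ccalA_t$, unbiasedness turns the cross term into $-\gamma\|\nabla_\ccalA \bar{\ccalL}(\ccalA_t)\|_2^2$ and the second-moment bound controls the quadratic term, giving
\[
\mathbb{E}[\bar{\ccalL}(\ccalA_{t+1})] \le \mathbb{E}[\bar{\ccalL}(\ccalA_t)] - \gamma\,\mathbb{E}\big[\|\nabla_\ccalA \bar{\ccalL}(\ccalA_t)\|_2^2\big] + \tfrac{C_L\gamma^2 C_g^2}{2}.
\]

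Then I would rearrange, sum over $t=0,\dots,T-1$, telescope the left-hand side to $\bar{\ccalL}(\ccalA_0)-\mathbb{E}[\bar{\ccalL}(\ccalA_T)] \le \bar{\ccalL}(\ccalA_0)-\bar{\ccalL}(\ccalA^*)$ using global optimality of $\ccalA^*$, divide by $\gamma T$, and bound the minimum by the average to obtain
\[
\min_{0\le t\le T-1}\mathbb{E}\big[\|\nabla_\ccalA \bar{\ccalL}(\ccalA_t)\|_2^2\big] \le \frac{\bar{\ccalL}(\ccalA_0)-\bar{\ccalL}(\ccalA^*)}{\gamma T} + \frac{C_L C_g^2\,\gamma}{2}.
\]
The right-hand side, as a function of $\gamma>0$, is minimized exactly at the step-size prescribed in \eqref{eq:step-size}; plugging this value in balances the two terms and yields the bound $C/\sqrt{T}$ with $C=\sqrt{2\big(\bar{\ccalL}(\ccalA_0)-\bar{\ccalL}(\ccalA^*)\big)C_L}\,C_g$.

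The main obstacle is not the telescoping arithmetic but the clean justification of the unbiasedness/measurability step: one must argue that differentiation under the expectation over $(\bbh,\bbn)$ is legitimate (which is where the integrability implicit in Assumptions~\ref{As:1}–\ref{As:2} enters) and that, at iteration $t$, the minibatch $\ccalR_t$ and the pair $(\bbh_t,\bbn_t)$ are independent of $\ccalA_t$ given the history, so the tower property applies; a secondary point is checking that $\bar{\ccalL}$ is bounded below, so that $\bar{\ccalL}(\ccalA^*)$, and hence $C$ and the step-size in \eqref{eq:step-size}, are well defined.
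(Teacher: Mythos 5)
Your proposal is correct and follows essentially the same route as the paper's Appendix B: the descent inequality from Assumption~\ref{As:1}, unbiasedness of the stochastic gradient, the second-moment bound from Assumption~\ref{As:2}, rearranging and telescoping, bounding the minimum by the average, and choosing the step-size that balances the two terms. The only cosmetic difference is that you invoke the descent lemma directly from Lipschitz continuity of the gradient, whereas the paper derives it via a second-order Taylor expansion with a Hessian remainder; your version is marginally cleaner (Assumption~\ref{As:1} does not guarantee twice-differentiability) and your explicit conditioning/tower-property remarks tighten a step the paper leaves implicit, but the argument is the same.
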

\begin{proof}
	See Appendix \ref{Proof:Theorem4}.
\end{proof}
Theorem \ref{thm:convergence} states that the proposed AirGNN training procedure minimizes the stochastic optimization problem \eqref{eq:stochasticOptimization} or \eqref{eq:stochasticOptimization1} and the architecture parameters converge to a stationary solution with a rate on the order of $\ccalO(1/\sqrt{T})$. The result characterizes the converging behavior and interprets the convergent solution, which validates the effectiveness of the proposed training procedure. The step-size $\eta_t$ in \eqref{eq:step-size} depends on the total number of iterations $T$, while it is commonly selected as $\eta_t \propto 1/t$ or $1/\sqrt{t}$ in practice. It is worth mentioning that Theorem \ref{thm:convergence} guarantees local convergence because of the problem non-convexity, i.e., the objective function converges to a local stationary minimum rather than a global one. The latter can be improved by training the AirGNN multiple times and selecting the best solution.

\subsection{Variance Analysis}

As shown in Theorem \ref{thm:convergence}, the training procedures optimize the performance of AirGNNs in expectation [cf. \eqref{eq:stochasticOptimization}-\eqref{eq:stochasticOptimization1}], while it says little about the random behavior of the trained model around the optimized mean. We characterize the latter by analyzing the variance of the AirGNN output. For our analysis, we consider the variance over all nodes
\begin{align}
	{\rm var}[\bbPhi_{\rm air}(\bbx, \bbS, \ccalA^*)] = \sum_{i=1}^n {\rm var}\big[[\bbPhi_{\rm air}(\bbx, \bbS, \ccalA^*)]_i\big],
\end{align}
where $[\bbPhi_{\rm air}(\bbx, \bbS, \ccalA^*)]_i$ is the $i$th entry of $\bbPhi_{\rm air}(\bbx, \bbS, \ccalA^*)$ and $\ccalA^*$ is the parameters trained by Algorithm \ref{alg:trainingProcedurewithCSI} or \ref{alg:trainingProcedurewithoutCSI}. This metric characterizes how individual entries $\{[\bbPhi_{\rm air}(\bbx, \bbS, \ccalA^*)]_i\}_{i=1}^n$ deviate from their expectations, which is typical to measure the variance in a multivariate stochastic system \cite{joshi2008sensor}. 

We follow Section \ref{sec:stabilityAnalysis} to conduct the variance analysis in the graph spectral domain for establishing a graph-universal result. Specifically, we consider the integral Lipschitz AirGF [Def. \ref{def:LipschitzAirGF}] and require an additional assumption for the nonlinearity. 

\begin{assumption} \label{assumptionNonlinearVariance}
	The nonlinearity $\sigma(\cdot)$ satisfying $\sigma(0)\!=\!0$ is variance non-increasing, i.e., for any $x \in \mathbb{R}$ it holds that 
	\begin{equation}\label{eq:assumptionNonlinearVariance}
		\begin{split}
			{\rm var}[\sigma(x)] \le {\rm var}[x].
		\end{split}
	\end{equation}
\end{assumption}

\noindent Assumption \ref{assumptionNonlinearVariance} is mild because the nonlinearity is typically non-expansive, and has been proved for ReLU and absolute value. We now show that the variance of the AirGNN output is upper bounded by factors that are proportional to the variance of the channel coefficient $h$ and the Gaussian noise $n$. 

\begin{figure*}%
	\centering
	\begin{subfigure}{0.6\columnwidth}
		\includegraphics[width=1.05\linewidth, height = 0.76\linewidth, trim = {0cm 0cm 0cm 1cm}, clip]{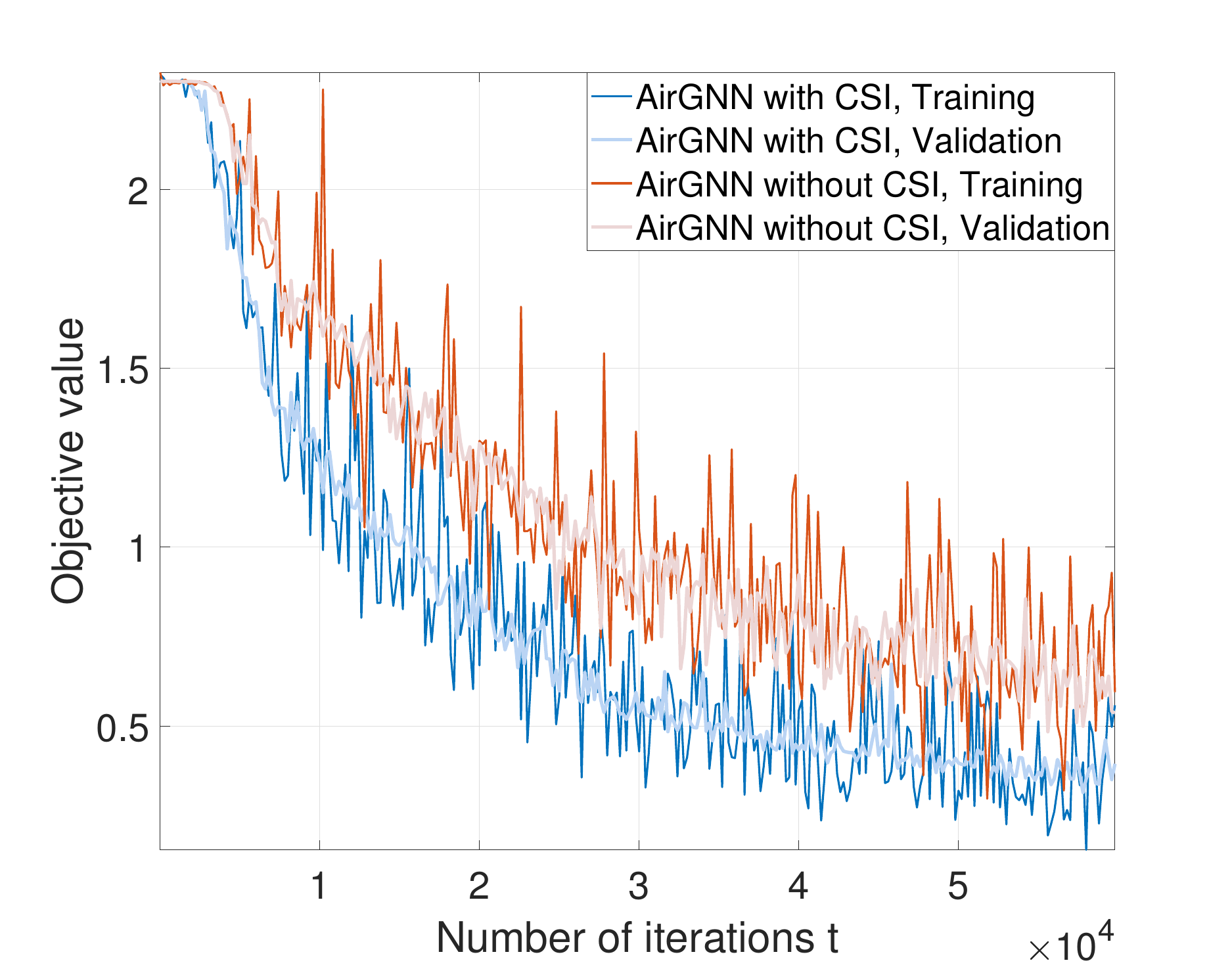}%
		\caption{}%
		\label{subfig:convergence}%
	\end{subfigure}\hfill\hfill%
	\begin{subfigure}{0.6\columnwidth}
		\includegraphics[width=1.05\linewidth,height = 0.76\linewidth, trim = {0cm 0cm 0cm 1cm}, clip]{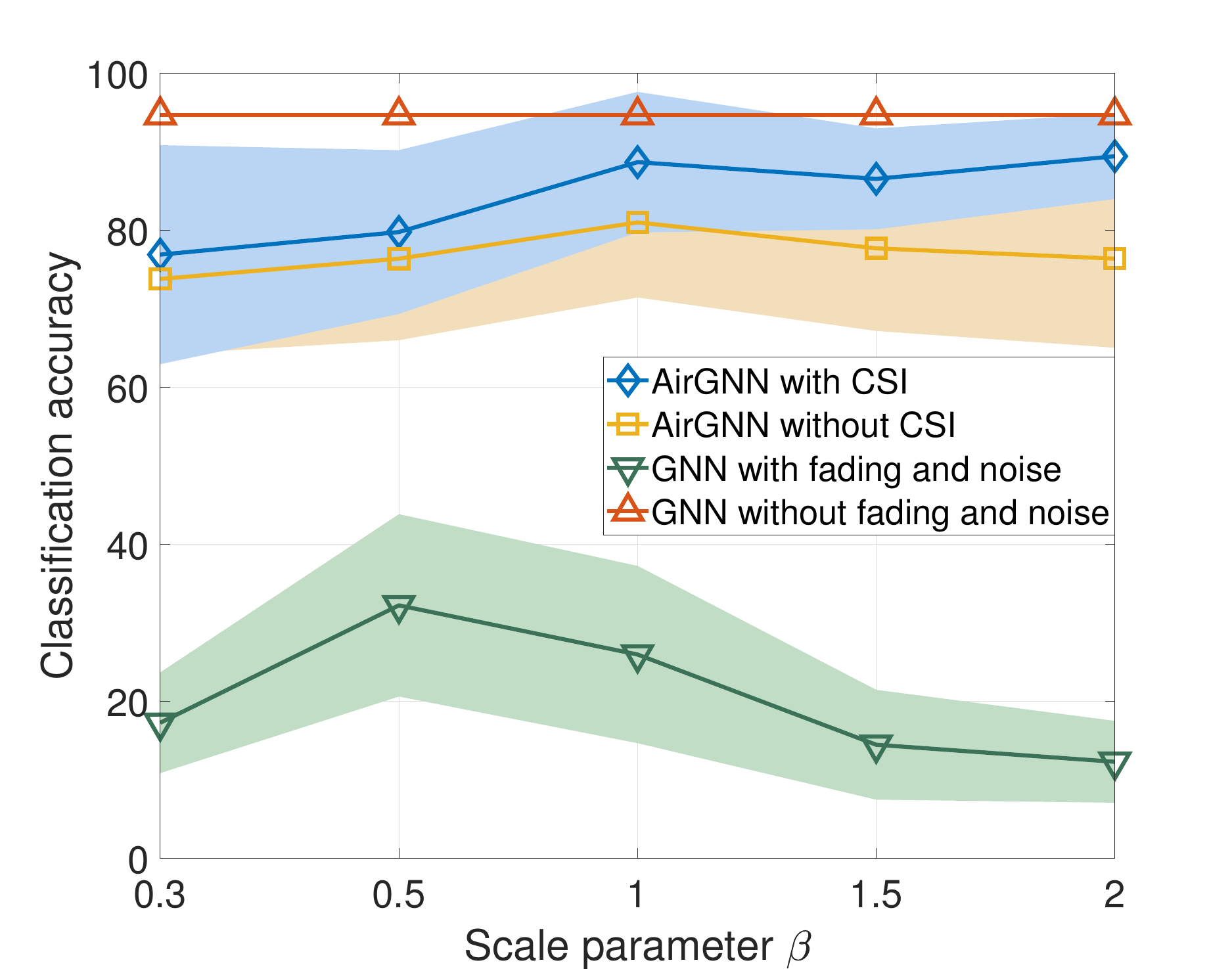}%
		\caption{}%
		\label{subfig:sourceSigma}%
	\end{subfigure}\hfill\hfill%
	\begin{subfigure}{0.6\columnwidth}
		\includegraphics[width=1.05\linewidth,height = 0.76\linewidth, trim = {0cm 0cm 0cm 1cm}, clip]{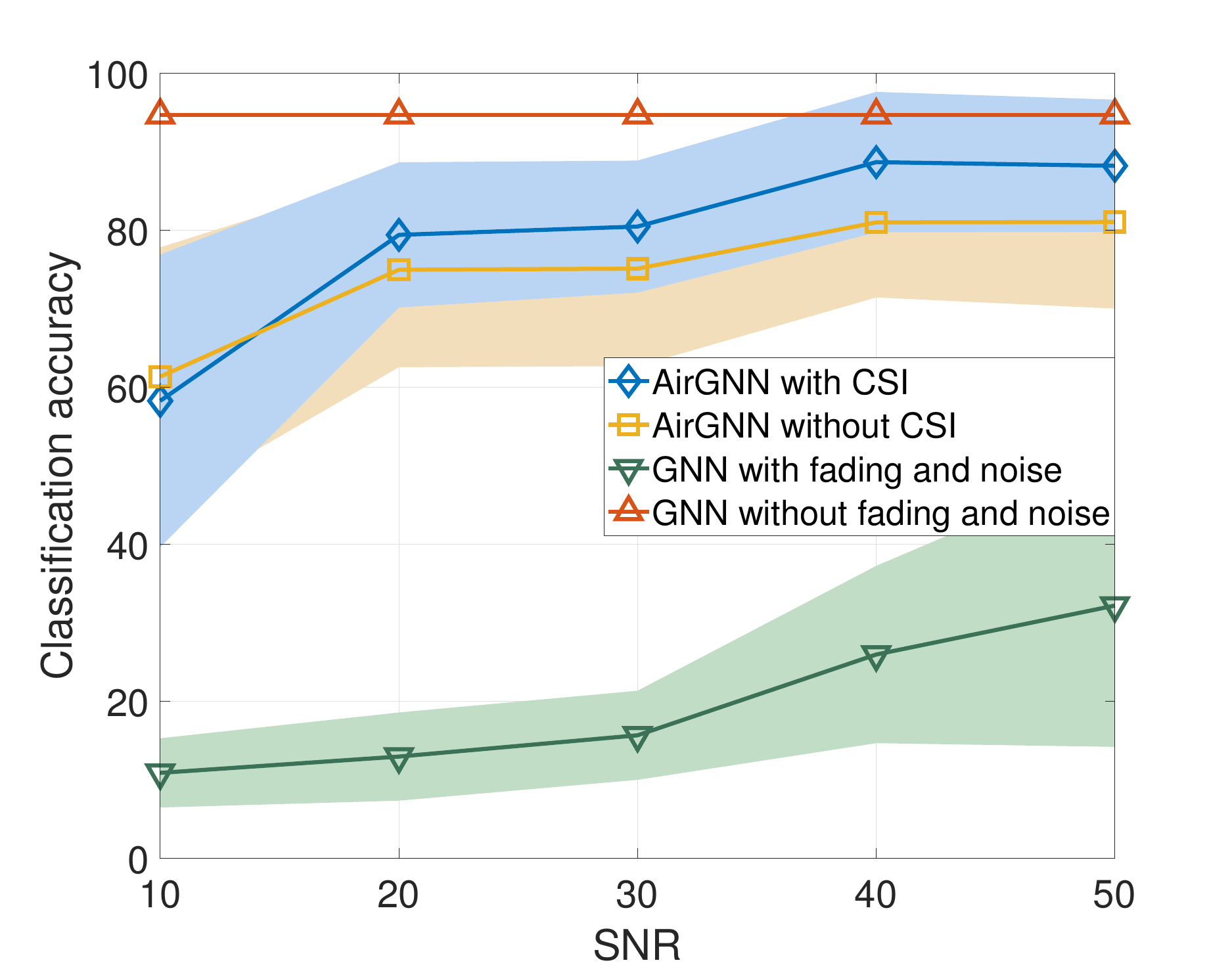}%
		\caption{}%
		\label{subfig:sourceSNR}%
	\end{subfigure}%
	\caption{(a) Training procedures of the AirGNNs with and without CSI for $\beta=1$. (b) Performance of the AirGNN and baselines in communication scenarios with different scale parameters $\beta$ for source localization. (c) Performance of the AirGNN and baselines in communication scenarios with different SNRs for source localization.}\label{fig:sourcePerformance}\vspace{-4mm}
\end{figure*}

\begin{theorem}\label{thm:variance}
	Consider the AirGNN $\bbPhi_{\rm air}(\bbx, \bbS, \ccalA^*)$ of $L$ layers and $F$ features per layer with the parameters $\ccalA^*$ and the underlying graph $\bbS$ [cf. \eqref{eq:AirGNN}]. Let filters be integral Lipschitz w.r.t. $C_L$ [cf. \eqref{eq:LipschitzGeneralizedResponse}], and the nonlinearity satisfy Assumption \ref{ass:LipschitzNonlinearity} w.r.t. $C_\sigma$ and Assumption \ref{assumptionNonlinearVariance}. For communication scenarios with CSI available at transmitting nodes [Section \ref{sec:airGNNsCSI}], it holds for any graph signal $\bbx$ that 
	\begin{align} \label{eq:varianceCSI}
		{\rm var} [ \bbPhi_{\rm air}(\bbx, \bbS, \ccalA^*) ]&\le\! C_1 \| \bbx \|_2^2 \rho(1\!-\!\rho) \!+\! C_2 \eps^2
		\\&+\! \ccalO(\rho^2(1\!-\!\rho)^2) + \ccalO(\rho(1-\rho) \eps^2), \nonumber
	\end{align}
	where $1-\rho$ is the link outage (truncation) probability determined by the power cut-off threshold $\gamma$ [cf. \eqref{eq:outageprobability}], $\eps$ is the standard deviation of the Guassian noise $n$ determined by the power scaling factor $P_0$ [cf. \eqref{eq:powerscalar}], and $C_1 = n d C_L^2 \!\sum_{\ell\!=\!1}^L\!\! F^{2L\!-\!3} C_\sigma^{2\ell\!-\!2}$ and $C_2 = n C_L^2$ are variance constants with $d$ the maximal degree of the graph. For communication scenarios with CSI unknown at transmitting nodes [Section \ref{sec:AirGNNNoCSI}], it holds that
	\begin{align} \label{eq:varianceNoCSI}
		&{\rm var} [ \bbPhi_{\rm air}(\bbx,\! \bbS,\! \ccalA^*) ] \!\le\! C_1 \| \bbx \|_2^2 \delta^2 \!+\! C_2 \eps^2 \!+\! \ccalO(\delta^4) \!+\! \ccalO(\delta^2 \eps^2), 
	\end{align}
	where $\sigma$ and $\eps$ are standard deviations of the channel coefficient $h$ and the Gaussian noise $n$, respectively, and $C_1 = n d C_L^2 \sum_{\ell\!=\!1}^L\!\! F^{2L\!-\!3} C_\sigma^{2\ell\!-\!2}$ and $C_2 = n C_L^2$ are variance constants. 
\end{theorem}
\begin{proof}
	See Appendix \ref{Proof:Theorem5}.
\end{proof}

Theorem \ref{thm:variance} states that the output of the trained AirGNN, either with or without CSI, does not diverge but changes within a finite range. The variance bound depends on the variance of the channel coefficient $\rho(1-\rho)$ (with CSI) or $\sigma^2$ (without CSI) and the variance of the additive noise $\eps^2$. When communication channels are stable, i.e., $\rho \to 0 \setminus 1$ or $\sigma \to 0$, and the noise is minor, i.e., $\eps \to 0$, the variance is small and the AirGNN output varies close to the optimized expectation. For ideal communications $\rho = 1$ or $\sigma = 0$ and $\eps = 0$, the bound reduces to zero because there is no channel impairment and all graph shift operators are deterministic.

The variance constants $C_1$ and $C_2$ consist of four terms that represent the role of the graph, filter and architecture hyperparameters, respectively. The terms $d$ and $n$ capture the graph impact, which shows that a denser and larger graph results in a worse variance bound. The term $C_L^2$ captures the filter impact, which depends on the Lipschitz property of the filter frequency response over the air $f(\bblambda)$. While a smaller $C_L$ yields a lower variance with more stable performance, it allows less variability between nearby frequencies and may decrease the discriminatory power of the AirGNN. The term $\sum_{\ell=1}^L \!F^{2L-3} C_\sigma^{2\ell-2}$ captures the architecture impact, which is the consequence of the signal propagating through the nonlinearity and multiple layers. As in the stability analysis, a wider architecture with more features and a deeper architecture with more layers lead to a larger variance. This is because more AirGFs come with more channel effects, which increase the AirGNN output randomness. The aforementioned factors also identify our potential handle to design AirGNN architectures with lower variance.

\section{Experiments}\label{sec:experiments}

We evaluate AirGNNs on the decentralized tasks of source localization (Sec. \ref{subsec:source}) and multi-robot flocking (Sec. \ref{subsec:flocking}), and corroborate the stability analysis in Section \ref{sec:stabilityAnalysis} numerically (Sec. \ref{subsec:stability}). In all experiments, the ADAM optimizer is used with decaying factors $\beta_1 = 0.9$, $\beta_2 = 0.999$ \cite{Ba2010} and the results are averaged over $10$ random simulations.

\subsection{Source Localization}\label{subsec:source}

\begin{figure*}%
	\centering
	\begin{subfigure}{0.6\columnwidth}
		\includegraphics[width=1.05\linewidth, height = 0.76\linewidth, trim = {0cm 0cm 0cm 1cm}, clip]{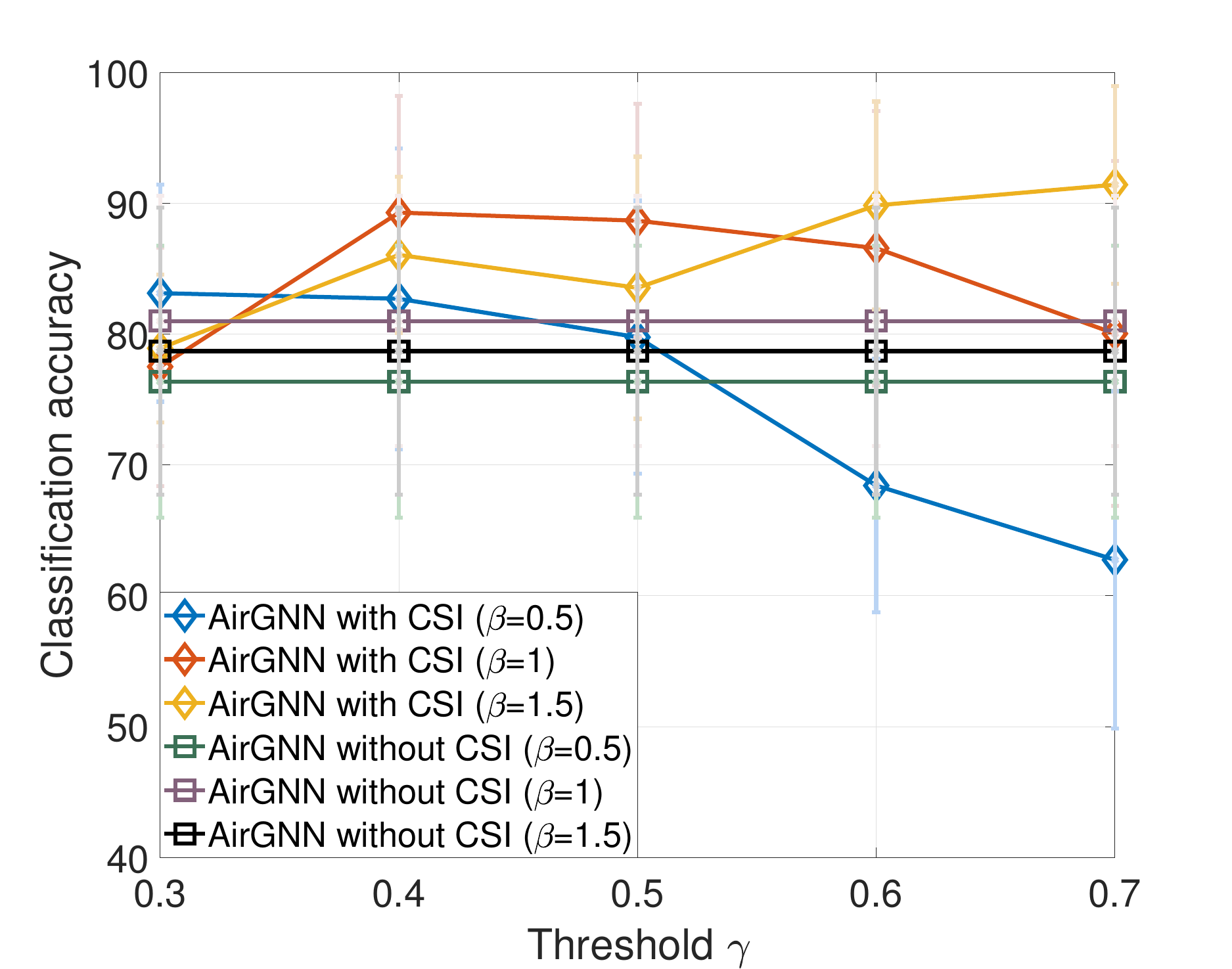}%
		\caption{}%
		\label{subfig:sourceGamma}%
	\end{subfigure}\hfill\hfill%
	\begin{subfigure}{0.6\columnwidth}
		\includegraphics[width=1.05\linewidth,height = 0.76\linewidth, trim = {0cm 0cm 0cm 1cm}, clip]{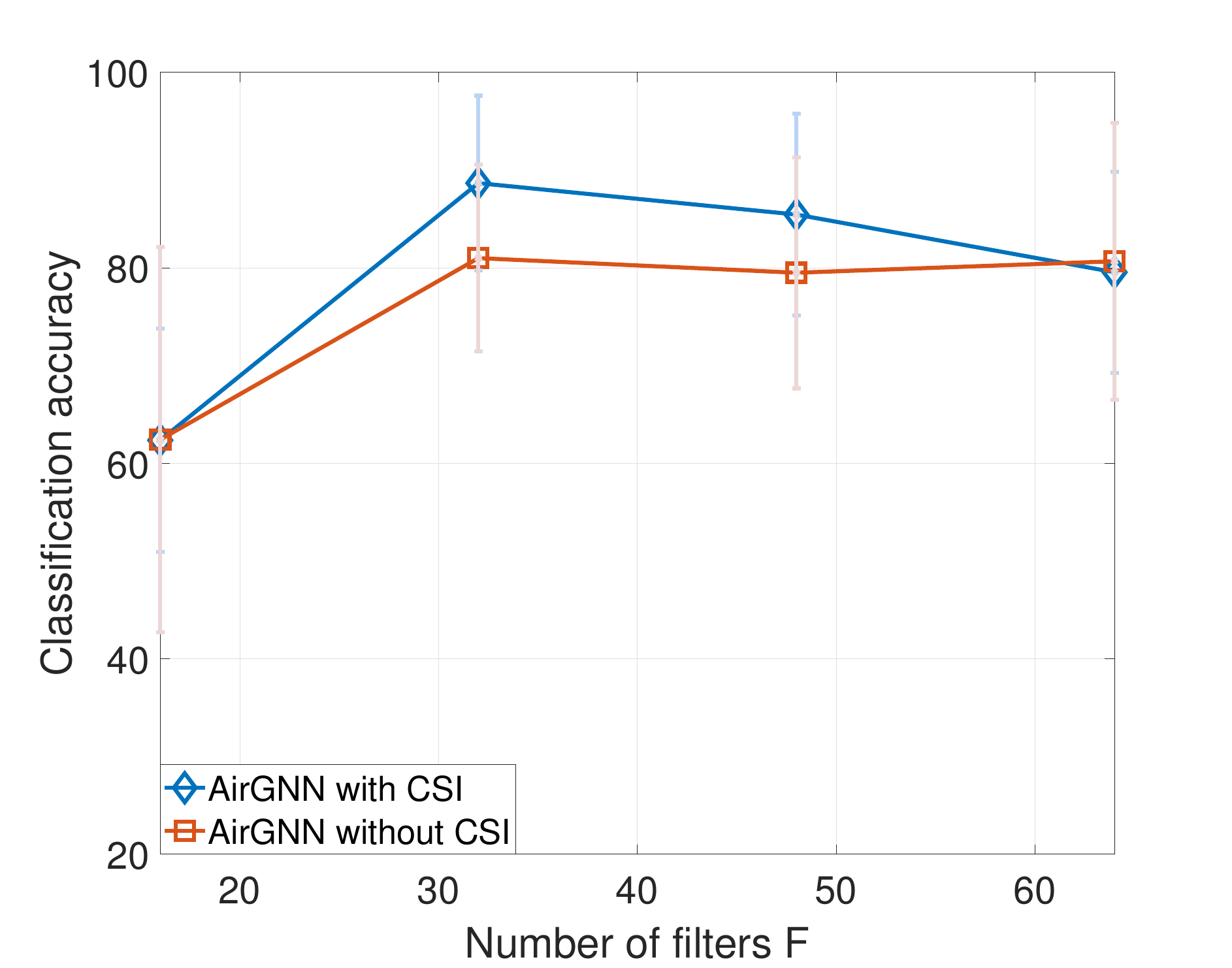}%
		\caption{}%
		\label{subfig:sourceF}%
	\end{subfigure}\hfill\hfill%
	\begin{subfigure}{0.6\columnwidth}
		\includegraphics[width=1.05\linewidth,height = 0.76\linewidth, trim = {0cm 0cm 0cm 1cm}, clip]{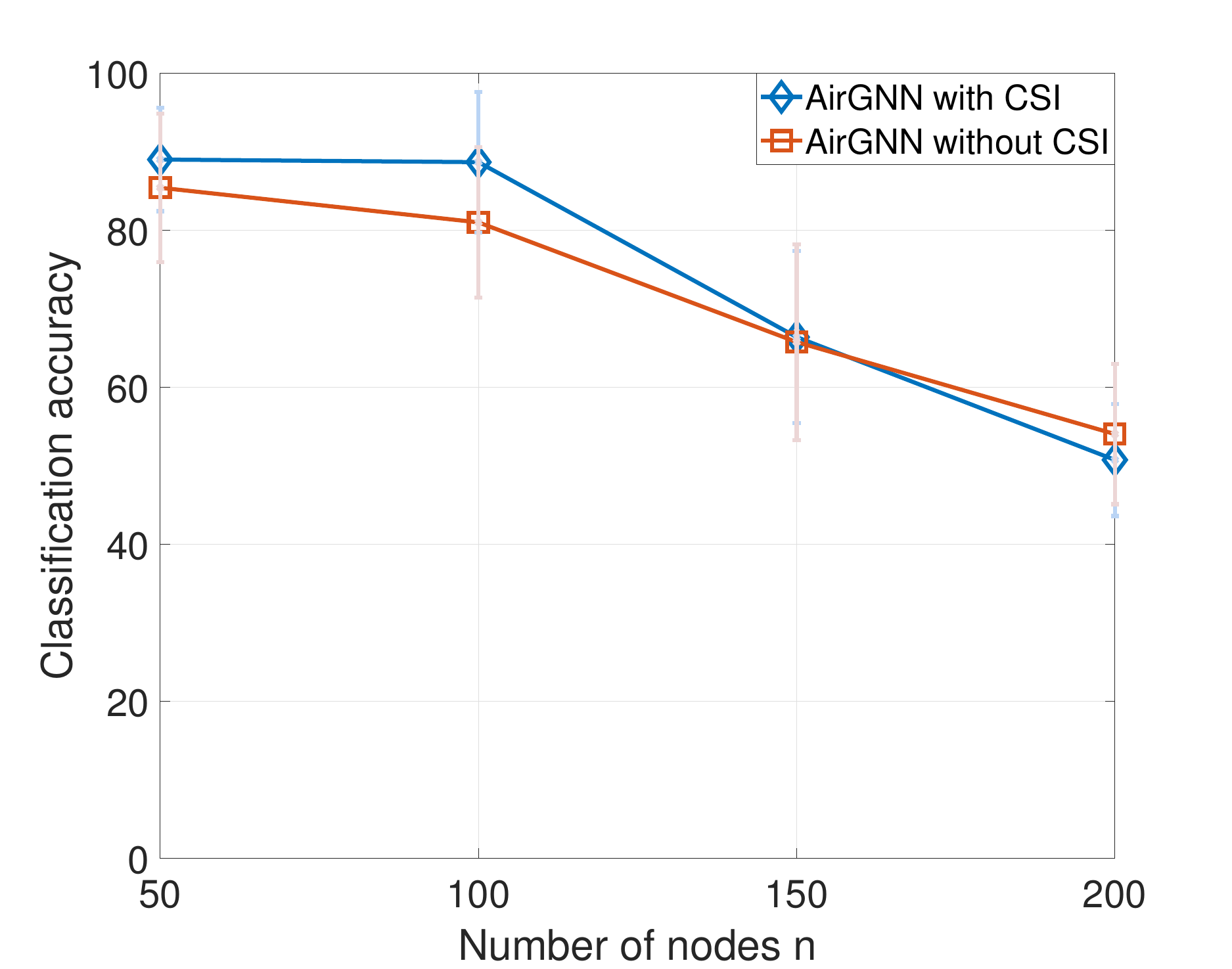}%
		\caption{}%
		\label{subfig:sourceN}%
	\end{subfigure}%
	\caption{(a) Performance of the AirGNNs with different power cut-off thresholds $\gamma$ in different channel conditions. (b) Performance of the AirGNNs with different numbers of features $F$. (c) Performance of the AirGNNs with different graph sizes, i.e., numbers of nodes $n$.}\label{fig:parameterSource}\vspace{-4mm}
\end{figure*}

This experiment considers a signal diffusion process over a stochastic block model (SBM) graph, where the intra- and the inter-community edge probabilities are $0.8$ and $0.2$. The graph has $n = 100$ nodes uniformly divided into $10$ communities, and each community has a source node $s_c$ for $c=1,\ldots,10$. The goal is to find the source community of a diffused signal in a decentralized manner. The initial signal is a Kronecker delta $\bbdelta_{c} \in \reals^n$ originated at one of the source nodes $\{s_c\}_{c=1}^{10}$ and is diffused over the graph at time $\tau$ as $\bbx_{c, \tau} =\bbS^\tau \bbdelta_{c} + \bbn$ with $\bbS = \bbA / \lambda_{max}(\bbA)$ and $\bbn$ a zero-mean Gaussian noise. The dataset consists of $1.5 \times 10^4$ samples with the source node $s_c$ and the diffusion time $\tau$ selected randomly from $\{s_1,\ldots,s_{10}\}$ and $\{1,\ldots,100\}$, which are split into $10^4$ samples for training, $2.5\times 10^3$ samples for validation, and $2.5 \times 10^3$ samples for testing. The AirGNN has two layers, each comprising $32$ filters of order $5$ and ReLU nonlinearity. We consider independent Rayleigh fading channels between the nodes with distribution scale parameter $\beta$ and the Gaussian noise with signal-to-noise ratio (SNR) $40$dB. The graph nodes are randomly distributed in the region $[-5km, 5km]^2$ and the path loss is modeled as $\zeta^{-\xi / 2}$, where $\zeta$ is the distance and $\xi=2$ is the path-loss exponent. The learning rate is set to $10^{-3}$, the batch-size to $50$, and the performance is measured in terms of the classification accuracy.

\smallskip
\noindent \textbf{Convergence.} Fig. \ref{subfig:convergence} shows the training procedures of the AirGNNs with (Section \ref{sec:airGNNsCSI}) and without CSI (Section \ref{sec:AirGNNNoCSI}) in the communication scenario with scale parameter $\beta = 1$. We see that the training / validation cost decreases with the number of iterations $t$ and approaches a stationary solution in both cases, which corroborates the convergence analysis in Theorem \ref{thm:convergence}. The convergence curve fluctuates during training, because channel fading effects and mini-batch sampling yield the randomness of the AirGNN output. The convergent cost increases from the AirGNN with CSI to the one without CSI. We attribute this behavior to the fact that the truncated power control strategy with an appropriate cut-off threshold inverses the channel fading and recovers the signal information; hence, improving the architecture performance. 

\smallskip
\noindent \textbf{Channel condition.} Figs. \ref{subfig:sourceSigma}-\ref{subfig:sourceSNR} depict the performance of the AirGNNs with different channel conditions, i.e., different scale parameters $\beta$ in Fig. \ref{subfig:sourceSigma} and different SNRs in Fig. \ref{subfig:sourceSNR}. The threshold of the AirGNN with CSI is set as $\gamma = 0.5$. We compare with two baselines: (i) GNN with ideal communications without channel fading and noise, and (ii) GNN with channel fading and noise. The first considers a classical GNN with no fading and noise, i.e., $h_{ij}^{(k)}=1$ and $n_i^{(k)}=0$ for $k=1,\ldots,K$ [cf. \eqref{eq:communicationChannel1}], during both training and inference. This is an ideal communication scenario, which may not hold in real-world wireless applications and serves as an upper bound only for reference. The second experiences fading and noise during inference, which is the practical scenario considered in this paper, but ignores them during training. 

The GNN without fading and noise exhibits the best performance as expected because it does not consider any channel effects. The AirGNN either with or without CSI consistently outperforms the GNN with fading and noise. This corroborates our theoretical finding, i.e., the AirGNN accounts for communication channels during training and is more robust to channel impairments during inference. Moreover, the AirGNN maintains a good performance in different channel conditions, while the GNN suffers more degradation in severe channels when $\beta$ is small / large or SNR is small and yields a performance close to a random classifier. This highlights the importance of architecture robustness to communication effects. 

The AirGNN \textit{with} CSI exhibits a better performance than the AirGNN \textit{without} CSI, when the cut-off threshold of truncated power control strategy $\gamma$ is appropriately selected based on the channel condition. This is because the former assumes the CSI available at transmitting nodes and leverages this information to reduce channel effects for signal transmission, while the latter does not use / require any prior knowledge about the CSI. 

\begin{figure*}%
	\centering
	\begin{subfigure}{0.495\columnwidth}
		\includegraphics[width=1.05\linewidth, height = 0.76\linewidth, trim = {0cm 0cm 0cm 1cm}, clip]{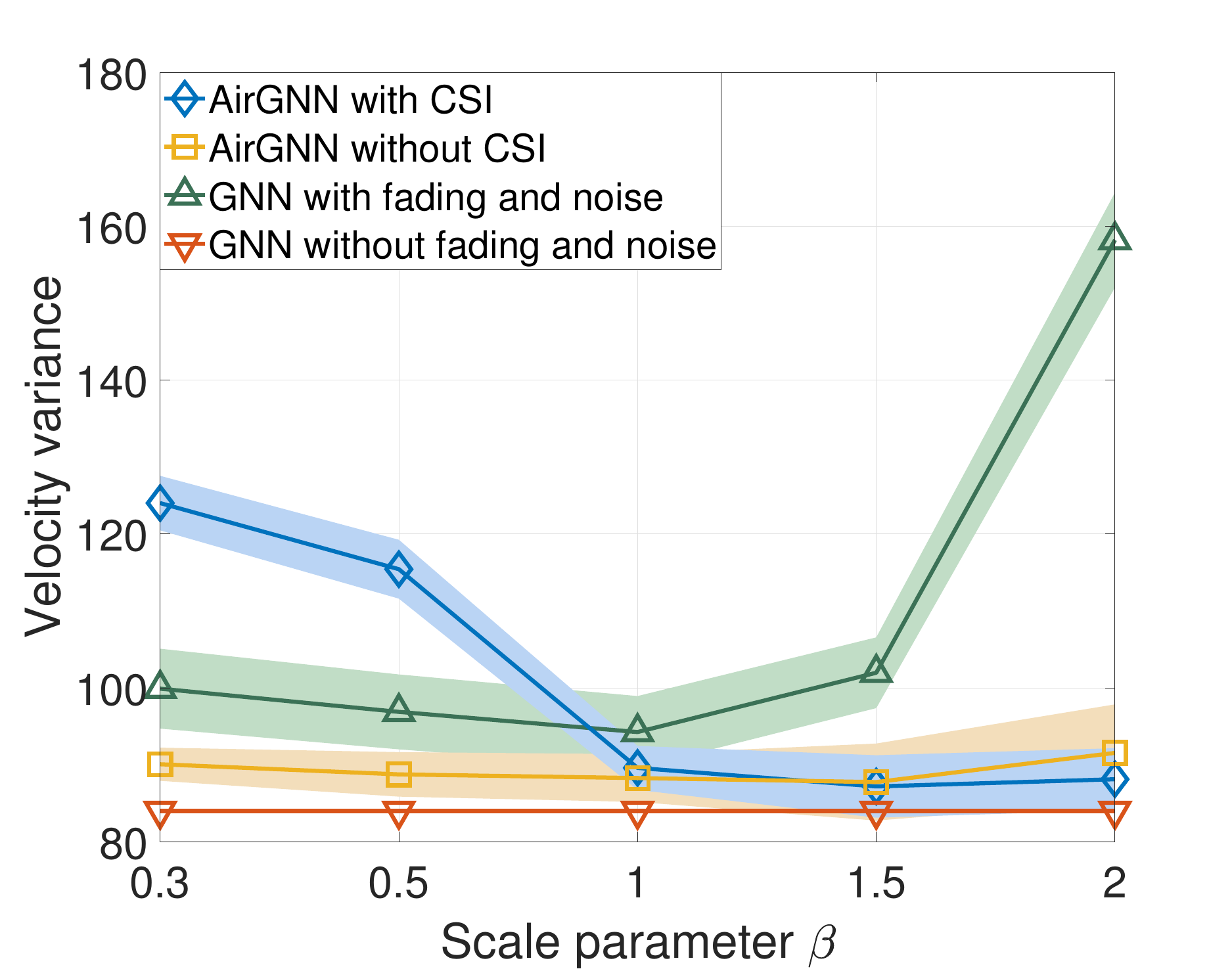}%
		\caption{}%
		\label{subfig:flockingSigma}%
	\end{subfigure}\hfill\hfill%
	\begin{subfigure}{0.495\columnwidth}
		\includegraphics[width=1.05\linewidth,height = 0.76\linewidth, trim = {0cm 0cm 0cm 1cm}, clip]{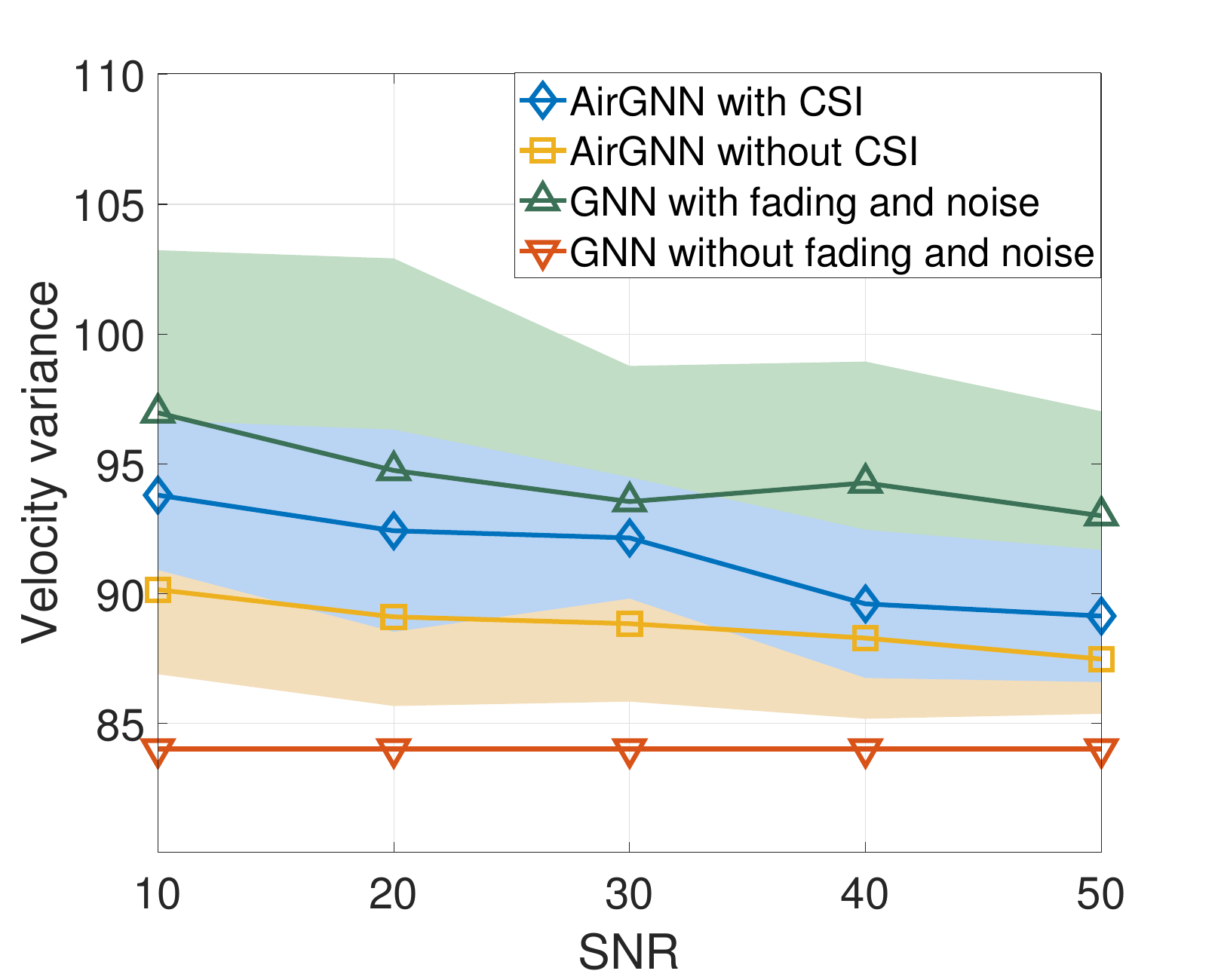}%
		\caption{}%
		\label{subfig:flockingSNR}%
	\end{subfigure}\hfill\hfill%
	\begin{subfigure}{0.495\columnwidth}
		\includegraphics[width=1.05\linewidth,height = 0.76\linewidth, trim = {0cm 0cm 0cm 1cm}, clip]{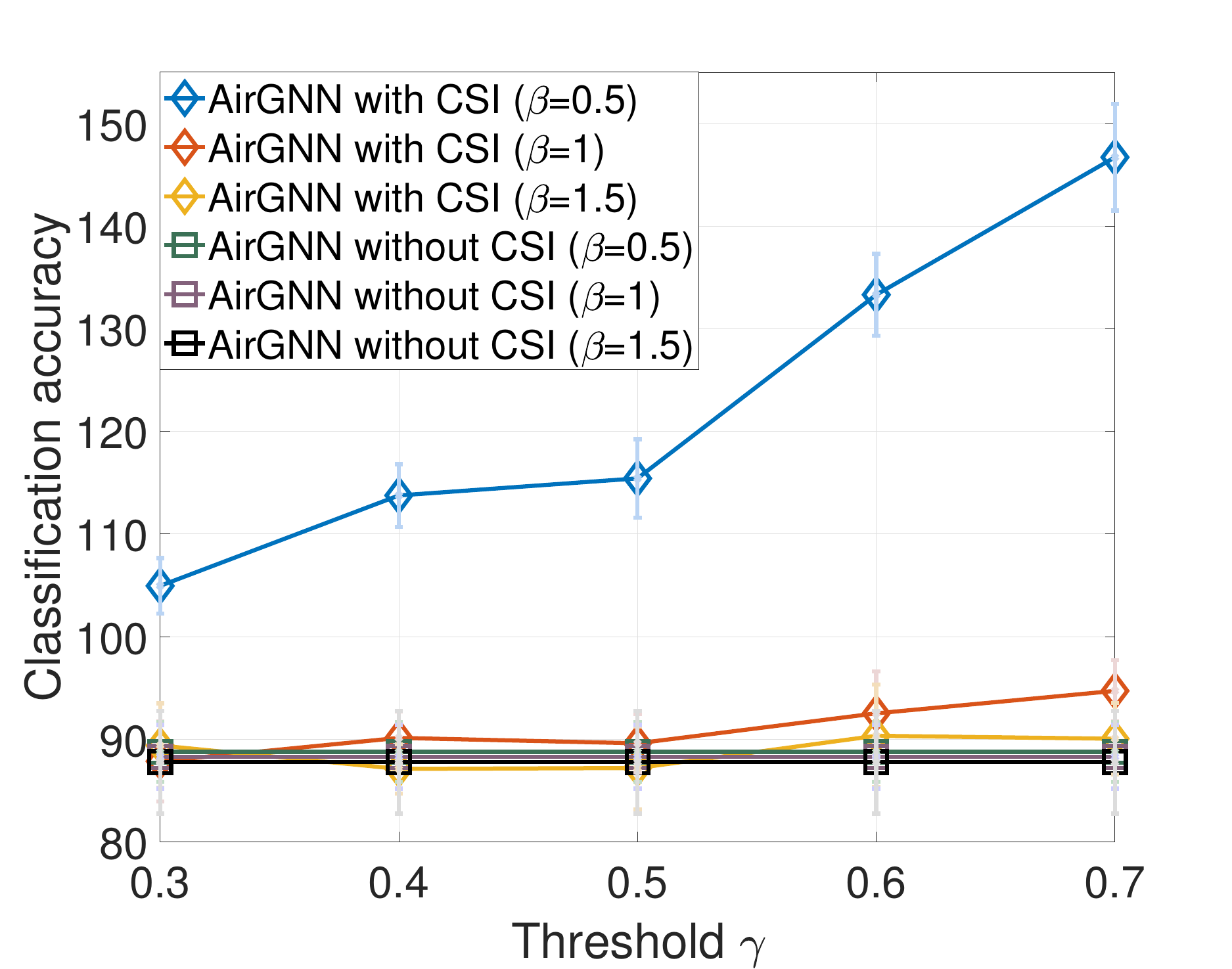}%
		\caption{}%
		\label{subfig:flockingGamma}%
	\end{subfigure}\hfill\hfill%
	\begin{subfigure}{0.495\columnwidth}
		\includegraphics[width=1.05\linewidth,height = 0.76\linewidth, trim = {0cm 0cm 0cm 1cm}, clip]{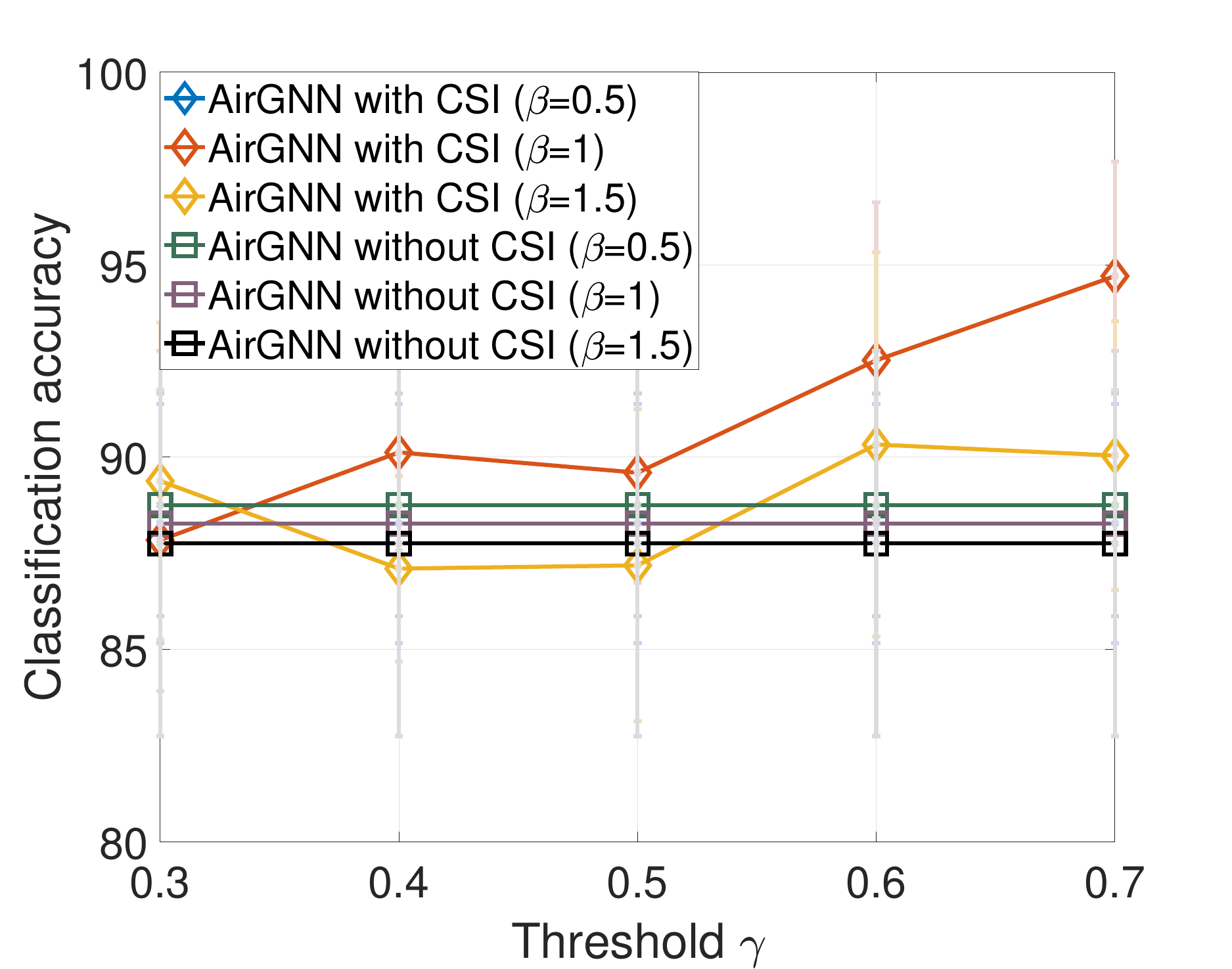}%
		\caption{}%
		\label{subfig:flockingGammaZoomed}%
	\end{subfigure}%
	\caption{(a) Performance of the AirGNNs and baselines in communication scenarios with different scale parameters $\beta$ for multi-robot flocking. (b) Performance of the AirGNNs and baselines in communication scenarios with different SNRs for multi-robot flocking. (c) Performance of the AirGNNs with different power cut-off thresholds $\gamma$. (d) Zoomed version of Fig. \ref{subfig:flockingGamma}.}\label{fig:performanceFlocking}\vspace{-4mm}
\end{figure*}

\smallskip
\noindent \textbf{Threshold.} Fig. \ref{subfig:sourceGamma} compares the AirGNNs with different power-cutoff 
thresholds $\gamma \in [0.3, 0.7]$ in communication scenarios with different scale parameters $\beta = 0.5, 1$ and $1.5$. We see that the performance of the AirGNN with CSI decreases as the threshold $\gamma$ increases for $\beta = 0.5$ because a larger $\gamma$ results in a higher link outage probability and more loss of signal information, while its performance increases with the threshold $\gamma$ for $\beta=1.5$ because a larger $\gamma$ improves the signal-to-noise ratio and reduces the impact of Gaussian noise. The performance for $\beta = 1$ first increases with $\gamma$ and then decreases at large $\gamma$, which can be explained by the trade-off between the link outage probability and the signal-to-noise ratio controlled by the selection of the threshold $\gamma$ -- see Section \ref{subsec:TCPC}. It is also worth noting that the AirGNN with CSI does not always outperform the AirGNN without CSI, even though it requires additional information, i.e., the CSI, for implementation. The latter depends on the selection of the threshold and the specific channel condition, as discussed in Section \ref{sec:AirGNNNoCSI}. 

\smallskip
\noindent \textbf{Hyperparameters.} We further compare the AirGNNs with different architecture hyperparameters, i.e., number of filters $F$ in Fig. \ref{subfig:sourceF} and graph size $n$ in Fig. \ref{subfig:sourceN}. Fig. \ref{subfig:sourceF} shows that the classification accuracy first increases and then decreases with the number of filters $F$. The former is because the expressive power (representational capacity) of the AirGNN increases with $F$, which improves the performance, while the latter is because more filters involve more channel impairments into the architecture, which amplifies the impact of fading and noise. Fig. \ref{subfig:sourceF} shows that the classification accuracy decreases with the number of nodes $n$. This is because given the same AirGNN architecture (i.e., the same expressive power), a larger graph (i) makes the task more challenging and (ii) introduces more channel impairments into the architecture.

\subsection{Multi-Robot Flocking}\label{subsec:flocking}

This experiment considers the problem of robot swarm control over communication graph. The goal is to learn a decentralized controller that coordinates the robots to move together and avoid collision. The network has $n = 50$ robots with initial velocities sampled randomly in $[-3m/s, 3m/s]^2$, and robot $i$ can communicate with robot $j$ if they are within a communication radius $r = 1.5m$. The communication graph is $\ccalG = (\ccalV, \ccalE)$ with the node set $\ccalV$ as the robots and the edge set $\ccalE$ as available communication links, and the graph signal $\bbx$ is the relevant feature of robot position and velocity. We use imitation learning to train the AirGNN by mimicing the optimal centralized controller \cite{tolstaya2020learning}. The dataset consists of $450$ trajectories of $100$ time steps, which are split into $400$ for training, $25$ for validation, and $25$ for testing. We consider a single-layer AirGNN with $32$ filters of order $5$ and the hyperbolic tangent nonlinearity. The learning rate is $5 \times 10^{-4}$ and the batch-size is $20$. We measure the performance with the variance of robot velocities for a trajectory, which quantifies how far from the consensus the robots are, and a lower value represents a better performance. 

\smallskip
\noindent \textbf{Channel condition.} Figs. \ref{subfig:flockingSigma}-\ref{subfig:flockingSNR} compare the performance of the AirGNNs with baselines in different channel conditions, i.e., different scale parameters $\beta$ in Fig. \ref{subfig:flockingSigma} and different SNRs in Fig. \ref{subfig:flockingSNR}. The GNN without fading and noise performs best following our intuition since it assumes ideal communication links. The AirGNNs exhibit a comparable performance with slight degradation. This is because the information loss induced by channel impairments leads to inevitable errors, which cannot be resolved completely by training. The AirGNN without CSI consistently outperforms the GNN with fading and noise, and this improved performance is more visible for worse channel conditions (e.g., large scale parameters and small SNRs). We again attribute this behavior to the robustness of the AirGNN to channel effects because it accounts for communication channels during training. The AirGNN with CSI achieves a lower velocity variance than the GNN with fading and noise in most cases, but performs worse in communication scenarios with small $\beta$. This can be explained by the inappropriate selection of the power cut-off threshold $\gamma$. That is, $\gamma = 0.5$ is too large for communication channels with small $\beta$, which leads to a high link outage probability and performance degradation. 

\smallskip
\noindent \textbf{Threshold.} Lastly, we evaluate the AirGNNs with different power cut-off thresholds $\gamma$ in communication scenarios with different scale parameters $\beta$. In Fig. \ref{subfig:flockingGamma}, we see that the performance of the AirGNN with CSI improves rapidly with the decreasing of $\gamma$ for communication scenario with $\beta = 0.5$. This is because channel coefficients $h$ are small in this condition, which requires a small $\gamma$ to reduce the link outage probability and improve the architecture performance. Similar phenomenon is observed for communication scenario with $\beta=1$. In Fig. \ref{subfig:flockingGammaZoomed}, we present a zoomed view of Fig. \ref{subfig:flockingGamma} and see that the performance for communication scenario with $\beta=1.5$ first increases and then decreases as the threshold $\gamma$ decreases. It achieves the lowest velocity variance at $\gamma = 0.4$, which corroborates the trade-off between the link outage probability and the signal-to-noise ratio controlled by $\gamma$. The comparison between the AirGNN with CSI and the one without CSI depends on the selection of the power cut-off threshold and the condition of communication channels -- see Section \ref{sec:AirGNNNoCSI} for discussions.

\subsection{Stability Corroboration}\label{subsec:stability}

The goal of this experiment is to corroborate the stability analysis in Section \ref{sec:stabilityAnalysis}. We consider the decentralized task of source localization in the SBM graph of $n=50$ nodes. The GNN contains $2$ layers, each of which contains $F=32$ filters of order $K=5$ followed by a ReLU nonlinearity. The channel coefficient is drawn from a Gaussian distribution with mean $\mu = 0.9$ and standard deviation $\delta = 0.1$,\footnote{We consider a Gaussian distribution for stability analysis because it allows to study mild channel effects by considering its mean $\mu$ close to one and standard deviation $\delta$ close to zero.} and the SNR is $50$dB by default. For stability analysis, we train the GNN assuming ideal communication links, i.e., $h_{ij}^{(k)}=1$ and $n_i^{(k)}=0$ for $k=1,\ldots,K$ [cf. \eqref{eq:communicationChannel}], but evaluate the trained model under channel impairments in different communication scenarios, i.e., different means $\mu$ in Fig. \ref{subfig:stabilityMean}, different standard deviations $\delta$ in Fig. \ref{subfig:stabilityVariance} and different SNRs in Fig. \ref{subfig:stabilitySNR}. 

\begin{figure*}%
	\centering
	\begin{subfigure}{0.6\columnwidth}
		\includegraphics[width=1.05\linewidth, height = 0.74\linewidth, trim = {0cm 0cm 0cm 1cm}, clip]{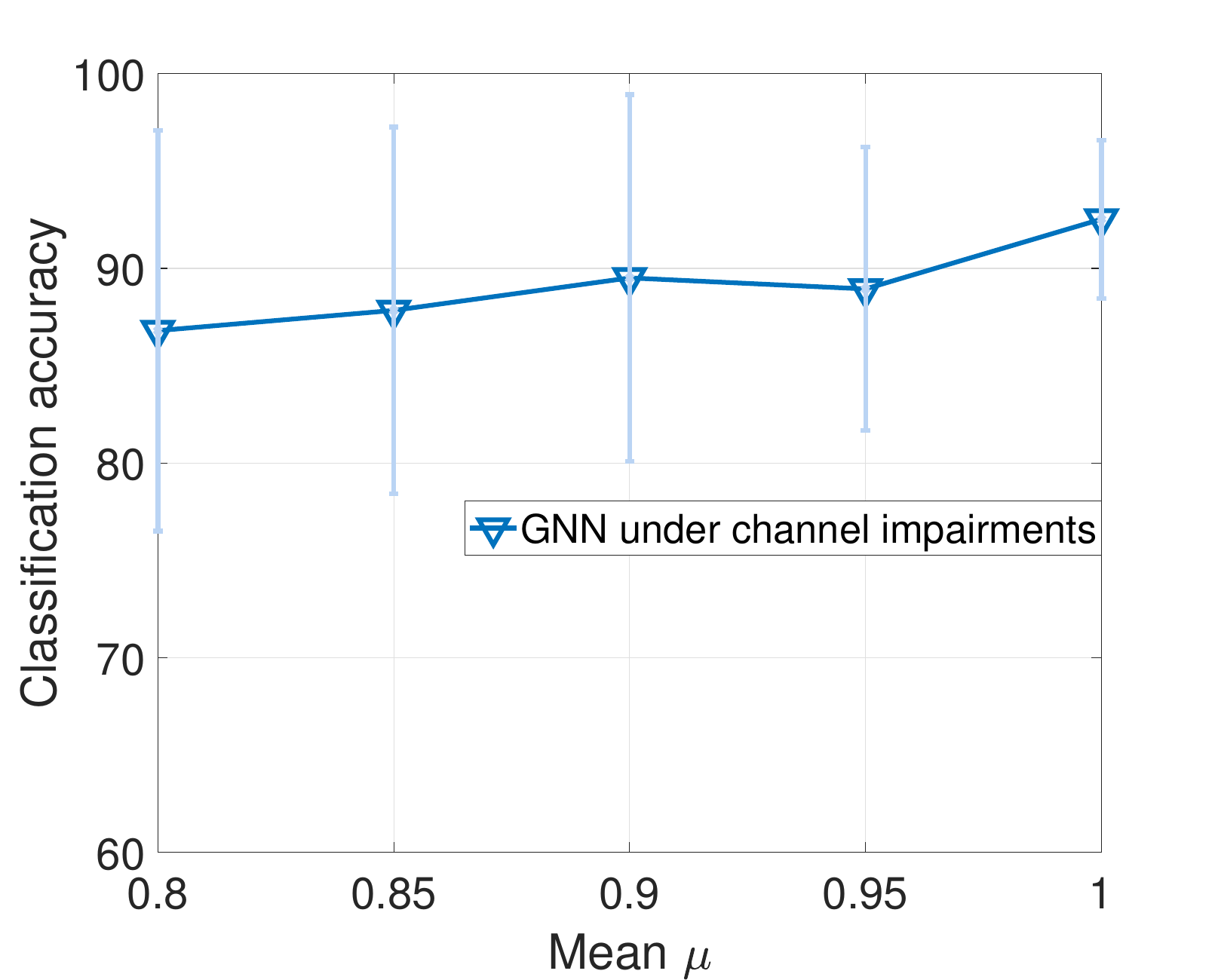}%
		\caption{}%
		\label{subfig:stabilityMean}%
	\end{subfigure}\hfill\hfill%
	\begin{subfigure}{0.6\columnwidth}
		\includegraphics[width=1.05\linewidth,height = 0.74\linewidth, trim = {0cm 0cm 0cm 1cm}, clip]{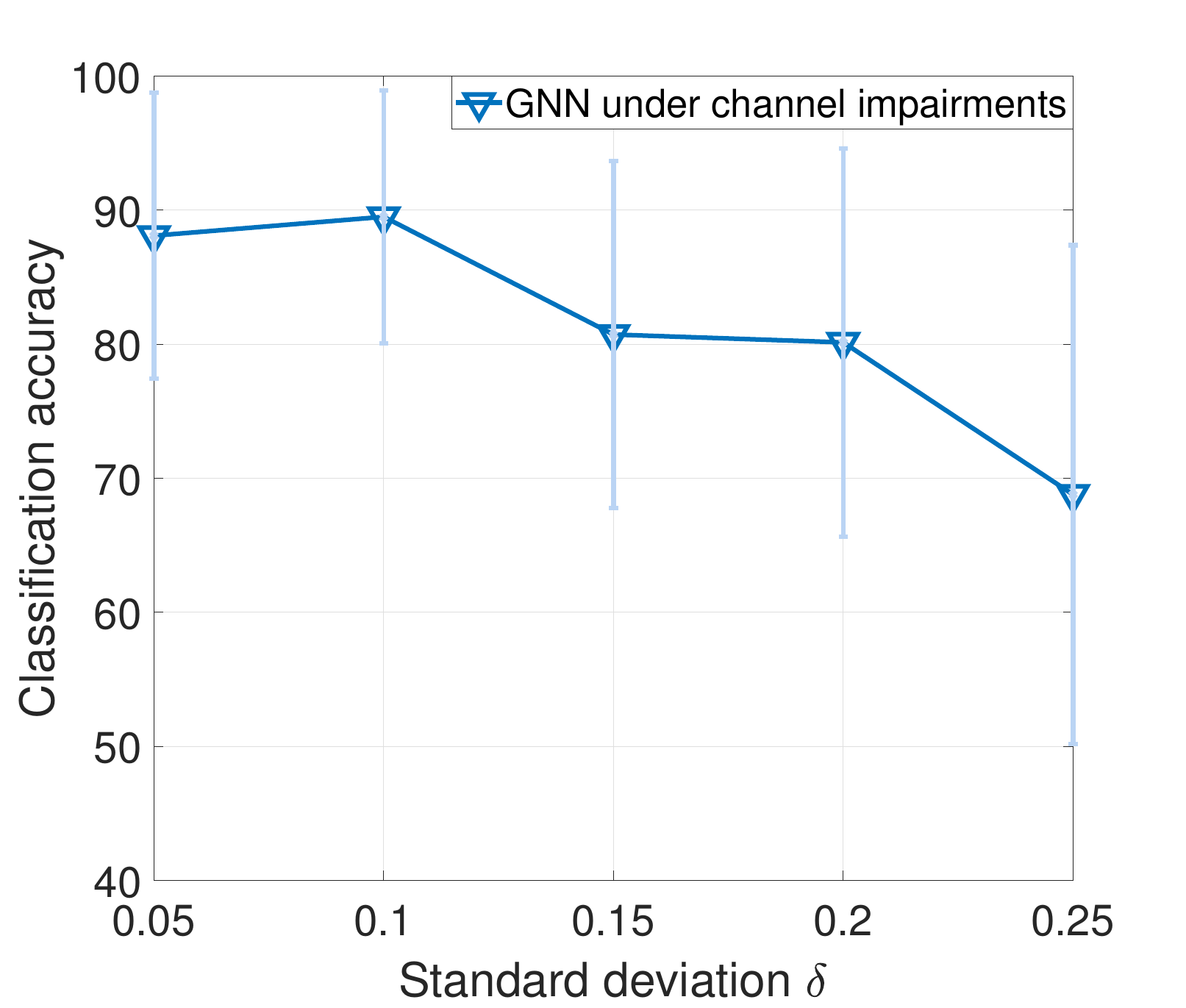}%
		\caption{}%
		\label{subfig:stabilityVariance}%
	\end{subfigure}\hfill\hfill%
	\begin{subfigure}{0.6\columnwidth}
		\includegraphics[width=1.05\linewidth,height = 0.74\linewidth, trim = {0cm 0cm 0cm 1cm}, clip]{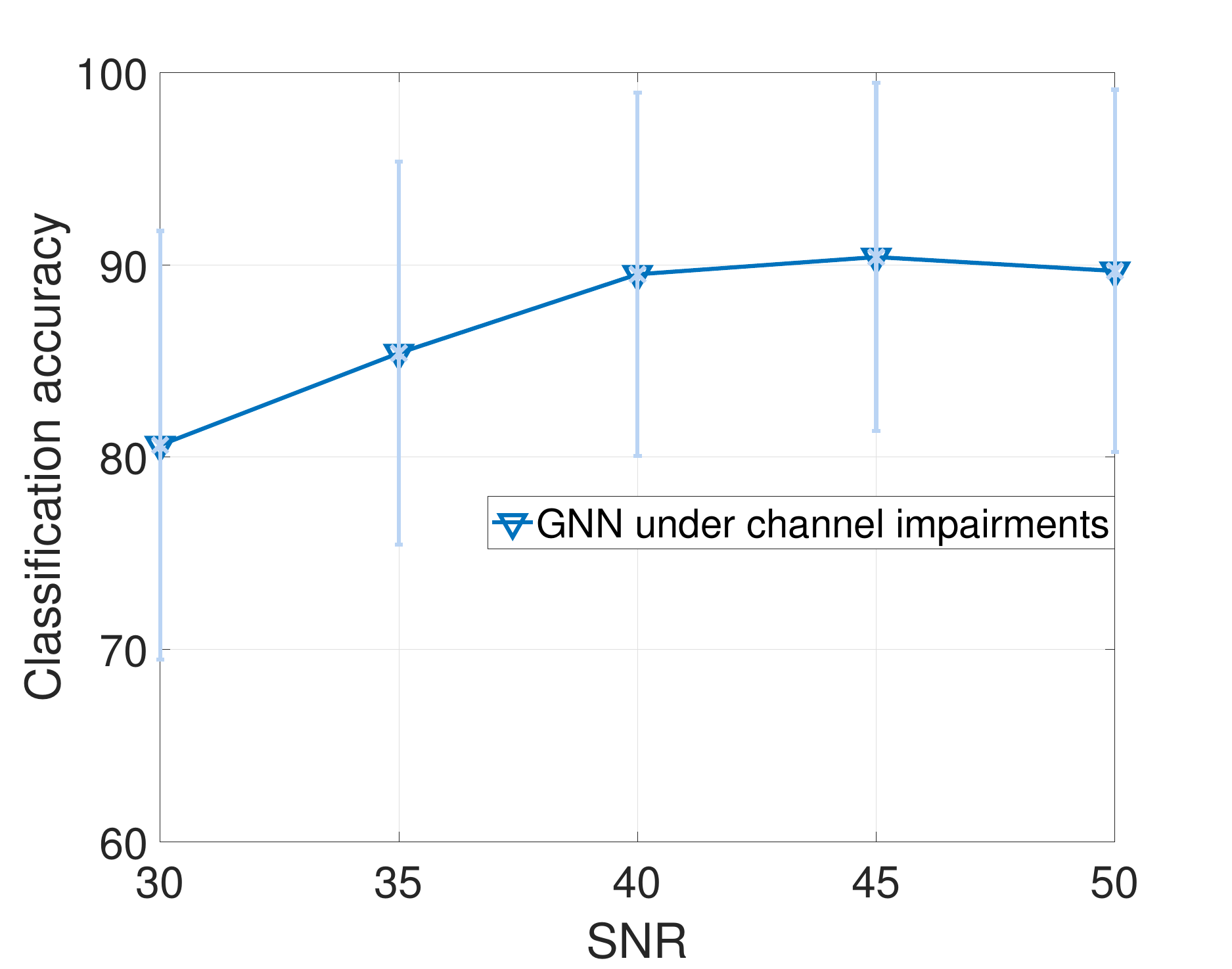}%
		\caption{}%
		\label{subfig:stabilitySNR}%
	\end{subfigure}%
	\caption{Performance of the GNN under channel impairments in different communication scenarios. (a) Different means $\mu$. (b) Different standard deviations $\delta$. (c) Different SNRs.}\label{fig:stability}\vspace{-4mm}
\end{figure*}

We see that when channel conditions are stable, i.e., mean $\mu$ approaches one, standard deviation $\delta$ approaches zero, and SNR becomes large, the GNN under channel impairments maintains a comparable classification accuracy to that assuming ideal communications, which indicates that it is stable to small channel effects. However, the performance degrades when channel conditions get worse, i.e., $\mu$ decreases, $\delta$ increases, and SNR decreases. This follows our analysis in Theorem \ref{thm:AirGNNStability} because large values of $(1-\mu)$, $\delta$ and $\eps$\footnote{A larger standard deviation of Gaussian noise $\eps$ corresponds to a lower SNR.} increase the stability bound of the output deviation and result in more performance degradation, highlighting the importance of incorporating communication channels into the GNN architecture.

\section{Conclusion}
\label{sec:conclusion}

This paper focused on channel impairments of wireless communications in decentralized implementation of GNNs. We performed stability analysis to study the performance degradation of GNNs under channel impairments, and developed graph neural networks over the air (AirGNNs) to alleviate this performance degradation. AirGNNs consist of multiple layers with each layer comprising graph filters implemented over the air and a pointwise nonlinearity. They incorporate communication channels into graph signal processing and rely on noisy neighborhood information to extract features, which improve the architecture robustness to channel fading and noise. When channel state information (CSI) is available at the transmitting nodes, we proposed a signal transmission strategy, referred to as truncated channel-inversion power control, to further improve the performance within the AirGNN architecture. When CSI is unknown, we account for random channel states in the cost function and developed a SGD based method to train AirGNNs. We showed the proposed training procedure is equivalent to running SGD on an associated stochastic optimization problem, proved its convergence to a stationary solution, and characterized statistical behavior of the trained model through variance analysis. Numerical experiments are performed on decentralized tasks of source localization and multi-robot flocking, corroborating the effectiveness of AirGNNs to improve the task performance in wireless networks over communication channels. For future work, we plan to evaluate AirGNNs with other channel models and in other decentralized tasks, such as power outage prediction in smart grids and decentralized navigation of multi-robot systems.

\appendices

\section{Proof of Theorem 1}\label{Proof:Theorem1}

Let $\tilde{\bbu} = \bbH_{\rm air}(\bbS)\bbx = \tilde{\bbu}_1 + \tilde{\bbu}_2$ be the AirGF output [cf. (6)], where $\tilde{\bbu}_1 = \bbP_{\rm air}(\bbS, \bbx)$ is the signal component and $\tilde{\bbu}_2 = \bbN_{\rm air}(\bbS, \bbx)$ is the noise component, and $\bbu = \bbH(\bbS)\bbx$ be the nominal graph filter output. The expected output difference between $\tilde{\bbu}$ and $\bbu$ is 
\begin{align} \label{proof:thm11}
	&\mathbb{E}\left[\| \tilde{\bbu} - \bbu \|^2_2\right]\!=\! \mathbb{E}\!\left[\tr \big(\tilde{\bbu} \tilde{\bbu}^\top \!+\! \bbu \bbu^\top \!-\! 2\tilde{\bbu} \bbu^\top \big)\right] \\
	& = \mathbb{E}\!\left[\tr \big(\tilde{\bbu}_1 \tilde{\bbu}_1^\top \!+\! \bbu \bbu^\top \!-\! 2\tilde{\bbu}_1 \bbu^\top \!+\! 2 \tilde{\bbu}_1 \tilde{\bbu}_2^\top \!+\! \tilde{\bbu}_2 \tilde{\bbu}_2^\top \!-\! 2 \tilde{\bbu}_2 \bbu^\top \big)\right], \nonumber
\end{align}
where $\tr(\cdot)$ is the trace operator. Since the Gaussian noise is w.r.t. zero mean, we have $\mathbb{E}[\bbn^{(k)}] = \bb0$ for $k=1,\ldots,K$ and thus $\mathbb{E}[\tilde{\bbu}_2] = \bb0$ by definition [cf. (6)]. By using this result in \eqref{proof:thm11}, we have
\begin{align} \label{proof:thm12}
	&\mathbb{E}\!\!\left[\!\| \tilde{\bbu} \!-\! \bbu \|^2_2\right]\!\!\!=\! \mathbb{E}\!\!\left[\tr \big(\!\tilde{\bbu}_1 \tilde{\bbu}_1^\top \!\!+\! \bbu \bbu^\top\!\!\!-\! 2\tilde{\bbu}_1 \bbu^\top \!\big)\!\right] \!\!+\!  \mathbb{E}[\tr \big(\tilde{\bbu}_2 \tilde{\bbu}_2^\top \!\big)\!].
\end{align}
We consider two terms in \eqref{proof:thm12} separately.

\smallskip
\noindent \textbf{First term in \eqref{proof:thm12}.} By adding and subtracting $\bbu^\top \bbu$ in the trace, we can rewrite it as
\begin{align} \label{proof:them1255}
	\mathbb{E}\!\left[\tr \big(\tilde{\bbu}_1 \tilde{\bbu}_1^\top \!-\! \bbu \bbu^\top \big)\right]+ 2 \mathbb{E}\!\left[\tr\big( \bbu \bbu^\top \!-\! \tilde{\bbu}_1 \bbu^\top \big)\right],
\end{align}
where the linearity of the expectation and the trace is used. We now consider two terms in \eqref{proof:them1255}. For the first term $\mathbb{E}\!\left[\tr \Big(\tilde{\bbu}_1 \tilde{\bbu}_1^\top \!-\! \bbu \bbu^\top \Big)\!\right]$, we substitute the expressions of signal component $\tilde{\bbu}_1$ and filter output $\bbu$, and use the symmetry of the shift operators $\bbS^{(k)}$\footnote{For convenience of expression, we drop the subscript of $\bbS_{\rm air}^{(k)}$ in all proofs.} and $\bbS$ to write\footnote{Throughout the proofs, we use the shorthand notation $\sum_{a,b,c = \alpha, \beta, \gamma}^{A, B, C} (\cdot)$ to denote $\sum_{a = \alpha}^A\sum_{b = \beta}^B\sum_{c = \gamma}^C (\cdot)$ to avoid overcrowded expressions. When the extremes of the sum ($\alpha, \beta, \gamma$ or $A, B, C$) are the same, we will write directly the respective value.}
\begin{gather} \label{proof:them13}
	\begin{split}
		&\mathbb{E}\big[\tr \big(\tilde{\bbu}_1 \tilde{\bbu}_1^\top \!-\! \bbu \bbu^\top \big)\big] \\
		&= \sum_{k, \ell=0}^K \alpha_k \alpha_\ell \big( \mathbb{E} \big[ \tr \big( \tilde{W}(k,\ell) \big) \big] -  \tr \big( W(k,\ell) \big) \big)
	\end{split}
\end{gather}
with $\tilde{W}(k,\ell)= \bbS^{(k:0)} \bbx \bbx^{\top} \bbS^{(0:\ell)}$ and $W(k, \ell)= \bbS^k  \bbx \bbx^{\top} \bbS^\ell$, where $\bbS^{(k:0)} = \bbS^{(k)} \cdots \bbS^{(0)}$, $\bbS^{(0:\ell)} = \bbS^{(0)}\cdots\bbS^{(\ell)}$ are concise notations and $\bbS^{(0)} = \bbI$ by default. By representing the $k$th AirGSO $\bbS^{(k)}$ as $\bbS^{(k)} = \bbS + \bbE^{(k)}$ with $\bbE^{(k)}$ the deviation of $\bbS^{(k)}$ from $\bbS$ and substituting this representation into $\tilde{W}(k,\ell)$, we can expand the terms in $\tilde{W}(k,\ell)$ as 
\begin{align}
	\label{proof:them135} &\mathbb{E}\left[ \tilde{W}(k,\ell)\right] = \mathbb{E}\left[ (\bbS + \bbE^{(k)})\cdots \bbx \bbx^\top \cdots (\bbS + \bbE^{(\ell)})\right]\\
	&\!=\! {\bbS}^{k}\bbx \bbx^\top\! {\bbS}^{\ell}  \!\!+\! \mathbb{E}\big[ (\bbS \!+\! \bbE^{(k)})\!\cdots\! \bbx \bbx^\top \bbS^\ell \!\!-\! {\bbS}^{k}\bbx \bbx^\top\! {\bbS}^{\ell} \big] \nonumber\\
	&+\! \mathbb{E}\big[ \bbS^{k}\bbx \bbx^\top \cdots (\bbS \!+\! \bbE^{(\ell)}) \!-\! {\bbS}^{k}\bbx \bbx^\top\! {\bbS}^{\ell}\big] \nonumber\\
	& +\!\! \mathbb{E}\Big[\! \sum_{r_1\!=\!1}^{k}\! \sum_{r_2\!=\!1}^\ell\!\! \bbS^{k\!-\!r_1}\bbE^{(r_1)} \bbS^{r_1\!-\!1}\bbx \bbx^\top\! \bbS^{r_2\!-\!1}\bbE^{(r_2)} \bbS^{\ell\!-\!r_2}\!\Big] \!\!+\! \mathbb{E}\left[ \bbC_{k\ell}\right] \nonumber 
\end{align}
for $k,\ell \ge 1$. The first term in \eqref{proof:them135} contains the maximal power of $\bbS$; the second term includes error matrices only expanded from left-side $\bbS^{(k:0)}=(\bbS \!+\! \bbE^{(k)})\cdots (\bbS+\bbE^{(1)})$; the third term includes error matrices only expanded from right-side $\bbS^{(0:\ell)}=(\bbS \!+\! \bbE^{(1)})\cdots (\bbS+\bbE^{(\ell)})$; the fourth term collects the cross-products that include two error matrices $\bbE^{(r_1)}$ and $\bbE^{(r_2)}$, and the last term $\bbC_{k\ell}$ aggregates the sum of the remaining terms. We can rewrite \eqref{proof:them135} as
\begin{align}
	\label{proof:them14} 
	&\mathbb{E}\left[ \tilde{W}(k,\ell)\right]\!\!=\! - {\bbS}^{k}\bbx \bbx^\top\! {\bbS}^{\ell}  \!\!+\! \mathbb{E}\!\left[\! \bbS^{(k:0)}\bbx \bbx^\top\! \bbS^\ell \!\right]\!+\! \mathbb{E}\!\left[\! \bbS^{k}\bbx \bbx^\top \bbS^{(0:\ell)}\!\right] \nonumber\\
	&+\! \mathbb{E}\Big[\! \sum_{r_1=1}^{k}\! \sum_{r_2=1}^\ell\!\! \bbS^{k-r_1}\bbE^{(r_1)} \bbS^{r_1-1}\bbx \bbx^\top\! \bbS^{r_2-1}\bbE^{(r_2)} \bbS^{\ell-r_2}\!\Big] \nonumber\\
	&+\! \mathbb{E}\left[ \bbC_{k\ell}\right]. 
\end{align}
By substituting $W(k,\ell) =  {\bbS}^{k}\bbx \bbx^\top {\bbS}^{\ell}$ and \eqref{proof:them14} into \eqref{proof:them13}, we have
\begin{align} \label{proof:them15}
	&\mathbb{E}\!\left[\tr \big(\tilde{\bbu}_1^\top\! \tilde{\bbu}_1 \!-\! \bbu^\top\! \bbu \big)\right] = - 2 \mathbb{E}\!\left[\tr\big( \bbu^\top\! \bbu \!-\! \tilde{\bbu}_1^\top \bbu \big)\right] \\
	& \!+\!\!\! \sum_{k\!=\!1}^K\! \sum_{\ell\!=\!1}^K\!\!\! \alpha_k \alpha_\ell \tr \Big(\! \mathbb{E}\!\Big[\! \sum_{r_1\!=\!1}^{k}\! \sum_{r_2\!=\!1}^\ell\!\! \bbS^{k\!-\!r_1}\!\bbE^{(\!r_1\!)} \bbS^{r_1\!-\!1}\bbx \bbx^\top\! \bbS^{r_2\!-\!1}\bbE^{(\!r_2\!)} \bbS^{\ell\!-\!r_2}\!\Big] \!\Big) \nonumber \\
	&+\! \sum_{k=0}^K \sum_{\ell=0}^K \alpha_k \alpha_\ell \tr \left( \mathbb{E}\left[ \bbC_{k\ell}\right] \right), \nonumber 
\end{align}
where the second and third terms in \eqref{proof:them14} are the same and become $\tilde{\bbu}_1^\top \bbu$ when summed up over indexes $k$ and $\ell$. We then consider the three terms in \eqref{proof:them15}, respectively. For this analysis, we will use the inequality
\begin{gather} \label{proof:them16}
	\begin{split}
		{\rm tr} (\bbA \bbB) \le \frac{\| \bbA + \bbA^\top \|_2}{2}{\rm tr}(\bbB) \le \| \bbA \|_2 {\rm tr}(\bbB)
	\end{split}
\end{gather}
that holds for any square matrix $\bbA$ and positive semi-definite matrix $\bbB$ \cite{Wang1986}. The first term in \eqref{proof:them15} is the opposite of the second term in \eqref{proof:them1255} such that it cancels out when substituted into \eqref{proof:them1255}.

For the second term in \eqref{proof:them15}, we bring the trace inside the expectation and leveraging the trace cyclic property $\tr(\bbA\bbB\bbC) = \tr(\bbC\bbA\bbB) = \tr(\bbB\bbC\bbA)$ to write
\begin{align}\label{proof:them165}
	\mathbb{E}\!\Big[\! \tr \Big(\! \sum_{k\!=\!1}^K\! \sum_{\ell\!=\!1}^K\!\! \alpha_k \alpha_\ell\!\! \sum_{r_1\!=\!1}^{k}\! \sum_{r_2\!=\!1}^\ell\!\! \bbE^{(\!r_2\!)}\bbS^{k\!+\!\ell\!-\!r_1\!-\!r_2}\bbE^{(\!r_1\!)} \bbS^{r_1\!-\!1}\bbx \bbx^\top\! \bbS^{r_2\!-\!1}\! \Big)\!\Big].
\end{align}
By using Lemma \ref{lemma0:traceOperation}, we can upper bound \eqref{proof:them165} as
\begin{align}
	\label{proof:them17}
	&\mathbb{E}\!\Big[\! \tr \Big(\! \sum_{k\!=\!1}^K\! \sum_{\ell\!=\!1}^K\!\! \alpha_k \alpha_\ell\!\! \sum_{r_1\!=\!1}^{k}\! \sum_{r_2\!=\!1}^\ell\!\! \bbE^{(\!r_2\!)}\bbS^{k\!+\!\ell\!-\!r_1\!-\!r_2}\bbE^{(\!r_1\!)} \bbS^{r_1\!-\!1}\bbx \bbx^\top\! \bbS^{r_2\!-\!1}\! \Big)\!\Big] \nonumber\\
	& \le (n+K-1) C_L^2 \|\bbx\|_2^2 (1-\mu)^2 + ndC_L^2\|\bbx\|_2^2 \delta^2
\end{align}
with $d$ the maximal degree of the graph $\ccalG$. 

For the third term in \eqref{proof:them1255}, matrix $\bbC_{k\ell}$ comprises the sum of the remaining expansion terms, where each term contains at least three error matrices and at least two error matrices are different. In this context, these terms can be bounded by a factor containing at least two terms of $\{\mathbb{E}[\bbE^{(r_1)}], \mathbb{E}[\bbE^{(r_2)}], \mathbb{E}[\bbE^{(r_3)}], \mathbb{E}[{\bbE^{(r_1)}}^2], \mathbb{E}[{\bbE^{(r_2)}}^2], \mathbb{E}[{\bbE^{(r_3)}}^2]\}$ inside the trace, where $r_1 \ne r_2 \ne r_3$ are different indices. Since the frequency response $f(\bblambda)$ is bounded, the coefficients $\{ \alpha_k \}_{k=0}^K$ are also bounded. From the facts that $\| \bbS \|_2$ is bounded and $\mathbb{E}[\bbE^{(r)}] = (1-\mu)\bbS$ and $\mathbb{E}\left[ {\bbE^{(r)}}^2 \right] = (1-\mu)^2\bbS^2+\delta^2 \bbD$ from Lemma \ref{LemmmaErrorMatrix}, we have
\begin{gather} \label{proof:them18}
	\begin{split}
		\mathbb{E} \Big[{\rm tr}\Big(\!\sum_{k, \ell=0}^K\! \alpha_k \alpha_\ell \bbC_{k\ell} \!\Big)\!\Big] \!\!=\!\! \Big(\!\ccalO((1\!-\!\mu)^3) \!+\! \ccalO(\delta^2 (1\!-\!\mu))\!\Big)\|\bbx\|_2^2.
	\end{split}
\end{gather}

\smallskip
\noindent \textbf{Second term in \eqref{proof:thm12}.} By substituting the expression of noise component $\tilde{\bbu}_2$ into the second term in \eqref{proof:thm12}, we have
\begin{align} \label{proof:them19}
	& \mathbb{E}\!\left[\tr \big(\tilde{\bbu}_2^\top \tilde{\bbu}_2 \!\big)\!\right] \!=\! \mathbb{E} \Big[ \tr \Big(\!\Big(\sum_{k=1}^{K} \sum_{\kappa=k}^{K}\alpha_{\kappa} \prod_{\tau=k+1}^{\kappa} \bbS^{(\tau)} \bbn^{(k)}\Big)^\top \nonumber \\
	& \cdot \Big(\sum_{k=1}^{K} \sum_{\kappa=k}^{K}\alpha_{\kappa}\! \prod_{\tau=k+1}^{\kappa} \bbS^{(\tau)} \bbn^{(k)}\Big)\Big) \Big].
\end{align}
Since $\{\bbn^{(k)}\}_{k=1}^K$ are independent and $\mathbb{E}[\bbn^{(k)}] = \bb0$ from the zero-mean Gaussian distribution, we can re-write \eqref{proof:them19} as
\begin{align} \label{proof:them110}
	\mathbb{E}\!\left[\tr \big(\tilde{\bbu}_2^\top \tilde{\bbu}_2 \!\big)\!\right] \!= & \mathbb{E} \Big[ \tr \Big(\!\sum_{k=1}^{K} \Big(\sum_{\kappa=k}^{K} \alpha_{\kappa}\!\! \prod_{\tau=k+1}^{\kappa} \bbS^{(\tau)} \bbn^{(k)}\Big)^\top \\
	&\cdot \Big(\sum_{\kappa=k}^{K} \alpha_{\kappa}\!\! \prod_{\tau=k+1}^{\kappa} \bbS^{(\tau)} \bbn^{(k)}\Big) \Big)\Big]. \nonumber
\end{align}
We similarly represent the $k$th AirGSO $\bbS^{(k)}$ as $\bbS^{(k)} = \bbS + \bbE^{(k)}$ with $\bbE^{(k)}$ the deviation of $\bbS^{(k)}$ from $\bbS$. Substituting this representation into \eqref{proof:them110} and expanding the terms yields
\begin{align}
	\label{proof:them111} &\mathbb{E}\!\left[\tr \big(\tilde{\bbu}_2^\top \tilde{\bbu}_2 \!\big)\!\right] = \mathbb{E}\!\left[\tr \big(\bbZ\big)\!\right]\\
	&+ \mathbb{E}\Big[\!\tr \Big(\! \sum_{k=1}^{K} \sum_{\kappa,\ell=k}^K\!\alpha_{\kappa} \alpha_{\ell} \bbS^{\kappa-1}\bbn^{(k)} {\bbn^{(k)}}^\top\! \bbS^{\ell-1}\!\Big)\!\Big], \nonumber 
\end{align}
where $\bbZ$ collects the expanding terms that contain the noise matrix $\bbn^{(k_1)} {\bbn^{(k_1)}}^\top$ and at least one error matrix $\bbE^{(k_2)}$. For the second term in \eqref{proof:them111}, we can re-write it as
\begin{align}\label{proof:them112}
	\tr \Big(\! \sum_{k=1}^{K} \sum_{\kappa,\ell=k}^K\!\alpha_{\kappa} \alpha_{\ell} \bbS^{\kappa-1}\mathbb{E}\Big[\!\bbn^{(k)} {\bbn^{(k)}}^\top\Big] \bbS^{\ell-1}\!\Big).
\end{align}
Since each entry of $\bbn^{(k)}$ is independent with zero mean and standard deviation $\eps$, we can compute $\mathbb{E}\Big[\bbn^{(k)} {\bbn^{(k)}}^\top\Big]$ as
\begin{align}\label{proof:them113}
	\mathbb{E}\Big[\bbn^{(k)} {\bbn^{(k)}}^\top\Big] = \eps^2 \bbI. 
\end{align}
By substituting \eqref{proof:them113} into \eqref{proof:them112}, we can represent \eqref{proof:them112} as
\begin{align}\label{proof:them114}
	\eps^2 \tr \Big(\! \sum_{k=1}^{K} \sum_{\kappa,\ell=k+1}^K\!\alpha_{\kappa} \alpha_{\ell} \bbS^{k+\ell-2} \bbI\!\Big).
\end{align}
By further using the inequality \eqref{proof:them16}, we can upper bound it by
\begin{align}\label{proof:them115}
	\eps^2 \Big\|\!\sum_{k=1}^{K} \sum_{\kappa,\ell=k}^K\!\alpha_{\kappa} \alpha_{\ell} \bbS^{\kappa+\ell-2}\Big\|_2 {\rm tr} (\bbI).
\end{align}
We consider the matrix norm in \eqref{proof:them115}. For any matrix $\bbA$, a standard way to bound $\|\bbA\|_2$ is to obtain the inequality $\|\bbA\bba\|_2 \le A \|\bba\|_2$ that holds for any vector $\bba$ \cite{Meyer2000}. In this context, $A$ is the upper bound satisfying $\|\bbA\|_2 \le A$. Following this rationale, we consider the GFT of $\bba$ over $\bbS$ as $\bba = \sum_{i=1}^n \hat{a}_i \bbv_i$ and have
\begin{align}
	\label{proof:them116} & \Big\| \!\sum_{k\!=\!1}^K\! \sum_{\kappa, \ell\!=\!r}^K\!\!\! \alpha_\kappa \alpha_\ell \bbS^{\kappa\!+\!\ell\!-\!2} \bba \Big\|_2^2\!\!=\!\! \sum_{i\!=\!1}^n\! \hat{a}_i^2 \Big|\! \sum_{k\!=\!1}^K\! \sum_{\kappa, \ell\!=\!r}^K\!\!\! \alpha_\kappa \alpha_\ell \lambda_i^{\kappa\!+\!\ell\!-\!2}\Big|^2\!\!.
\end{align}
The expression inside the absolute value in \eqref{proof:them116} can be linked to the Lipschitz gradient of the frequency response over the air $f(\bblambda)$ as analyzed in the proof of Lemma \ref{lemma0:traceOperation}. Specifically, let $\bblambda_1 = \bblambda_2 = [\lambda_i,\ldots,\lambda_i]^\top$ be the multivariate frequency and the partial derivative of $f(\bblambda)$ w.r.t. the $r$th entry $\lambda^{(r)}$ at $\bblambda_{1:2,r}$ is
\begin{align}
	\label{proof:them117}
	\frac{\partial f(\bblambda_{1:2,r})}{\partial \lambda^{(r)}} \!=\! \sum_{k=r}^K \alpha_k \lambda_i^{k-1},\! ~\forall~r\!=\!1,\!\ldots,\!K.
\end{align}
The expression inside the absolute value in \eqref{proof:them116} can be written in the compact form and upper bounded by
\begin{align}
	\label{proof:them118}
	&\Big|\! \sum_{k=1}^K\! \sum_{\kappa, \ell=r}^K\!\!\! \alpha_\kappa \alpha_\ell \lambda_i^{\kappa\!+\!\ell\!-\!2}\Big| = \sum_{r=1}^K \Big(\frac{\partial f(\bblambda_{1:2,r})}{\partial \lambda^{(r)}}\Big)^2 \le C_L^2,
\end{align}
where the integral Lipschitz condition is used. By substituting \eqref{proof:them118} into \eqref{proof:them116} and the latter into \eqref{proof:them115}, we can upper bound the second term in \eqref{proof:them111} as
\begin{align}
	\label{proof:them119} \mathbb{E}\Big[\!\tr \Big(\! \sum_{k=1}^{K} \sum_{\kappa,\ell=k}^K\!\alpha_{\kappa} \alpha_{\ell} \bbS^{\kappa-1}\bbn^{(k)} {\bbn^{(k)}}^\top\! \bbS^{\ell-1}\!\Big)\!\Big] \le n C_L^2 \eps^2. 
\end{align}

For the first term in \eqref{proof:them111}, matrix $\bbZ$ comprises the sum of the remaining expansion terms, each of which contains the noise matrix $\bbn^{(k_1)} {\bbn^{(k_1)}}^\top$ and at least one error matrix $\bbE^{(k_2)}$. Since the frequency response $f(\bblambda)$ is bounded, the coefficients $\{ \alpha_k \}_{k=0}^K$ are also bounded. From the facts that $\| \bbS \|_2$ is bounded and $\mathbb{E}[\bbE^{(k)}] = (1-\mu)\bbS$, $\mathbb{E}\left[ {\bbE^{(k)}}^2 \right] = (1-\mu)^2\bbS^2+\delta^2 \bbD$ and $\mathbb{E}\Big[\!\bbn^{(k)} {\bbn^{(k)}}^\top\Big] = \eps^2 \bbI$, we have
\begin{gather} \label{proof:them120}
	\begin{split}
		\mathbb{E} \Big[{\rm tr}(\bbZ) \!\Big] \!\le\! \ccalO\big((1-\mu) \eps^2\big)+\ccalO\big(\delta^2 \eps^2\big). 
	\end{split}
\end{gather}

Finally, substituting the bounds for the first and second terms into \eqref{proof:thm12} and altogether into \eqref{proof:thm11}, we obtain
\begin{align} \label{proof:them125}
	&\mathbb{E}\left[\| \tilde{\bbu} - \bbu \|^2_2\right]\\
	&\le (n+K-1) C_L^2 \|\bbx\|_2^2 (1-\mu)^2 + ndC_L^2\|\bbx\|_2^2 \delta^2 + n C_L^2 \eps^2 \nonumber \\
	&+\!\ccalO\big((1\!-\!\mu)^3\big) \!+\! \ccalO\big(\delta^2 (1\!-\!\mu)\!\big) \!+ \!\ccalO\big((1\!-\!\mu) \eps^2\big)\!+\!\ccalO\big(\delta^2 \eps^2\big) \nonumber
\end{align}
completing the proof.


\section{Proof of Theorem 2}\label{Proof:Theorem2}

From the AirGNN architecture in (7) and the Lipschitz condition of the nonlinearity [cf. (18)], the output difference of the AirGNN can be upper bounded by
\begin{align} \label{eq:thm21}
	\big\|\tilde{\bbPhi} - \bbPhi\big\|_2 &= \big\| \sigma\Big(\sum_{f=1}^{F} \tilde{\bbu}^f_{L-1}\Big)\! -\!\sigma\Big(\sum_{f=1}^{F} \bbu^f_{L-1}\Big)\big\|_2\\
	&\le C_\sigma \big\| \sum_{f=1}^{F} \tilde{\bbu}^f_{L-1} - \sum_{f=1}^{F} \bbu^f_{L-1}\big\|_2, \nonumber
\end{align}
where $\tilde{\bbPhi}$ and $\bbPhi$ are concise notations of $\bbPhi_{\rm air}(\bbx;\!\bbS,\!\ccalH)$ and $\bbPhi(\bbx;\!\bbS,\!\ccalH)$, and $\tilde{\bbu}^f_{L-1}$ and $\bbu^f_{L-1}$ are the $f$th outputs of the AirGNN and nominal GNN at layer $L-1$. By applying the triangular inequality in \eqref{eq:thm21}, we have
\begin{equation} \label{eq:thm22}
	\begin{split}
		&\big\|\tilde{\bbPhi} - \bbPhi\big\|_2 \!\le\! C_\sigma\sum_{f=1}^{F} \| \tilde{\bbu}^f_{L-1}\! - \! \bbu^f_{L-1}\|_2 .
	\end{split}
\end{equation}
We consider each term $\| \tilde{\bbu}^f_{L-1}\! - \! \bbu^f_{L-1}\|_2$ separately. Denote by $\tilde{\bbu}_{L-1}^f = \tilde{\bbH}_{L}^{f}\tilde{\bbx}^f_{L-1}$ and $\bbu_{L-1}^f=\bbH_L^f\bbx_{L-1}^f$ concise notations of $\bbH_{{\rm air}, L}^{f}(\bbS)\tilde{\bbx}^f_{L-1}$ and $\bbH(\bbS)_{L}^{f}\bbx^f_{L-1}$. By adding and subtracting $\tilde{\bbH}_{L}^{f}\bbx^f_{L-1}$ inside the norm, we get
\begin{align} \label{eq:thm23}
	&\| \tilde{\bbu}^f_{L\!-\!1}\!\! - \!\! \bbu^f_{L\!-\!1}\|_2\!=\!\| \tilde{\bbH}_{L}^{f}\tilde{\bbx}^f_{L\!-\!1}\! - \!\tilde{\bbH}_{L}^{f}\bbx^f_{L\!-\!1} \!+\! \tilde{\bbH}_{L}^{f}\bbx^f_{L\!-\!1}\!- \! \bbH_L^f\bbx_{L\!-\!1}^f\|_2 \nonumber \\
	& \le \| \tilde{\bbH}_{L}^{f}\big(\tilde{\bbx}^f_{L-1}\! - \!\bbx^f_{L-1}\big)\|_2 \!+\!  \| \tilde{\bbH}_{L}^{f}\bbx^f_{L-1}\!- \! \bbH_L^f\bbx_{L-1}^f\|_2.
\end{align}
From the facts that $|f(\bblambda)|\le 1$ 
and that the noise component $\bbN_{{\rm air}, L}^f$ of $\bbH_{{\rm air}, L}^{f}(\bbS)$ is same for either $\tilde{\bbx}^f_{L-1}$ or $\bbx^f_{L-1}$, we can upper bound the first term in \eqref{eq:thm23} as
\begin{align} \label{eq:thm24}
	\| \tilde{\bbH}_{L}^{f}\big(\tilde{\bbx}^f_{L-1}\! - \!\bbx^f_{L-1}\big)\|_2 \le \| \tilde{\bbx}^f_{L-1}\! - \!\bbx^f_{L-1} \|_2.
\end{align}
By substituting \eqref{eq:thm24} into \eqref{eq:thm23} and altogether into \eqref{eq:thm22}, we have
\begin{align} \label{eq:thm25}
	&\big\|\tilde{\bbPhi} - \bbPhi \big\|_2 \\
	&\le C_\sigma\!\!\sum_{f=1}^{F}\! \| \tilde{\bbH}_{L}^{f}\bbx^f_{L-1}\!- \! \bbH_L^f\bbx_{L-1}^f\|_2 \!+\!C_\sigma\!\! \sum_{f\!=\!1}^{F}\! \| \tilde{\bbx}^f_{L\!-\!1}\! - \!\bbx^f_{L\!-\!1} \|_2.\nonumber
\end{align}

We now observe a recursion where the output difference of $\ell$th layer output is bounded by the output difference of $(\ell-1)$th layer output with an extra term (the first term in \eqref{eq:thm25}). Following the same process of \eqref{eq:thm21}-\eqref{eq:thm25}, we get
\begin{align} \label{eq:thm26}
	&\| \tilde{\bbx}^f_{L-1} - \bbx^f_{L-1} \|_2 \\
	&\le\!\! C_\sigma\!\!\!\sum_{g\!=\!1}^{F}\! \| \tilde{\bbH}_{L\!-\!1}^{fg}\bbx^g_{L\!-\!2}\!\!- \! \bbH_{L-1}^{fg}\bbx_{L-2}^g\|_2 \!+\! C_\sigma\!\!\sum_{g\!=\!1}^{F}\! \| \tilde{\bbx}^g_{L-2}\! - \!\bbx^g_{L-2} \|_2.\nonumber
\end{align}
By substituting \eqref{eq:thm26} into \eqref{eq:thm25}, we get
\begin{align} \label{eq:thm27}
	&\big\|\tilde{\bbPhi} - \bbPhi\big\|_2 \le C_\sigma\!\sum_{f=1}^{F} \| \tilde{\bbH}_{L}^{f}\bbx^f_{L-1}\!- \! \bbH_L^f\bbx_{L-1}^f\|_2\nonumber\\
	&+C_\sigma^2\sum_{f,g=1}^{F} \| \tilde{\bbH}_{L-1}^{fg}\bbx^g_{L-2}\!- \! \bbH_{L-1}^{fg}\bbx_{L-2}^g\|_2 \\
	&+ C_\sigma^2 F \sum_{g=1}^{F} \| \tilde{\bbx}^g_{L-2}- \bbx^g_{L-2} \|_2. \nonumber
\end{align}
Unrolling this recursion until the input layer yields
\begin{align} \label{eq:thm28}
	&\big\|\tilde{\bbPhi}-\bbPhi\big\|_2 \!\le\! \frac{C_\sigma}{F}\sum_{f,g=1}^{F}\! \| \tilde{\bbH}_{L}^{f}\bbx^f_{L-1}\!- \! \bbH_L^f\bbx_{L-1}^f\|_2 \nonumber \\
	&+ \sum_{\ell = 2}^{L-1} C_\sigma^{L+1-\ell}F^{L - 1 - \ell}\! \sum_{f,g=1}^{F}\! \| \tilde{\bbH}_{\ell}^{fg}\bbx^g_{\ell\!-\!1}\!- \! \bbH_{\ell}^{fg}\bbx_{\ell\!-\!1}^g\|_2 \\
	&+ C_\sigma^L F^{L-3}\!\!\! \sum_{f,g=1}^{F} \| \tilde{\bbH}_{1}^{f}\bbx_{0}\!-\! \bbH_1^f\bbx_{0} \|_2, \nonumber
\end{align}
where $\bbx_0 = \bbx$ is the input signal. From \eqref{eq:thm28} and the inequality of arithmetic and geometric means, the square of $\big\|\tilde{\bbPhi}- \bbPhi\big\|_2$ is bounded as
\begin{align} \label{eq:thm29}
	&\big\|\tilde{\bbPhi} \!-\! \bbPhi\big\|^2_2 \le L C_\sigma^2\!\sum_{f,g=1}^{F} \| \tilde{\bbH}_{L}^{f}\bbx^f_{L-1}\!- \! \bbH_L^f\bbx_{L-1}^f\|_2^2 \\
	&+ L \sum_{\ell = 2}^{L-1}\!C_\sigma^{2L+2-2\ell}\!F^{2L - 2\ell} \!\sum_{f,g=1}^{F}\!\| \tilde{\bbH}_{\ell}^{fg}\bbx^g_{\ell\!-\!1}\!- \! \bbH_{\ell}^{fg}\bbx_{\ell\!-\!1}^g\|_2^2 \nonumber\\
	&+ L C_\sigma^{2L} \!F^{2L-4}\!\sum_{f,g=1}^{F}\! \| \tilde{\bbH}_{1}^{f}\bbx_{0}\!-\! \bbH_1^f\bbx_{0} \|_2^2.\nonumber
\end{align}

We now consider $\| \tilde{\bbH}_{L}^{f}\bbx^f_{L-1}\!- \! \bbH_{L}^{f}\bbx_{L-1}^f\|_2^2$ in the first term of \eqref{eq:thm29}. Using the result of Theorem 1, we have
\begin{align} \label{eq:thm210}
	&\mathbb{E}\big[ \| \tilde{\bbH}_{L}^{f}\bbx^f_{L\!-\!1}\!\!- \! \bbH_{L}^{f}\bbx_{L\!-\!1}^f\|_2^2 \big] \!\!\le\!\! \big(\!C_1 (\!1\!-\!\mu\!)^2 \!\!+\! C_2 \delta^2\big)\! \| \bbx^f_{L\!-\!1} \|_2^2 \!\!+\! C_3 \eps^2 \nonumber \\
	&+\! \Big(\!\ccalO((1\!-\!\mu)^3) \!+\! \ccalO(\delta^2 (1\!-\!\mu)) \!+ \!\ccalO\big((1\!-\!\mu) \eps^2\big)\!+\!\ccalO\big(\delta^2 \eps^2\big)\!\Big). 
\end{align}
where $C_1, C_2$ and $C_3$ are stability constants. For the square norm $\| \bbx^f_{L-1} \|_2^2$ in the bound of \eqref{eq:thm210}, we observe
\begin{align}\label{eq:thm211}
	&\| \bbx^f_{L-1} \|_2^2\le \| \sigma\big(\sum_{g=1}^F\bbu_{L-2}^{fg}\big) \|_2^2 \\
	&\le C_\sigma^2 F \sum_{g=1}^F \| \bbu_{L-2}^{fg} \|_2^2 \le C_\sigma^2 F \sum_{g=1}^F \left\| \bbx_{L-2}^{g} \right\|_2^2, \nonumber
\end{align}          
where the Lipschitz condition of the nonlinearity, the triangular inequality and the fact $|f(\bblambda)|\le 1$ are used. Following this process yields
\begin{equation}\label{eq:thm212}
	\begin{split}
		\| \bbx^f_{L-1} \|_2^2\le C_\sigma^{2L-2} F^{2L-4} \left\| \bbx \right\|_2^2 .
	\end{split}
\end{equation}        
By substituting \eqref{eq:thm212} into \eqref{eq:thm210}, we get
\begin{align} \label{eq:thm213}
	&\mathbb{E}\!\big[ \| \tilde{\bbH}_{L}^{f}\bbx^f_{L\!-\!1}\!\!-\!\! \bbH_{L}^{f}\bbx_{L\!-\!1}^f\!\|_2^2 \big]\! \!\!\le\!\! C \!\big(\!C_1\! (\!1\!\!-\!\!\mu)^2 \!\!+\!\! C_2 \delta^2\!\big)\! \| \bbx^f_{L\!-\!1}\! \|_2^2 \!+\! C C_3 \eps^2 \nonumber \\
	&+\!\! \Big(\!\!\ccalO((1\!-\!\mu)^3\!) \!\!+\!\! \ccalO(\delta^2 (1\!-\!\mu)\!) \!\!+\! \!\ccalO\big(\!(1\!-\!\mu) \eps^2\big)\!\!+\!\!\ccalO\big(\delta^2 \eps^2\big)\!\!\Big)\!, 
\end{align}
where $C=C_\sigma^{2L-2} F^{2L-4}$ is the stability constant resulting from \eqref{eq:thm212}. We can bound $\| \tilde{\bbH}_{\ell}^{fg}\bbx^g_{\ell-1}\!- \! \bbH_{\ell}^{fg}\bbx_{\ell-1}^g\|_2^2$ and $\| \tilde{\bbH}_{1}^{f}\bbx^f_{0}- \bbH_1^f\bbx_{0}^f \|_2^2$ in the second and third terms of \eqref{eq:thm29} by following \eqref{eq:thm210}-\eqref{eq:thm213} since they have a similar form. By substituting these bounds into \eqref{eq:thm29} and using the linearity of the expectation in \eqref{eq:thm29}, we complete the proof
\begin{align}
	&\mathbb{E}\big[ \big\|\tilde{\bbPhi} \!-\! \bbPhi\big\|^2_2\big] \!\le\! C \big(C_1 (1\!-\!\mu)^2 \!+\! C_2 \delta^2\big) \| \bbx \|_2^2 \!+\! C C_3 \eps^2 \\
	&+\! \Big(\ccalO((1\!-\!\mu)^3) \!\!+\!\! \ccalO(\delta^2 (1\!-\!\mu)) \!+ \!\ccalO\big((1\!-\!\mu) \eps^2\big)\!+\!\ccalO\big(\delta^2 \eps^2\big)\!\Big), \nonumber
\end{align}
where $C=C_\sigma^{2L} L^2 F^{2L-2}$ is the stability constant resulting from the nonlinearity and multi-layered architecture.


\section{Proof of Theorem 3}\label{Proof:Theorem3}

The randomness of the objective function $\ccalL(\mathcal{R},\bbS,\ccalA | \{\bbS^{(k)}\}_{k})$ for AirGNNs with CSI is determined by the sequence of AirGSOs $\{\bbS^{(k)}\}_{k}$ sampled from the RES model with probability $\rho$ [Def. 4]. This indicates that sampling an objective function $\ccalL(\mathcal{R}_t,\bbS,\ccalA_t | \{\bbS_t^{(k)}\}_{k})$ is equivalent to sampling an AirGNN architecture $\bbPhi_{\rm sto}(\cdot, \bbS, \ccalA | \{\bbS_t^{(k)}\}_{k})$ over the dataset $\ccalR_t$. With this observation in mind, we compare Algorithm 2 with Algorithm 4 to find that steps $3-7$ in the proposed training procedure [Alg. 2] is equivalent to step $4$ in the SGD on problem (40) [Alg. 4]. Since the other steps between Algorithms 2 and 4 correspond to each other, the proposed training procedure of AirGNNs with CSI is equivalent to the SGD on problem (40) with the training data batch-size $|R_t|$ and the AirGSO batch-size $1$.

For the objective function $\ccalL(\mathcal{R},\bbS,\ccalA | \bbh, \bbn)$ of AirGNNs without CSI, we follow a similar process for proof. Specifically, $\ccalL(\mathcal{R},\bbS,\ccalA | \bbh, \bbn)$ is determined by the AirGNN architecture, which is, in turn, determined by the channel coefficients $\bbh$ and the Gaussian noise $\bbn$. This indicates that sampling an objective function $\ccalL(\mathcal{R}_t,\bbS,\ccalA_t | \bbh_t, \bbn_t)$ is equivalent to sampling an AirGNN architecture $\bbPhi_{\rm air}(\cdot, \bbS, \ccalA | \bbh_t, \bbn_t)$, i.e., sampling $\bbh_t$ and $\bbn_t$, over the dataset $\ccalR_t$. By leveraging this observation and comparing Algorithm 3 with Algorithm 4, steps $3$-$6$ in the proposed training procedure [Alg. 3] is equivalent to step $4$ in the SGD on problem (41) [Alg. 4]. Since the other steps are the same, we conclude that the proposed training procedure of AirGNNs without CSI is equivalent to the SGD on problem (41) with the training data batch-size $|R_t|$ and the communication channel batch-size $1$.


\section{Proof of Theorem 4}\label{Proof:Theorem4}

By using the Taylor expansion of $\mathbb{E} \left[ \bar{\ccalL}(\ccalA_{t+1}) \right]$ at $\ccalA_t$, we can represent $\mathbb{E} \left[ \bar{\ccalL}(\ccalA_{t+1}) \right]$ as
\begin{align} \label{eq:prthm11}
	&\mathbb{E}\!\left[ \bar{\ccalL}(\ccalA_{t\!+\!1})\right] \!=\! \mathbb{E}\!\left[ \bar{\ccalL}(\ccalA_{t}) \right. \!+\! \nabla_\ccalA \bar{\ccalL}(\ccalA_{t})^\top \!\left(\ccalA_{t+1}-\ccalA_t\right) \\
	&~~~~~~~~~~~~~~+\frac{1}{2}\left( \ccalA_{t+1}\!-\!\ccalA_t\right)^\top \nabla^2_\ccalA \bar{\ccalL} (\tilde{\ccalA}_{t}) \left( \ccalA_{t+1}-\ccalA_t\right) \big],\! \nonumber
\end{align}
where $\tilde{\ccalA}_{t}$ is in the line segment joining $\ccalA_{t+1}$ and $\ccalA_{t}$, which truncates the expansion series at the Hessian term. From the fact that $\bbv^\top \bbM \bbv \le \lambda_{\max}(\bbM) \| \bbv \|_2^2$ for any vector $\bbv$ and matrix $\bbM$ with $\lambda_{\max}(\bbM)$ the maximal eigenvalue of $\bbM$ and the Lipschitz continuity $\| \nabla^2_\ccalA \bar{\ccalL} (\tilde{\ccalA}_{t}) \|_2 \le C_\ell$ of Assumption 2, we have
\begin{align} \label{eq:prthm12}
	&\mathbb{E}\!\left[ \bar{\ccalL}(\ccalA_{t+1})\right] \\
	&\le \mathbb{E}\big[ \bar{\ccalL}(\ccalA_{t})\! +\! \nabla_\ccalA \bar{\ccalL}(\ccalA_{t})^\top ( \ccalA_{t+1}\!-\!\ccalA_t) \!+\! \frac{C_\ell}{2}\! \| \ccalA_{t+1}\!-\!\ccalA_t \|_2^2\big]. \nonumber
\end{align}
Substituting the update rule of the AirGNN $\ccalA_{t+1} = \ccalA_t - \eta_t \nabla_\ccalA \ccalL(\ccalR_t, \bbS, \ccalA_t | \bbh_t, \bbn_t)$ into \eqref{eq:prthm12} yields\footnote{This convergence proof considers the training procedure of AirGNNs without CSI $\bbPhi_{\rm air}(\cdot, \bbS, \ccalA | \bbh_t, \bbn_t)$. The convergence of AirGNNs with CSI $\bbPhi_{\rm sto}(\cdot, \bbS, \ccalA | \{\bbS_t^{(k)}\}_{k})$ can be proved by following the same process.}
\begin{align} \label{eq:prthm125}
	\mathbb{E}\!\left[ \bar{\ccalL}(\ccalA_{t+1})\right]\! &\le \mathbb{E}\big[ \bar{\ccalL}(\ccalA_{t}) \!+\!\! \frac{\eta_t^2 C_\ell}{2} \| \nabla_\ccalA \ccalL(\ccalR_t, \bbS, \ccalA_t | \bbh_t, \bbn_t) \|_2^2 \nonumber\\
	&-\! \eta_t \nabla_\ccalA \bar{\ccalL}(\ccalA_t)^\top\! \nabla_\ccalA \ccalL(\ccalR_t, \bbS, \ccalA_t | \bbh_t, \bbn_t) \big]\!.   
\end{align}
By using the linearity of expectation and the identity $\mathbb{E}[\nabla_\ccalA \ccalL(\ccalR_t, \bbS, \ccalA_t | \bbh_t, \bbn_t)]=\mathbb{E}[\nabla_\ccalA \bar{\ccalL}(\ccalA_t)]$, we can re-write \eqref{eq:prthm125} as
\begin{align} \label{eq:prthm13}
	\mathbb{E}\left[ \bar{\ccalL}( \ccalA_{t+1})\right] &\le \mathbb{E}\big[ \bar{\ccalL}( \ccalA_t) ] - \eta_t \mathbb{E}\big[\| \nabla_\ccalA \bar{\ccalL}(\ccalA_t) \|_2^2\big] \\
	&+ \frac{\eta_t^2 C_\ell}{2} \mathbb{E} \big[\| \nabla_\ccalA \ccalL(\ccalR_t, \bbS, \ccalA_t | \bbh_t, \bbn_t) \|_2^2\big].   \nonumber  
\end{align}
From the gradient bound of Assumption 3, we can upper bound the third term of \eqref{eq:prthm13} as
\begin{equation} \label{eq:prthm14}
	\begin{split}
		\frac{\eta_t^2 C_\ell}{2} \mathbb{E} \big[\| \nabla_\ccalA \ccalL(\ccalR_t, \bbS, \ccalA_t | \bbh_t, \bbn_t) \|_2^2\big] \le \frac{\eta_t^2 C_\ell C_g^2}{2}.
	\end{split}
\end{equation}
By substituting \eqref{eq:prthm14} into \eqref{eq:prthm13}, we have
\begin{equation} \label{eq:prthm15}
	\begin{split}
		\mathbb{E}[\| \nabla_\ccalA \bar{\ccalL}(\ccalA_t) \|_2^2] \le \frac{1}{\eta_t} \mathbb{E}\big[ \bar{\ccalL}( \ccalA_t) \!-\! \bar{\ccalL}(\ccalA_{t+1}) \big] \!+\! \frac{\eta_t C_\ell C_g^2}{2}.
	\end{split}
\end{equation}
Since \eqref{eq:prthm15} holds for all iterations $t = 0, \ldots, T-1$, by considering a constant step-size $\eta_t = \eta$ and summing up \eqref{eq:prthm15} over all iterations, we have
\begin{align}\label{eq:prthm155}
	&\sum_{t=0}^{T-1} \mathbb{E}[\| \nabla_\ccalA \bar{\ccalL}(\ccalA_t) \|^2] \\
	&\le \frac{1}{\eta} \mathbb{E}\big[ \bar{\ccalL}( \ccalA_0) - \bar{\ccalL}(\ccalA_{T}) \big]+ \frac{ \eta T C_\ell C_g^2}{2}.\nonumber
\end{align}
For the optimal parameters $\ccalA^*$, we have $\bar{\ccalL} (\ccalA^*)\! \le\! \bar{\ccalL} ( \ccalA_{T})$. This allows to bound \eqref{eq:prthm155} as
\begin{align} \label{eq:prthm16}
	&\min_t \mathbb{E}\big[\| \nabla_\ccalA \bar{\ccalL}(\ccalA_t) \|^2\big] \le \frac{1}{T} \sum_{t=0}^{T-1} \mathbb{E}\big[\| \nabla_\ccalA \bar{\ccalL}(\ccalA_t) \|^2\big] \nonumber \\
	& \le \frac{1}{T \eta} \left( \bar{\ccalL} (\ccalA_0) - \bar{\ccalL} (\ccalA^*) \right)+ \frac{ \eta C_\ell C_g^2}{2}.
\end{align}
By further setting the constant step-size as
\begin{equation} \label{eq:prthm165}
	\begin{split}
		\eta \!=\! \sqrt{\frac{2\left( \bar{\ccalL} (\ccalA_0) - \bar{\ccalL} (\ccalA^*)\right)}{T C_\ell C_g^2 }}
	\end{split}
\end{equation}
and substituting it into \eqref{eq:prthm16}, we complete the proof
\begin{equation} \label{eq:prthm17}
	\begin{split}
		&\min_t \mathbb{E}[\| \nabla_\ccalA \bar{\ccalL}(\ccalA_t) \|^2] \le \frac{C}{\sqrt{T}},
	\end{split}
\end{equation}
where $C\!=\! \sqrt{2\!\left( \bar{\ccalL} (\ccalA_0) - \bar{\ccalL} (\ccalA^*)\right)\!C_\ell}C_g$ is a constant.

\section{Proof of Theorem 5}\label{Proof:Theorem5}

Since the AirGNN is constructed with the AirGF, we first characterize the variance of the AirGF output and then extend the result to the AirGNN. Let $\tilde{\bbu} = \bbH_{\rm air}(\bbS)\bbx = \tilde{\bbu}_1 + \tilde{\bbu}_2$ be the AirGF output [cf. (6)], where $\tilde{\bbu}_1 = \bbP_{\rm air}(\bbS, \bbx)$ is the signal component and $\tilde{\bbu}_2 = \bbN_{\rm air}(\bbS, \bbx)$ is the noise component, and $\bar{\bbu} = \mathbb{E}[\bbH_{\rm air}(\bbS)\bbx] = \bar{\bbu}_1 + \bar{\bbu}_2$ be the expected AirGNN output, where $\bar{\bbu}_1$ is the expected signal component and $\bar{\bbu}_2$ is the expected noise component. Since the noise $n$ follows a zero-mean Gaussian distribution, we have $\bar{\bbu}_2 = \mathbb{E}[\tilde{\bbu}_2] = \bb0$. In this context, the variance of the AirGNN output is 
\begin{align} \label{eq:prop331}
	&{\rm var}[\tilde{\bbu}] = \mathbb{E} \left[ \tr \left( \tilde{\bbu} \tilde{\bbu}^{\top} - \bar{\bbu} \bar{\bbu}^{\top} \right) \right]\\
	& =\! \mathbb{E} \left[ \tr \left( \tilde{\bbu}_1 \tilde{\bbu}_1^{\top} - \bar{\bbu}_1 \bar{\bbu}_1^{\top} \right) \right] \!+\! \mathbb{E}[\tr \big(\tilde{\bbu}_2^\top \tilde{\bbu}_2 \big)] \nonumber \\
	& =\!\!\! \sum_{k, \ell=0}^K\!\!\!\! \alpha_k \alpha_\ell \!\left( \mathbb{E}\! \left[ \tr \!\left( W(k,\ell) \right) \right] \!-\!  \mathbb{E}\! \left[ \tr\! \left( \bar{W}(k,\ell) \right) \right] \right) \!+\! \mathbb{E}[\tr \big(\tilde{\bbu}_2^\top \tilde{\bbu}_2 \big)], \nonumber
\end{align}
where $W(k,\ell)= \prod_{\kappa=0}^k\bbS^{(\kappa)} \bbx \bbx^{\top} \prod_{\kappa=0}^\ell \bbS^{(\ell - \kappa)}$ and $\bar{W}(k, \ell)= \bar{\bbS}^k  \bbx \bbx^{\top} \bar{\bbS}^\ell$, $\bar{\bbS} = \mathbb{E}[\bbS^{(k)}]$ is the expected shift operator, $\bbS^{(0)}=\bbI$ is the identity matrix by default, and the linearity of the trace and expectation and the symmetry of the shift operators $\bbS^{(k)}$ and $\barbS$ are used. In the following proof, we denote by $\lceil k\ell \rceil = \max(k,\ell)$ and $\lfloor k\ell \rfloor = \min(k,\ell)$ to simplify notation. By representing $\bbS^{(k)} = \barbS + \bbE^{(k)}$ with $\bbE^{(k)}$ the deviation of $\bbS^{(k)}$ from the mean $\barbS$ and substituting the latter into $W(k,\ell)$, we have 
\begin{align}
	\label{eq:prop33Ts1} &\mathbb{E}\left[ W(k,\ell)\right] = \mathbb{E}\left[ (\barbS + \bbE^{(k)})\cdots \bbx \bbx^\top \cdots (\barbS + \bbE^{(\ell)})\right]\\
	&=\!{\barbS}^{k}\bbx \bbx^\top\! {\barbS}^{\ell} \!\!+\!\! \mathbb{E}\!\!\left[\! \sum_{r=1}^{\lfloor k\ell \rfloor} \!\barbS^{k\!-\!r}\bbE^{(\!r\!)} \barbS^{r\!-\!1}\bbx \bbx^\top\! \barbS^{r\!-\!1}\bbE^{(\!r\!)} \barbS^{\ell\!-\!r}\!\right] \!\!\!+\!\! \mathbb{E}\!\left[ \bbC_{k\ell}\right]\!, \nonumber 
\end{align}
where the fact $\mathbb{E}[\bbE^{(k)}] = \bb0$ is used. The first term in \eqref{eq:prop33Ts1} results 
from the maximal power of $\barbS$ in the products; the second term captures all cross-products with at most two deviation matrices $\bbE^{(r)}$, where we note that $\mathbb{E}\left[\bbE^{(k)} \bbE^{(\ell)}\right] = \mathbb{E}[\bbE^{(k)}]\mathbb{E}[\bbE^{(\ell)}] = \bb0$ for $k \ne \ell$ due to independence and 
the terms with $r > \lfloor k \ell\rfloor$ are null due to the presence of a single expectation $\mathbb{E}[\bbE^{(r)}] = \bb0$; the third term $\bbC_{k\ell}$ is the sum of the remaining terms. By substituting $\tr ( \mathbb{E}[ \bar{W}(k,\ell)]) = \tr ( {\barbS}^{k}\bbx \bbx^\top {\barbS}^{\ell})$ and \eqref{eq:prop33Ts1} into \eqref{eq:prop331}, we have
\begin{align} \label{eq:prop332}
	&{\rm var}[\bbu] =\sum_{k=0}^K \sum_{\ell=0}^K \alpha_k \alpha_\ell \tr \left( \mathbb{E}\left[ \bbC_{k\ell}\right] \right)\!+\! \mathbb{E}[\tr \big(\tilde{\bbu}_2^\top \tilde{\bbu}_2 \big)]\\
	& \! +\!\sum_{k=1}^K \sum_{\ell=1}^K\! \alpha_k \alpha_\ell \tr\! \left(\!\! \mathbb{E}\!\left[\! \sum_{r=1}^{\lfloor k\ell \rfloor} \barbS^{k-r}\bbE^{(r)} \barbS^{r\!-\!1}\bbx \bbx^\top \barbS^{r-1}\bbE^{(r)} \barbS^{\ell-r}\!\right] \!\right)\!\!\nonumber.
\end{align}
We now analyze the three terms in \eqref{eq:prop332} separately. For the following analysis, we similarly need the inequality
\begin{gather} \label{eq:prop3ineq0}
	\begin{split}
		{\rm tr} (\bbA \bbB) \le \frac{\| \bbA + \bbA^\top \|_2}{2}{\rm tr}(\bbB) \le \| \bbA \|_2 {\rm tr}(\bbB)
	\end{split}
\end{gather}
for any square matrix $\bbA$ and positive semi-definite matrix $\bbB$.

$\textbf{Third term in \eqref{eq:prop332}.}$ We bring the trace inside the expectation due to their linearity and leverage the trace cyclic property $\tr(\bbA\bbB\bbC) = \tr(\bbC\bbA\bbB) = \tr(\bbB\bbC\bbA)$ to have
\begin{align}
	\label{eq:prop33} &\mathbb{E}\left[ \tr\! \left( \sum_{k,\ell=1}^K \alpha_k \alpha_\ell \sum_{r=1}^{\lfloor k\ell \rfloor} \barbS^{k-r}\bbE^{(r)} \barbS^{r-1}\bbx \bbx^\top \barbS^{r-1}\bbE^{(r)} \barbS^{\ell-r} \right) \!\right]\nonumber \\
	&=\!\mathbb{E}\!\left[ \! \sum_{r=1}^K\! \tr\! \left( \!\sum_{k, \ell\!=\!r}^K \!\!\! \alpha_k \alpha_\ell \bbE^{(\!r\!)} \barbS^{k+\ell-2r}\bbE^{(\!r\!)} \barbS^{r-1}\bbx \bbx^\top\! \barbS^{r\!-\!1}\!\!\! \right)\!\!\right]\!\!\!,
\end{align}
where the terms are rearranged to change the sum limits. Since both matrices $\sum_{k, \ell=r}^K \alpha_k \alpha_\ell \bbE^{(r)} \barbS^{k+\ell-2r}\bbE^{(r)} = \big(\sum_{k=r}^K \alpha_k \barbS^{k-r}\bbE^{(r)}\big)^\top \big(\sum_{k=r}^K \alpha_k \barbS^{k-r}\bbE^{(r)}\big)$ and $\barbS^{r-1}\bbx \bbx^\top \barbS^{r-1}$ are positive semi-definite, we use the Cauchy-Schwarz inequality $\tr(\bbA \bbB) \le \tr(\bbA)\tr(\bbB)$ \cite{Zhang1999} to upper bound \eqref{eq:prop33} by
\begin{align}
	\label{eq:prop34} \!\mathbb{E}\!\left[ \!\sum_{r=1}^K\! \sum_{k, \ell=r}^K \!\!\alpha_k \alpha_\ell \tr\! \left( \bbE^{(r)} \barbS^{k+\ell-2r}\bbE^{(r)}\! \right)\! \tr\! \left( \barbS^{r-1}\bbx \bbx^\top \barbS^{r-1} \right)\!\right]\!\!.
\end{align}
We now represent the signal $\bbx$ in the frequency domain over the expected graph. Specifically, let $\barbS = \barbV\barbLambda\barbV^\top$ be the eigendecomposition of $\barbS$ with eigenvectors $\barbV = [\barbv_1,\ldots,\barbv_n]^\top$ and eigenvalues $\barbLambda = \text{diag}(\bar{\lambda}_1, \ldots, \bar{\lambda}_n)$. By substituting the graph Fourier expansion $\bbx = \sum_{i=1}^n \hat{x}_i \barbv_i$ into $\tr\! \left( \barbS^{r-1}\bbx \bbx^\top \barbS^{r-1} \right)$, we get
\begin{gather}\label{eq:prop35}
	\tr\! \left( \barbS^{r-1}\bbx \bbx^\top \barbS^{r-1} \right) \!=\! \sum_{i=1}^n \hat{x}_i^2 \bar{\lambda}_i^{2r-2} \tr\! \left(\! \barbv_i \barbv_i^\top\! \right) \!=\! \sum_{i=1}^n \hat{x}_i^2 \bar{\lambda}_i^{2r\!-\!2},
\end{gather}
where $\tr(\bbv_i\bbv_i^\top) = 1$ for $i=1,\ldots,n$ due to the orthonormality of eigenvectors. By substituting \eqref{eq:prop35} into \eqref{eq:prop34}, we get
\begin{align}
	\label{eq:prop36} \sum_{i\!=\!1}^n\! \hat{x}_i^2 \mathbb{E}\!\!\left[ \! \sum_{r\!=\!1}^K \sum_{k, \ell=r}^K\! \alpha_k \alpha_\ell \bar{\lambda}_i^{2r-2} \tr\! \left( \bbE^{(r)} \barbS^{k+\ell-2r}\bbE^{(r)} \right)\!\right]\!.
\end{align}
We again use the trace cyclic property to write $\tr(\bbE_r\barbS^{k+\ell-2}\bbE_r) = \tr(\barbS^{k+\ell-2}\bbE_r^2)$ and the linearity of the expectation in \eqref{eq:prop36} to have 
\begin{align} \label{eq:prop37} 
	& \sum_{i = 1}^n\hat{x}_i^2\! \sum_{r\!=\!1}^K\!\tr\! \left( \sum_{k, \ell=r}^K \alpha_k \alpha_\ell \bar{\lambda}_i^{2r-2} \barbS^{k+\ell-2r} \mathbb{E}\!\left[ {\bbE^{(r)}}^2\right] \right).
\end{align}
By following the same steps in Lemma \ref{LemmmaErrorMatrix}, we have $\mathbb{E}\big[ {\bbE^{(k)}}^2 \big] =  \delta^2 \bbD$ with $\bbD$ the degree matrix of the graph. By substituting this result into \eqref{eq:prop37} and using inequality \eqref{eq:prop3ineq0} with the positive semi-definite matrix $\bbD$, we have
\begin{align}
	\label{eq:prop39}
	& \delta^2\sum_{i = 1}^n\hat{x}_i^2\! \sum_{r\!=\!1}^K\!\tr\! \left( \sum_{k, \ell=r}^K \alpha_k \alpha_\ell \bar{\lambda}_i^{2r-2} \barbS^{k+\ell-2r}\bbD \right)\\
	& \le \delta^2 \sum_{i = 1}^n\hat{x}_i^2 \big\| \sum_{r\!=\!1}^K\! \sum_{k, \ell=r}^K\! \alpha_k \alpha_\ell \bar{\lambda}_i^{2r-2} \barbS^{k+\ell-2r} \big\|_2 \tr\! \left( \bbD \right), \nonumber
\end{align}
where $\tr\left( \bbD \right)=\sum_{i=1}^n d_i \le nd$ and $d$ is the maximal degree of the graph. 

We then upper bound the filter matrix norm in \eqref{eq:prop39}. We use the same procedure to bound the norm of a matrix $\bbA$, i.e., to bound the norm of $\|\bbA\bba\|_2$ as $\|\bbA\bba\|_2 \le A \|\bba\|_2$ for any vector $\bba$ \cite{Meyer2000}. Following this rationale, we consider the GFT of a vector $\bba$ on the expected graph as $\bba = \sum_{j=1}^n \hat{a}_j \barbv_j$ where $\{ \barbv_j \}_{j=1}^n$ are orthonormal. Then, we have
\begin{align}
	\label{eq:prop310} & \!\big\| \!\sum_{r\!=\!1}^K\!\! \sum_{k, \ell\!=\!r}^K\!\! \alpha_k \alpha_\ell\! \bar{\lambda}_i^{2r\!-\!2}\! \barbS^{k\!+\!\ell\!-\!2r}\! \bba \big\|_2^2 \!\!=\!\!\! \sum_{j\!=\!1}^n\!\! \hat{a}_j^2 \big|\! \sum_{r\!=\!1}^K\!\! \sum_{k, \ell\!=\!r}^K\!\!\! \alpha_k \alpha_\ell\! \bar{\lambda}_i^{2r\!-\!2}\! \bar{\lambda}_j^{k\!+\!\ell\!-\!2r}\big|^2\!\!.
\end{align}
Consider now the expression inside the absolute value in \eqref{eq:prop310}. This expression can be similarly linked to the Lipschitz gradient of the filter frequency response over the air $f(\bblambda)$ [Def. 2]. Specifically, the partial derivative of $f(\bblambda)$ w.r.t. the $r$th entry $\lambda^{(r)}$ of $\bblambda$ is 
\begin{align}
	\label{eq:prop31051} 
	\frac{\partial f(\bblambda)}{\partial \lambda^{(r)}} \!=\! \sum_{k=r}^K \alpha_k \lambda_{k:(r+1)} \lambda_{(r-1):1}
\end{align}
for $r=1,\ldots,K$, where $\lambda_{K:(r+1)} = \lambda^{(K)} \cdots \lambda^{(r+1)}$ and $\lambda_{(r-1):1}=\lambda^{(r-1)}\cdots \lambda^{(1)}$. Consider two multivariate frequencies $\bar{\bblambda}_i = [\bar{\lambda}_i,\ldots,\bar{\lambda}_i]^\top$ and $\bar{\bblambda}_j = [\bar{\lambda}_j,\ldots,\bar{\lambda}_j]^\top$. The $r$th entry of the Lipschitz gradient is 
\begin{align}\label{eq:prop310515} 
	[\nabla_L h(\bar{\bblambda}_i, \bar{\bblambda}_j)]_r = \sum_{k=r}^K \alpha_k \bar{\lambda}_{j}^{k-r} \bar{\lambda}_{i}^{r-1}.
\end{align}
By comparing \eqref{eq:prop310} and \eqref{eq:prop310515}, we observe that the expression inside the absolute value in \eqref{eq:prop310} can be represented as the square norm of $K$ partial derivatives
\begin{align}\label{eq:prop311} 
	&\sum_{r\!=\!1}^K\! \sum_{k, \ell=r}^K\! \alpha_k \alpha_\ell \bar{\lambda}_i^{2r-2} \bar{\lambda}_j^{k+\ell-2r} \\
	&=  \sum_{r\!=\!1}^K [\nabla_L f(\bar{\bblambda}_i, \bar{\bblambda}_j)]_r^2 = \| \nabla_L f(\bar{\bblambda}_i, \bar{\bblambda}_j) \|_2^2. \nonumber
\end{align}
Since the filter frequency response $f(\bblambda)$ is integral Lipschitz w.r.t. $C_L$ [Def. 3], we can upper bound \eqref{eq:prop311} as
\begin{align}
	\label{eq:prop3115} & \big| \sum_{r\!=\!1}^K\! \sum_{k,\ell=r}^K\! \alpha_k \alpha_\ell \bar{\lambda}_i^{2r-2} \bar{\lambda}_j^{k+\ell-2r}\big|^2  \le C_L^4,
\end{align}
which implies the norm of the filter matrix in \eqref{eq:prop310} is upper bounded by $C_L^2$. By substituting this norm bound into \eqref{eq:prop39} and altogether into \eqref{eq:prop33}, we have
\begin{align}
	\label{eq:prop312} &\mathbb{E}\left[ \sum_{k, \ell=1}^K \alpha_k \alpha_\ell \sum_{r=1}^{\lfloor k\ell \rfloor} \tr \left( \barbS^{k-r}\bbE^{(r)} \barbS^{r-1}\bbx \bbx^\top \barbS^{r-1}\bbE^{(r)} \barbS^{\ell-r} \right) \right]\nonumber \\
	& \le n d C_L^2 \delta^2 \sum_{i=1}^N \hat{x}_i^2 = n d C_L^2 \| \bbx \|^2_2 \delta^2.
\end{align}

$\textbf{First term in \eqref{eq:prop332}.}$ Matrix $\bbC_{k\ell}$ comprises the sum of the remaining expansion terms. Each of these terms is a quadratic form of the error matrices $\bbE^{(k)}$, $\bbE^{(\ell)}$ with $k \neq \ell$; i.e., it is of the form $f_1(\barbS)\bbE^{(k)} f_2(\barbS)\bbE^{(k)}f_3(\barbS)\bbE^{(\ell)} f_4(\barbS)\bbE^{(\ell)}$ for some functions $f_1(\cdot), ..., f_4(\cdot)$ that depend on the expected shift operator and filter parameters. Each of these double-quadratic terms can be bounded by a factor containing at least two terms $\tr\left(\mathbb{E}\big[{\bbE^{(k)}}^2\big]\right)$ and $\tr\left(\mathbb{E}\big[{\bbE^{(\ell)}}^2\big]\right)$. Since the frequency response $f(\bblambda)$, the filter parameters $\{ \alpha_k \}_{k=0}^K$ and $\| \barbS \|_2$ are bounded, we can bound the first term in \eqref{eq:prop332} by
\begin{gather} \label{eq:prop3125}
	\begin{split}
		\mathbb{E} \left[\sum_{k, \ell=0}^K \alpha_k \alpha_\ell \bbC_{k\ell} \right] = \ccalO(\delta^4)\|\bbx\|_2^2.
	\end{split}
\end{gather}

$\textbf{Second term in \eqref{eq:prop332}.}$ Following the proof of Theorem 1, we have
\begin{align}\label{eq:prop3126}
	\mathbb{E}[\tr \big(\tilde{\bbu}_2^\top \tilde{\bbu}_2 \big)] \!\le\! n C_L^2 \eps^2 \!+\!\ccalO\big(\delta^2 \eps^2\big).
\end{align}

By substituting the results for the first term \eqref{eq:prop3125}, second term \eqref{eq:prop3126} and third term \eqref{eq:prop312} into \eqref{eq:prop332}, we can bound the variance of the filter output as 
\begin{align}\label{eq:prop3127}
	{\rm var} [ \tilde{\bbu} ] \!\le\! n d C_L^2 \| \bbx \|_2^2 \delta^2 \!+\! n C_L^2 \eps^2 \!+\! \ccalO(\delta^4) + \ccalO(\delta^2 \eps^2).
\end{align}

Since the AirGNN is a multi-layered architecture consisting of AirGFs and nonlinearities, the variance of the AirGNN output can be bounded as
\begin{equation} \label{eq:thm41}
	\begin{split}
		& \!{\rm var} \!\left[ \bbPhi_{\rm air}(\bbx,\!\bbS,\!\ccalA) \right] \!\!=\! {\rm var}\!\! \left[\! \sigma\!\!\left(\!\sum_{f\!=\!1}^{F}\! \tilde{\bbu}^f_{L-1}\!\!\right)\! \!\right] \!\!\!\le\!  {\rm var}\! \!\left[ \!\sum_{f\!=\!1}^{F}\! \tilde{\bbu}^f_{L\!-\!1}\!\! \right]\!\!,
	\end{split}
\end{equation}
where the non-expansiveness of the nonlinearity $\sigma(\cdot)$ from Assumption 4 is used. By representing the variance with the trace operator, we can rewrite \eqref{eq:thm41} as
\begin{align} \label{eq:thm42}
	{\rm var}\! \left[ \!\sum_{f=1}^{F}\! \tilde{\bbu}^f_{L-1}\! \right] &= \!\tr\! \left( \mathbb{E}\! \left[\!\big(\! \sum_{f=1}^{F}\! \tilde{\bbu}^f_{L-1}\big) \big(\! \sum_{f=1}^{F}\! \tilde{\bbu}^f_{L-1}\big)^\top\! \right]\! \right. \nonumber\\
	& \left. - \mathbb{E}\!\!\left[\!\sum_{f=1}^{F}\! \tilde{\bbu}^f_{L-1}\! \right]\!\! \mathbb{E}\!\!\left[\!\sum_{f=1}^{F}\! \tilde{\bbu}^f_{L-1}\! \right]^\top \right)\!. 
\end{align}
Denote by $\tilde{\bbu}_{L-1}^f = \tilde{\bbH}_{L}^{f} \tilde{\bbx}^f_{L-1}$ and $\barbu_{L-1}^f=\barbH_L^f \tilde{\bbx}_{L-1}^f$ the concise notations of the AirGF output $\bbH_{{\rm air}, L}^f(\bbS) \tilde{\bbx}^f_{L-1}$ and its expected 
output $\mathbb{E}\left[ \bbH_{{\rm air}, L}^f(\bbS) \right]\tilde{\bbx}^f_{L-1}=\bbH_{L}^f(\barbS) \tilde{\bbx}^f_{L-1}$. By expanding \eqref{eq:thm42}, we get
\begin{align} \label{eq:thm43}
	& {\rm var}\! \left[ \!\sum_{f=1}^{F}\! \tilde{\bbu}^f_{L-1}\! \right] \!= \!\sum_{f\!=\!1}^{F}\! \sum_{g\!=\!1}^{F}\tr\! \left(\!  \mathbb{E} \!\left[ \!\tilde{\bbH}_{L}^{f} \tilde{\bbx}^f_{L-1}\! \left( \tilde{\bbH}_{L}^{g} \tilde{\bbx}^{g}_{L-1} \right)^\top \!\right.  \right] \nonumber\\
	& \quad \quad \quad \quad \quad \quad \left.  - \mathbb{E}\left[ \tilde{\bbH}_{L}^{f} \tilde{\bbx}^f_{L-1} \right] \mathbb{E} \left[ \tilde{\bbH}_{L}^{g} \tilde{\bbx}^{g}_{L-1} \right]^\top \right).
\end{align}
By adding and subtracting $\barbH_{L}^{f} \tilde{\bbx}^f_{L-1} \left( \barbH_{L}^{g} \tilde{\bbx}^{g}_{L-1}\right)^\top$ inside the first expectation, \eqref{eq:thm43} becomes
\begin{align} \label{eq:thm44}
	& \! \sum_{f\!=\!1}^{F}\! \sum_{g\!=\!1}^{F}\!\tr\! \left(\! \mathbb{E} \!\left[ \!\tilde{\bbH}_{L}^{f}\! \tilde{\bbx}^f_{L-1}\! \left( \tilde{\bbH}_{L}^{g}\! \tilde{\bbx}^{g}_{L-1} \right)^\top \!-\!\barbH_{L}^{f}\! \tilde{\bbx}^f_{L-1}\! \left( \barbH_{L}^{g}\! \tilde{\bbx}^{g}_{L-1}\right)^\top \!\right. \right] \nonumber\\
	&  \!\!+\!\mathbb{E}\!\left[ \barbH_{L}^{f} \tilde{\bbx}^f_{L\!-\!1}\! \left( \barbH_{L}^{g} \tilde{\bbx}^{g}_{L\!-\!1}\right)^\top \right]\! \left.\!- \mathbb{E}\!\left[ \tilde{\bbH}_{L}^{f} \tilde{\bbx}^f_{L\!-\!1} \!\right]\! \mathbb{E}\! \left[ \tilde{\bbH}_{L}^{g} \tilde{\bbx}^{g}_{L\!-\!1} \right]^\top \right).
\end{align}
The preceding expression \eqref{eq:thm44} is composed of two groups of terms shown in the two separate lines. 

\textbf{First group of terms in \eqref{eq:thm44}.} When $f \ne g$ such that filters $\tilde{\bbH}_{L}^{f}$ and $\tilde{\bbH}_{L}^{g}$ are independent, we have
\begin{equation} \label{eq:thm455}
	\begin{split}
		& \!\tr\! \left(\! \mathbb{E} \!\left[ \!\tilde{\bbH}_{L}^{f} \tilde{\bbx}^f_{L\!-\!1}\! \left( \tilde{\bbH}_{L}^{g} \tilde{\bbx}^{g}_{L\!-\!1} \right)^\top \!-\!\barbH_{L}^{f} \tilde{\bbx}^f_{L-1}\! \left( \barbH_{L}^{g} \tilde{\bbx}^{g}_{L\!-\!1}\right)^\top \right] \right)= 0.
	\end{split}
\end{equation}
We then use \eqref{eq:thm455} to derive the upper bound
\begin{align} \label{eq:thm45}
	& \!\sum_{f\!=\!1}^{F}\! \sum_{g\!=\!1}^{F}\!\tr\! \left(\! \mathbb{E} \!\left[ \!\tilde{\bbH}_{L}^{f}\! \tilde{\bbx}^f_{L-1}\! \left(\! \tilde{\bbH}_{L}^{g}\! \tilde{\bbx}^{g}_{L\!-\!1} \!\right)^\top \!-\!\barbH_{L}^{f}\! \tilde{\bbx}^f_{L-1}\! \left( \barbH_{L}^{g}\! \tilde{\bbx}^{g}_{L-1}\right)^\top \right]\! \right)\!\nonumber\\
	& = \!\sum_{f\!=\!1}^{F}\!\tr\! \left( \mathbb{E} \!\left[ \!\tilde{\bbH}_{L}^{f}\! \tilde{\bbx}^f_{L\!-\!1}\! \left(\! \tilde{\bbH}_{L}^{f}\! \tilde{\bbx}^{f}_{L\!-\!1} \!\right)^\top \!-\!\barbH_{L}^{f}\! \tilde{\bbx}^f_{L\!-\!1}\! \left( \!\barbH_{L}^{f}\! \tilde{\bbx}^{f}_{L\!-\!1}\!\right)^\top \!\right]\! \right)\! \nonumber\\
	& \le \delta^2 \Delta \sum_{f=1}^{F}\! \mathbb{E}[ \| \tilde{\bbx}^f_{L\!-\!1} \|_2^2 ] \!+\! n C_L^2 \eps^2 \!+\! \mathcal{O}(\delta^4) \!+\! \ccalO(\delta^2 \eps^2),
\end{align}
where $\Delta = n d C_L^2$ and the last inequality holds from \eqref{eq:prop3127}. For the norm of $\tilde{\bbx}_{L-1}^f$, we observe that
\begin{equation}\label{eq:thm456}
	\begin{split}
		\mathbb{E}\!\!\left[ \| \tilde{\bbx}^f_{L\!-\!1} \|_2^2 \right] \!=\! \mathbb{E}\!\!\left[ \Big\| \sigma \!\!\left( \sum_{g=1}^F \tilde{\bbu}_{L\!-\!2}^{fg} \right) \!\Big\|_2^2 \right] \!\le\!C_\sigma^2 F\! \sum_{g=1}^F\! \mathbb{E}\!\left[ \left\| \tilde{\bbu}_{L\!-\!2}^{fg} \right\|_2^2 \right]\!,
	\end{split}
\end{equation}
where the Lipschitz property of the nonlinearity from Assumption 1 and the triangle inequality are used in the last inequality. By further representing $\big\| \tilde{\bbu}_{L-2}^{fg} \big\|_2^2$ with the trace $\tr (\tilde{\bbu}_{L-2}^{fg} (\tilde{\bbu}_{L-2}^{fg})^\top)$ and expanding the latter as in \eqref{eq:prop33Ts1}, we have
\begin{align}
	\label{eq:thm458} &\!\!\mathbb{E}\!\left[\!\tilde{\bbu}_{L-2}^{fg} (\tilde{\bbu}_{L-2}^{fg})^\top\!\right]\! \!=\!\!\! \sum_{k,\ell=0}^K \!\!\alpha_{k,L-2}^{fg}\alpha_{\ell,L-2}^{fg}{\barbS}^{k}\mathbb{E}\!\left[\tilde{\bbX}_{L-2}^g\right] {\barbS}^{\ell}\\
	&+\!\!\!\sum_{k,\ell=0}^K\!\!\! \alpha_{k,L\!-\!2}^{fg}\alpha_{\ell,L\!-\!2}^{fg}\mathbb{E}\!\!\left[\! \sum_{r=\!1}^{\lfloor\! k\ell\! \rfloor}\!\! \barbS^{k\!-\!r}\bbE^{(r)} \barbS^{r\!-\!1}\tilde{\bbX}_{L\!-\!2}^g \barbS^{r\!-\!1}\bbE^{(r)} \barbS^{\ell\!-\!r}\!\right] \nonumber\\
	&+\!\sum_{k,\ell=0}^K\!\!\! \alpha_{k,L\!-\!2}^{fg}\alpha_{\ell,L\!-\!2}^{fg} \mathbb{E}\!\left[ \bbC_{k\ell}\right] \!+\! \mathbb{E}\!\left[\!\tilde{\bbu}_{2, L-2}^{fg} (\tilde{\bbu}_{2, L-2}^{fg})^\top\!\right], \nonumber 
\end{align}
where $\tilde{\bbX}_{L-2}^g = \tilde{\bbx}_{L-2}^g (\!\tilde{\bbx}_{L-2}^g)^\top$ is a concise notation and $\tilde{\bbu}_{2, L-2}^{fg}$ is the noise component of $\tilde{\bbu}_{L-2}^{fg}$. For the first term, by using the cyclic property of trace and the inequality \eqref{eq:prop3ineq0}, we can bound it as
\begin{align} \label{eq:thm459}
	&\tr \left( \sum_{k,\ell=0}^K\! \alpha_{k,L-2}^{fg}\alpha_{\ell,L-2}^{fg}{\barbS}^{k}\mathbb{E}\!\left[\tilde{\bbX}_{L-2}^g\right] {\barbS}^{\ell}\right)\nonumber\\
	& \le \Big\| \sum_{k,\ell=0}^K \!\alpha_{k,L-2}^{fg}\alpha_{\ell,L-2}^{fg} \bar{\bbS}^{k+\ell} \Big\|_2 \tr \left( \mathbb{E}\!\left[\tilde{\bbX}_{L-2}^g\right] \right)  \\
	& = \| \barbH_{L-2}^{fg} \barbH_{L-2}^{fg} \|_2 \tr \left( \mathbb{E}\!\left[\tilde{\bbX}_{L-2}^g \right]\right) \le 
	\mathbb{E}\left[\|\tilde{\bbx}_{L-2}^{g} \|^2_2\right],\nonumber
\end{align}
where the filter bound $|f(\bblambda)| \le 1$ and the fact $\tr \left( \mathbb{E}\left[\tilde{\bbX}_{L-2}^g \right]\right) = \mathbb{E}\left[\|\tilde{\bbx}_{L-2}^{g} \|^2_2\right]$ are used in the last inequality. For the second term and the third term, we use the result \eqref{eq:prop312} and \eqref{eq:prop3125} to write
\begin{align}
	\label{eq:thm4510} &\sum_{k,\ell\!=\!0}^K\!\!\! \alpha_{k,L\!-\!2}^{fg}\alpha_{\ell,L\!-\!2}^{fg} \tr\!\! \left(\!\! \!\mathbb{E}\!\!\left[\! \sum_{r\!=\!1}^{\lfloor\! k\ell\! \rfloor}\! \barbS^{k\!-\!r}\bbE^{(\!r\!)}\! \barbS^{r\!-\!1}\!\tilde{\bbX}_{L\!-\!2}^g \barbS^{r\!-\!1}\bbE^{(\!r\!)}\! \barbS^{\ell\!-\!r}\!\!\!+\!\! \bbC_{k\ell}\!\right]\!\!\right)\nonumber\\
	& \le \ccalO(\delta^2). 
\end{align}
For the forth term, we use the result \eqref{eq:prop3126} to write
\begin{align}
	\label{eq:thm45105} \mathbb{E}\!\left[\!\tilde{\bbu}_{2, L-2}^{fg} (\tilde{\bbu}_{2, L-2}^{fg})^\top\!\right] \le \ccalO(\eps^2). 
\end{align}
By substituting \eqref{eq:thm459}, \eqref{eq:thm4510} and \eqref{eq:thm45105} into \eqref{eq:thm458} and the latter into \eqref{eq:thm456}, we get
\begin{equation}\label{eq:thm4511}
	\begin{split}
		\mathbb{E}\!\!\left[ \| \tilde{\bbx}^f_{L\!-\!1} \|_2^2 \right] \!\!\le\!\!  C_\sigma^2 F\! \sum_{g=1}^F \!\mathbb{E}\!\!\left[ \left\| \tilde{\bbx}_{L\!-\!2}^{g} \right\|_2^2 \right] \!\!+\! \ccalO(\delta^2) \!+\! \ccalO(\eps^2).
	\end{split}
\end{equation}
Continuing the recursion \eqref{eq:thm4511} with the initial condition $\| \tilde{\bbx}_0^1 \|_2^2 = \| \bbx \|_2^2$ yields
\begin{equation}\label{eq:thm4512}
	\begin{split}
		& \mathbb{E}[ \| \tilde{\bbx}_{L-1}^f \|_2^2]\! \le\! C_\sigma^{2L-2} F^{2L\!-4} \|\bbx\|_2^2 \!+\! \ccalO(\delta^2) \!+\! \ccalO(\eps^2).
	\end{split}
\end{equation}
By substituting \eqref{eq:thm4512} into \eqref{eq:thm45}, we can bound \eqref{eq:thm45} by 
\begin{equation} \label{eq:thm46}
	\begin{split}
		\Delta F^{2L-3} C_\sigma^{2L-2} \|\bbx\|_2^2 \delta^2  \!+\! n C_L^2 \eps^2 \!+\! \ccalO(\delta^4) \!+\! \ccalO(\eps^2\delta^2).
	\end{split}
\end{equation}

\textbf{Second group of terms in \eqref{eq:thm44}.} For the second group of terms in \eqref{eq:thm44}, we have\footnote{We use the notation ${\rm cov}[\bbx, \bby] = \sum_{i=1}^N {\rm cov}[[\bbx]_i, [\bby]_i]$ for any two random vectors $\bbx$ and $\bby$.}
\begin{align} \label{eq:thm47}
	& \! \sum_{f\!=\!1}^{F}\! \sum_{g\!=\!1}^{F}\! \tr\!\left(\! \mathbb{E} \! \left[ \!\barbH_{L}^{f}\!\tilde{\bbx}^f_{L\!-\!1}\! \left( \!\barbH_{L}^{g}\!\tilde{\bbx}^{g}_{L\!-\!1}\!\right)^\top \right] \! -\! \mathbb{E}\!\left[\! \barbH_{L}^{f}\!\tilde{\bbx}^f_{L\!-\!1}\! \right]\! \mathbb{E} \!\left[\! \barbH_{L}^{g}\!\tilde{\bbx}^{g}_{L\!-\!1}\! \right]^\top \right) \nonumber\\
	& \!= \! \sum_{f=1}^{F} \! {\rm var} \! \left[ \!\barbH_{L}^{f}\!\tilde{\bbx}^f_{L-1} \!\right] \!+\! \sum_{f \ne g}^{F} \! {\rm cov} \! \left[ \! \barbH_{L}^{f}\!\tilde{\bbx}^f_{L-1}, \barbH_{L}^{g}\!\tilde{\bbx}^{g}_{L\!-\!1} \right].
\end{align}
By using the property of covariance \cite{Chandler1987}
\begin{equation} \label{eq:thm48}
	\begin{split}
		{\rm cov}[x,y] \le \sqrt{{\rm var}[x]} \sqrt{{\rm var}[y]} \le \frac{{\rm var}[x]+{\rm var}[y]}{2} 
	\end{split}
\end{equation}
for two random variables $x$ and $y$, we can bound \eqref{eq:thm47} by
\begin{align} \label{eq:thm485}
	&F \sum_{f=1}^{F} \! {\rm var} \! \left[ \!\barbH_{L}^{f}\!\tilde{\bbx}^f_{L-1} \!\right] =\! F\sum_{f=1}^{F} \tr\left( \barbH_{L}^{f} \barbH_{L}^{f}\bbSigma_{L-1}^f \right),
\end{align}
where the trace cyclic property is used, $\bbSigma_{L-1}^f = \mathbb{E} \! \left[ \left( \!\tilde{\bbx}^f_{L\!-\!1}\! -\! \mathbb{E}\!\left[\! \tilde{\bbx}^f_{L\!-\!1}\! \right] \right) \left( \!\tilde{\bbx}^f_{L\!-\!1}\! -\! \mathbb{E}\!\left[\! \tilde{\bbx}^f_{L\!-\!1}\! \right] \right)^\top \right]$ is positive semi-definite matrix for all $f=1,\ldots,F$ and $\barbH_{L}^{f}\! \barbH_{L}^{f}$ is square. We then refer to inequality \eqref{eq:prop3ineq0} to bound \eqref{eq:thm485} as
\begin{align} \label{eq:thm49}
	& F \!\sum_{f\!=\!1}^{F}\! \| \barbH_{L}^{f} \|_2^2 \tr\!\left( \bbSigma_{L-1}^f  \right)  \le 
	F \sum_{g=1}^{F} {\rm var} \left[ \tilde{\bbx}^f_{L-1} \right],
\end{align}
where the filter bound $|f(\bblambda)|\le 1$ is used. By substituting bounds \eqref{eq:thm46} and \eqref{eq:thm49} into \eqref{eq:thm44} and then altogether into \eqref{eq:thm42}, we observe a recursion where the variance of $\ell$th layer output depends on the variance of $(\ell-1)$th layer output as well as the bound in \eqref{eq:thm46}. Therefore, we have
\begin{align} \label{eq:thm410}
	&{\rm var}[\tilde{\bbx}_L]\!:=\!{\rm var}[\bbPhi_{\rm air}(\bbx,\bbS,\ccalA)] \! \le\! F \!\sum_{f=1}^{F}\! {\rm var} \left[ \tilde{\bbx}^f_{L-1} \right]  \\
	& + \Delta F^{2L-3} C_\sigma^{2L-2} \| \bbx \|_2^2 \delta^2 + nC_L^2 \eps^2  + \ccalO(\delta^4) + \ccalO(\eps^2\delta^2).\nonumber
\end{align}

\textbf{Unrolling.} By performing this recursion until the input layer, we have
\begin{align} \label{eq:thm411}
	& {\rm var}[ \bbPhi_{\rm air}(\bbx,\bbS,\ccalA)] \!\!\le\!\! \Delta\!\!\! \sum_{\ell=2}^L \!\!F^{2L-3} C_\sigma^{2\ell\!-\!2} \| \bbx \|_2^2 \delta^2 \!\!+\! nC_L^2 \eps^2 \\
	& \!+\! F^{2L-4} \left(\! \sum_{f\!=\!1}^{F} \!\!{\rm var}\! \left[ \tilde{\bbx}^f_1 \right] \!\!+ \!\!\!\sum_{f\ne g}^{F} \!\!{\rm cov}\! \left[ \!\tilde{\bbx}^f_1, \tilde{\bbx}^g_1 \!\right]\!\! \right)\!+\! \ccalO(\delta^4) \!+\! \ccalO(\eps^2\delta^2),\nonumber
\end{align}
where the first term in \eqref{eq:thm411} is the accumulated sum of all second terms in \eqref{eq:thm410} during the recursion. Since $\tilde{\bbx}_1^f = \tilde{\bbH}_{1}^f \bbx$, $\tilde{\bbx}_1^g = \tilde{\bbH}_{1}^g \bbx$ and $\tilde{\bbH}_{1}^f$, $\tilde{\bbH}_{1}^g$ are independent if $f \ne g$, we have $\sum_{f\ne g}^{F} \!{\rm cov}\! \left[ \tilde{\bbx}^f_1, \tilde{\bbx}^g_1 \right]=0$. Therefore, \eqref{eq:thm411} becomes
\begin{align} \label{eq:thm412}
	& {\rm var}[ \bbPhi_{\rm air}(\bbx,\bbS,\ccalA)] \\
	& \le\! n d C_L^2 \! \sum_{\ell\!=\!1}^L\! F^{2L\!-\!3} C_\sigma^{2\ell\!-\!2} \| \bbx \|_2^2 \delta^2\!+\! nC_L^2 \eps^2\!+\!\ccalO(\delta^4)\!+\!\ccalO(\eps^2\delta^2) \nonumber 
\end{align}
completing the proof for AirGNNs without CSI. For AirGNNs with CSI, the channel coefficient $h$ can be considered as a binary variable following the Bernoulli distribution with expectation $\rho$ and variance $\rho (1-\rho)$ [Section IV-B]. By substituting this result into \eqref{eq:thm412}, we complete the proof for AirGNNs with CSI.


\section{Lemmas and Proofs}

\begin{lemma}\label{lemma0:traceOperation}
	Consider the same setting as Theorem 1. For any graph signal $\bbx$, it holds that
	\begin{align}
		\label{lemma0:mainresults}
		&\mathbb{E}\!\Big[\! \tr \Big(\! \sum_{k\!=\!1}^K\! \sum_{\ell\!=\!1}^K\!\! \alpha_k \alpha_\ell\!\! \sum_{r_1\!=\!1}^{k}\! \sum_{r_2\!=\!1}^\ell\!\! \bbE^{(\!r_2\!)}\bbS^{k\!+\!\ell\!-\!r_1\!-\!r_2}\bbE^{(\!r_1\!)} \bbS^{r_1\!-\!1}\bbx \bbx^\top\! \bbS^{r_2\!-\!1}\! \Big)\!\Big] \nonumber\\
		& \le (n+K-1) C_L^2 \|\bbx\|_2^2 (1-\mu)^2 + ndC_L^2\|\bbx\|_2^2 \delta^2,
	\end{align}
	where $d$ is the maximal degree of the graph $\ccalG$. 
\end{lemma}
\begin{proof}
	Since $\bbE^{(r_1)}$ and $\bbE^{(r_2)}$ are independent if $r_1 \ne r_2$, we separate the terms in \eqref{lemma0:mainresults} as
	\begin{align}\label{proof:lemma01}
		&\mathbb{E}\!\Big[\! \tr \Big(\! \sum_{k\!=\!1}^K\! \sum_{\ell\!=\!1}^K\!\! \alpha_k \alpha_\ell\!\! \sum_{r_1\!=\!1}^{k}\! \sum_{r_2\!=\!1}^\ell\!\! \bbE^{(\!r_2\!)}\bbS^{k\!+\!\ell\!-\!r_1\!-\!r_2}\bbE^{(\!r_1\!)} \bbS^{r_1\!-\!1}\bbx \bbx^\top\! \bbS^{r_2\!-\!1}\! \Big)\!\Big] \nonumber\\
		& =\!\! \mathbb{E}\Big[\! \sum_{r\!=\!1}^K \!\tr \Big(\! \sum_{k, \ell\!=\!r}^K \! \alpha_k \alpha_\ell \bbE^{(r)} \bbS^{k\!+\!\ell\!-\!2r}\bbE^{(r)} \bbS^{r\!-\!1}\bbx \bbx^\top\! \bbS^{r\!-\!1} \!\Big)\!\Big] \!+\! {\rm tr}(\bbZ).
	\end{align}
	The first term in \eqref{proof:lemma01} contains the terms with error matrices of the same index and the second term contains the rest terms. We consider these two terms separately.
	
	\smallskip
	\noindent \textbf{First term in \eqref{proof:lemma01}.} Since both matrices $\sum_{k, \ell=r}^K \alpha_k \alpha_\ell \bbE^{(\!r\!)} \bbS^{k\!+\!\ell\!-\!2r}\bbE^{(\!r\!)} = \big(\sum_{k=r}^K \alpha_k \bbS^{k-r}\bbE_r\big)^\top \big(\sum_{k=r}^K \alpha_k \bbS^{k-r}\bbE_r\big)$ and $\bbS^{r-1}\bbx \bbx^\top \bbS^{r-1}$ are positive semi-definite, we use the Cauchy-Schwarz inequality $\tr(\bbA \bbB) \le \tr(\bbA)\tr(\bbB)$ and the linearity of the trace to have 
	\begin{align}
		\label{proof:lemma02} &\mathbb{E}\Big[\! \sum_{r=1}^K \tr \Big(\! \sum_{k, \ell=r}^K \! \alpha_k \alpha_\ell \bbE^{(r)} \bbS^{k+\ell-2r}\bbE^{(r)} \bbS^{r-1}\bbx \bbx^\top\! \bbS^{r\!-\!1} \!\Big)\!\Big]\\
		&\le\mathbb{E}\Big[\! \sum_{r\!=\!1}^K\! \sum_{k, \ell=r}^K \!\alpha_k \alpha_\ell \tr\! \left( \bbE^{(r)} \bbS^{k+\ell-2r}\bbE^{(r)}\! \right)\! \tr\! \left( \bbS^{r\!-\!1}\bbx \bbx^\top \bbS^{r\!-\!1} \right)\!\!\Big]\!.\nonumber
	\end{align}
	Given the eigendecomposition $\bbS = \bbV\bbLambda\bbV^\top$ with eigenvectors $\bbV = [\bbv_1,\ldots,\bbv_n]^\top$ and eigenvalues $\bbLambda = \text{diag}(\lambda_1, \ldots, \lambda_n)$, the GFT of the graph signal $\bbx$ is $\bbx = \sum_{i=1}^n \hat{x}_i \bbv_i$. By substituting the latter into $\tr\! \left( \bbS^{r-1}\bbx \bbx^\top \bbS^{r-1} \right)$, we get
	\begin{gather}\label{proof:lemma03}
		\tr\! \left(\! \bbS^{r\!-\!1}\bbx \bbx^\top\! \bbS^{r\!-\!1}\! \right) \!\!=\!\! \sum_{i=1}^n\! \hat{x}_i^2 \lambda_i^{2r-2} \tr\! \left( \bbv_i \bbv_i^\top\!\right) \!\!=\!\! \sum_{i=1}^n \hat{x}_i^2 \lambda_i^{2r-2}\!,
	\end{gather}
	where $\tr(\bbv_i\bbv_i^\top) = 1$ for $i=1,\ldots,n$ is used due to the orthonormality of eigenvectors. By substituting \eqref{proof:lemma03} into \eqref{proof:lemma02} and using the trace cyclic property $\tr(\bbA\bbB\bbC) = \tr(\bbC\bbA\bbB) = \tr(\bbB\bbC\bbA)$, we can rewrite \eqref{proof:lemma02} as
	\begin{align} \label{proof:lemma04}
		& \sum_{i = 1}^n\hat{x}_i^2\! \sum_{r=1}^K\!\tr \Big( \sum_{k, \ell=r}^K \alpha_k \alpha_\ell \lambda_i^{2r-2} \bbS^{k+\ell-2r} \mathbb{E}\!\left[ {\bbE^{(r)}}^2\right] \Big).
	\end{align}
	From the result of Lemma \ref{LemmmaErrorMatrix}, we have $\mathbb{E}\left[ \bbE_k^2 \right] = (1-\mu)^2\bbS^2+\delta^2 \bbD$ where $\bbD$ is the degree matrix of graph $\ccalG$ and $\mu$, $\delta$ are the expectation, standard deviation of channel distribution. Substituting these results into \eqref{proof:lemma04}, we can represent it as
	\begin{align} \label{proof:lemma05}
		& (1 - \mu)^2 \sum_{i = 1}^n\hat{x}_i^2\! \sum_{r\!=\!1}^K\!\tr \Big( \sum_{k, \ell=r}^K \alpha_k \alpha_\ell \lambda_i^{2r-2} \bbS^{k+\ell-2r+2} \bbI \Big)\\
		&+ \delta^2 \sum_{i = 1}^n\hat{x}_i^2\! \sum_{r\!=\!1}^K\!\tr \Big( \sum_{k, \ell=r}^K h_k h_\ell \lambda_i^{2r-2} \bbS^{k+\ell-2r} \bbD \Big) \nonumber.
	\end{align}
	We consider two terms in \eqref{proof:lemma05}, respectively. For the first term, we use the inequality \eqref{proof:them16} (since $\bbI$ is positive semi-definite) to upper bound it by
	\begin{align}
		\label{proof:lemma06}
		& (1\!-\! \mu)^2\sum_{i = 1}^n\!\hat{x}_i^2 \big\|\! \sum_{r\!=\!1}^K\! \sum_{k, \ell=r}^K\! \alpha_k \alpha_\ell \lambda_i^{2r\!-\!2} \bbS^{k\!+\!\ell\!-\!2r\!+\!2} \big\|_2 \tr\! \left( \bbI \right).
	\end{align}
	We now consider the matrix norm in \eqref{proof:lemma06}. For any matrix $\bbA$, a standard way to bound $\|\bbA\|_2$ is to obtain the inequality $\|\bbA\bba\|_2 \le A \|\bba\|_2$ that holds for any vector $\bba$ \cite{Meyer2000}. In this context, $A$ is the upper bound satisfying $\|\bbA\|_2 \le A$. Following this rationale, we consider the GFT of $\bba$ over $\bbS$ as $\bba = \sum_{j=1}^n \hat{a}_j \bbv_j$ and have
	\begin{align}
		\label{proof:lemma07} & \big\| \!\sum_{r=1}^K\! \sum_{k, \ell=r}^K\! \alpha_k \alpha_\ell \lambda_i^{2r-2} \bbS^{k+\ell-2r+2} \bba \big\|_2^2\\
		&=\! \sum_{j=1}^n\! \hat{a}_j^2 \big| \sum_{r=1}^K\! \sum_{k, \ell=r}^K\! \alpha_k \alpha_\ell \lambda_i^{2r-2}\! \lambda_j^{k+\ell-2r+2}\big|^2. \nonumber
	\end{align}
	The expression inside the absolute value in \eqref{proof:lemma07} can be linked to the Lipschitz gradient of the frequency response over the air $f(\bblambda)$ in (14). More specifically, let $\bblambda_i = [\lambda_i, \ldots,\lambda_i]^\top$ and $\bblambda_j = [\lambda_j,\ldots,\lambda_j]^\top$ be specific multivariate frequencies and $\bblambda_{i:j,r} = [\lambda_{i}, \ldots, \lambda_{i}, \lambda_{j}, \ldots, \lambda_{j}]^\top$ formed by concatenating the first $r$ entries of $\bblambda_j$ and the last $K-r$ entries of $\bblambda_i$. The partial derivative of $f(\bblambda)$ w.r.t. the $r$th entry $\lambda^{(r)}$ at $\bblambda_{i:j,r}$ is
	\begin{align}
		\label{proof:lemma08}
		\frac{\partial f(\bblambda_{i:j,r})}{\partial \lambda^{(r)}} \!=\! \sum_{k=r}^K \alpha_k \lambda_{i}^{r-1} \lambda_j^{k-r},\! ~\forall~r\!=\!1,\!\ldots,\!K.
	\end{align}
	The Lipschitz gradient of $f(\bblambda)$ over $\bblambda_i$ and $\bblambda_j$ [Def. 2] is
	\begin{equation}\label{proof:lemma09}
		\nabla_L f(\bblambda_i, \bblambda_j) = \Big[\frac{\partial f(\bblambda_{i:j,1})}{\partial \lambda^{(1)}}, \ldots, \frac{\partial f(\bblambda_{i:j,K})}{\partial \lambda^{(K)}}\Big]^\top.
	\end{equation}
	We observe the expression inside the absolute value in \eqref{proof:lemma07} can be written in the compact form and upper bounded as
	\begin{align}
		\label{proof:lemma010} &\big|\sum_{r\!=\!1}^K\! \sum_{k, \ell=r}^K\! h_k h_\ell \lambda_i^{2r-2} \lambda_j^{k+\ell-2r+2}\big| \!=\!\!  \sum_{r\!=\!1}^K \!\Big(\frac{\partial f(\bblambda_{i:j,r})}{\partial\lambda^{(r)}} \lambda_j \Big)^2\nonumber \\
		&=\! \| \nabla_L f(\bblambda_i, \bblambda_j) \!\odot\! \bblambda_j \|^2_2 \le C_L^2, 
	\end{align}
	where the integral Lipschitz condition [cf. (16)] is used and $\odot$ is the elementwise product. Using \eqref{proof:lemma010} in \eqref{proof:lemma07}, we can bound the matrix norm by $C_L^2$. Further substituting this result into \eqref{proof:lemma06} yields 
	\begin{align}
		\label{proof:lemma011} &(1 - \mu)^2\sum_{i = 1}^N\hat{x}_i^2 \big\| \sum_{r\!=\!1}^K\! \sum_{k, \ell=r}^K\! \alpha_k \alpha_\ell \lambda_i^{2r-2} \bbS^{k+\ell-2r+2} \big\|_2 \tr\! \left( \bbI \right)\nonumber \\
		& \qquad\le (1-\mu)^2n C_L^2 \sum_{i=1}^N \hat{x}_i^2 = n C_L^2 \| \bbx \|^2_2 (1-\mu)^2.
	\end{align}
	
	For the second term in \eqref{proof:lemma05}, we use the inequality \eqref{proof:them16} to bound it by
	\begin{align}
		\label{proof:lemma012} & \delta^2 \sum_{i = 1}^N\hat{x}_i^2 \big\|\! \sum_{r=1}^K\! \sum_{k, \ell=r}^K\! \alpha_k \alpha_\ell \lambda_i^{2r-2} \bbS^{k+\ell-2r} \big\|_2 \tr\! \left( \bbD \right).
	\end{align}
	We follow \eqref{proof:lemma07}-\eqref{proof:lemma010} to bound the matrix norm in \eqref{proof:lemma012} by using the Lipschitz gradient [Def. 2] and the integral Lipschitz condition [cf. (16)] as
	\begin{align}
		\label{proof:lemma013} \big\| \sum_{r=1}^K \sum_{k, \ell=r}^K\! \alpha_k \alpha_\ell \lambda_i^{2r-2} \bbS^{k+\ell-2r} \big\|_2 \!\le\! \| \nabla_L f(\bblambda_i, \bblambda_j) \|^2_2 \!\le\! C_L^2.
	\end{align}
	Using \eqref{proof:lemma013} in \eqref{proof:lemma012} and the fact $\tr(\bbD) \le n d$ with $d$ the maximal degree of graph, we can bound \eqref{proof:lemma012} by $n d C_L^2 \| \bbx \|_2^2 \delta^2$. 
	
	\smallskip
	\noindent \textbf{Second term in \eqref{proof:lemma01}.} Since the error matrices $\bbE^{(r_1)}$ and $\bbE^{(r_2)}$ in $\bbZ$ are independent with $r_1\ne r_2$, we have $\mathbb{E}[\bbE^{(r_1)}] = \mathbb{E}[\bbE^{(r_2)}] = (1-\mu) \bbS$. By following the same steps as in \eqref{proof:lemma02}-\eqref{proof:lemma05}, we can rewrite $\bbZ$ as
	\begin{align}\label{proof:lemma014}
		&(1-\mu)^2 \sum_{i=1}^n \hat{x}_i^2 \sum_{r_1\ne r_2}^K\! \sum_{k=r_1}^K\! \sum_{\ell=r_2}^K\! \alpha_k \alpha_\ell \lambda_i^{k\!+\!\ell} \tr \Big(\! \bbv_i \bbv_i^\top\! \Big)\\
		& \le (1-\mu)^2 \sum_{i=1}^n \hat{x}_i^2 \Big|\sum_{r_1\ne r_2}^K\! \sum_{k=r_1}^K\! \sum_{\ell=r_2}^K\! \alpha_k \alpha_\ell \lambda_i^{k+\ell}\Big|. \nonumber
	\end{align}
	The expression inside the absolute value in \eqref{proof:lemma014} can be linked to the Lipschitz gradient of the frequency response over the air $f(\bblambda)$ as well. Specifically, let $\bblambda_1 = \bblambda_2 = [\lambda_i,\ldots,\lambda_i]^\top$ be the multivariate frequency and the partial derivative of $f(\bblambda)$ w.r.t. the $r$th entry $\lambda^{(r)}$ of $\bblambda_{1:2,r}$ is
	\begin{align}
		\label{proof:lemma015}
		\frac{\partial f(\bblambda_{1:2,r})}{\partial \lambda^{(r)}} \!=\! \sum_{k=r}^K \alpha_k \lambda_i^{k-1},\! ~\forall~r\!=\!1,\!\ldots,\!K.
	\end{align}
	By using the triangle inequalities $|a+b| \le |a| + |b|$ and $|ab| \le (a^2 + b^2)/2$, we can upper bound the absolute value in \eqref{proof:lemma014} as
	\begin{align}
		\label{proof:lemma016}
		&\Big|\sum_{r_1\ne r_2}^K\! \sum_{k=r_1}^K\! \sum_{\ell=r_2}^K\! \alpha_k \alpha_\ell \lambda_i^{k+\ell}\Big|\\
		&=\!\!\Big|\! \Big(\!\sum_{r=1}^K\!\! \frac{\partial f(\bblambda_{1:2,r})}{\partial \lambda^{(r)}} \lambda_i\!\!\Big)\!\Big(\!\sum_{r=1}^K\!\frac{ \partial f(\bblambda_{1:2,r})}{\partial \lambda^{(r)}} \lambda_i\!\!\Big) \!\!-\!\! \sum_{r\!=\!1}^K \!\!\Big(\! \frac{\partial f(\bblambda_{1:2,r})}{\partial\lambda^{(r)}} \lambda_i\!\! \Big)^2\!\Big| \nonumber \\
		& \le (K-1) \sum_{r\!=\!1}^K \!\Big(\! \frac{\partial f(\bblambda_{1:2,r})}{\partial\lambda^{(r)}}\lambda_i \! \Big)^2 \le (K-1) C_L^2,\nonumber
	\end{align}
	where the integral Lipschitz condition is used in the last inequality. By substituting \eqref{proof:lemma016} into \eqref{proof:lemma014}, we can upper bound the second term in \eqref{proof:lemma01} as
	\begin{align}\label{proof:lemma017}
		{\rm tr}(\bbZ) \le (K-1) C_L^2 \|\bbx\|_2^2 (1-\mu)^2.
	\end{align}
	
	By using the bounds for \eqref{proof:lemma011}, \eqref{proof:lemma012} and \eqref{proof:lemma017} in \eqref{proof:lemma01}, we complete the proof
	\begin{align}\label{proof:lemma018}
		&\mathbb{E}\!\Big[\! \tr \Big(\! \sum_{k\!=\!1}^K\! \sum_{\ell\!=\!1}^K\!\! \alpha_k \alpha_\ell\!\! \sum_{r_1\!=\!1}^{k}\! \sum_{r_2\!=\!1}^\ell\!\! \bbE^{(\!r_2\!)}\bbS^{k\!+\!\ell\!-\!r_1\!-\!r_2}\bbE^{(\!r_1\!)} \bbS^{r_1\!-\!1}\bbx \bbx^\top\! \bbS^{r_2\!-\!1}\! \Big)\!\Big] \nonumber\\
		& \le (n+K-1) C_L^2 \|\bbx\|_2^2 (1-\mu)^2 + ndC_L^2\|\bbx\|_2^2 \delta^2.
	\end{align}
\end{proof}

\begin{lemma} \label{LemmmaErrorMatrix}
	Consider the graph shift operator over the air $\bbS_{\rm air}$ with the channel coefficient $h$ drawn from a random distribution with expectation $\mu$ and standard deviation $\delta$. Let $\bbS$ be the underlying graph shift operator, $\bbE_{\rm air} = \bbS_{\rm air} - \bbS$ be the deviation matrix, and $\bbD$ be the diagonal degree matrix with $d_i$ the degree of node $i$. Then, it holds that
	\begin{equation}\label{eq:LemmaErrorMatrix}
		\mathbb{E}\left[ \bbE_{\rm air}^2 \right] = (1-\mu)^2 \bbS + \delta^2 \bbD.
	\end{equation}  
\end{lemma}

\begin{proof}
	By definition, the $(i,j)$th entry of $\bbS_{\rm air}$ can be represented as $[\bbS_{\rm air}]_{ij} = h_{ij} s_{ij}$, where $h_{ij}$ is a random channel coefficient with expectation $\mu$ and standard deviation $\delta$ and where $s_{ij}$ is the $(i,j)$th entry of the adjacency matrix $\bbS$.\footnote{In the decentralized setting, we consider the graph shift operator as the adjacency matrix with communication links as graph edges.} By exploiting the matrix multiplication for $\bbS_{\rm air}\bbS_{\rm air}$, the $(i,j)$th entries of $\mathbb{E}[\bbS_{\rm air}^2]$ is given by 
	\begin{align} \label{prlem31}
		\left[\mathbb{E}[\bbS_{\rm air}^2]\right]_{ij} \!=\! \sum_{n\!=\!1}^N \!s_{in}s_{nj}\mathbb{E}[h_{in}h_{nj}].
	\end{align}
	The channel coefficients $\{ h_{ij} \}_{ij}$ are independent except for $h_{ij}=h_{ji}$ since $\bbS_{\rm air}$ is symmetric. Thus, we get 
	\begin{equation}\label{prlem3155}
		\mathbb{E}[ h_{in}h_{nj}] =
		\begin{cases}
			\mu^2,  & \mbox{if }i \ne j, \\
			\delta^2 + \mu^2, & \mbox{if }i=j.
		\end{cases}
	\end{equation}
	By substituting \eqref{prlem3155} into \eqref{prlem31}, we have
	\begin{equation}\label{prlem32}
		\left[\mathbb{E}[\bbS_{\rm air}^2]\right]_{ij} =
		\begin{cases}
			\sum_{n=1}^N \!s_{in}s_{nj}\mu^2,  & \mbox{if }i \ne j, \\
			\sum_{n=1}^N \!s_{in}s_{nj}(\delta^2 + \mu^2), & \mbox{if }i=j.
		\end{cases}
	\end{equation}
	Since $\left[\bbS^2\right]_{ij} \!=\! \sum_{n=1}^N \!s_{in}s_{nj}$ and $\sum_{n=1}^N \!s_{in}s_{ni} = d_i$ is the degree of node $i$, we can rewrite $\mathbb{E}[\bbS_{\rm air}^2]$ as
	\begin{align} \label{prlem33}
		\mathbb{E}[\bbS_{\rm air}^2] = \mu^2 \bbS^2 + \delta^2 \bbD,
	\end{align}
	where $\bbD$ is the degree matrix. 
	
	By substituting the representation $\bbE_{\rm air} = \bbS_{\rm air} - \bbS$ into $\mathbb{E}[\bbE_{\rm air}^2]$ and expanding the multiplication terms, we have
	\begin{align}\label{prlem34}
		\mathbb{E}[\bbE_{\rm air}^2] &= \mathbb{E}\big[ \bbS_{\rm air}^2 - \bbS_{\rm air}\bbS - \bbS\bbS_{\rm air} + \bbS^2 \big]\\
		& = \mathbb{E}\big[ \bbS_{\rm air}^2] - 2 \mu \bbS^2 + \bbS^2, \nonumber 
	\end{align}
	where the linearity of the expectation and the fact $\mathbb{E}[\bbS_{\rm air}] = \mu \bbS$ is used. By further substituting \eqref{prlem33} into \eqref{prlem34}, we get 
	\begin{align}\label{prlem35}
		\mathbb{E}[\bbE_{\rm air}^2] &= \mu^2 \bbS^2 - 2 \mu \bbS^2 + \bbS^2 + \delta^2 \bbD \\
		& = (1 - \mu)^2 \bbS + \delta^2 \bbD, \nonumber
	\end{align}
	which completes the proof.
\end{proof}


\bibliographystyle{IEEEbib}
\bibliography{myIEEEabrv,biblioOp}

\begin{thebibliography}{10}

\bibitem{10289977}
Z.~Gao and D.~Gündüz,
\newblock ``Air\textsc{GNN}s: Graph neural networks over the air,''
\newblock in {\em IEEE European Signal Processing Conference (EUSIPCO)}, 2023.

\bibitem{scarselli2008graph}
F.~Scarselli, M.~Gori, A.~C. Tsoi, M.~Hagenbuchner, and G.~Monfardini,
\newblock ``The graph neural network model,''
\newblock {\em IEEE Transactions on Neural Networks}, vol. 20, no. 1, pp.
  61--80, 2008.

\bibitem{defferrard2016convolutional}
M.~Defferrard, X.~Bresson, and P.~Vandergheynst,
\newblock ``Convolutional neural networks on graphs with fast localized
  spectral filtering,''
\newblock in {\em Advances in Neural Information Processing Systems (NeurIPS)},
  2016.

\bibitem{gao2021stochastic}
Z.~Gao, E.~Isufi, and A.~Ribeiro,
\newblock ``Stochastic graph neural networks,''
\newblock {\em IEEE Transactions on Signal Processing}, vol. 69, pp.
  4428--4443, 2021.

\bibitem{kenlay2021interpretable}
H.~Kenlay, D.~Thanou, and X.~Dong,
\newblock ``Interpretable stability bounds for spectral graph filters,''
\newblock in {\em International Conference on Machine Learning (ICML)}, 2021.

\bibitem{xu2018powerful}
K.~Xu, W.~Hu, J.~Leskovec, and S.~Jegelka,
\newblock ``How powerful are graph neural networks?,''
\newblock in {\em International Conference on Learning Representations (ICLR)},
  2019.

\bibitem{gao2020resource}
Z.~Gao, M.~Eisen, and A.~Ribeiro,
\newblock ``Resource allocation via graph neural networks in free space optical
  fronthaul networks,''
\newblock in {\em IEEE Global Communications Conference (GLOBECOM)}, 2020.

\bibitem{eisen2020optimal}
M.~Eisen and A.~Ribeiro,
\newblock ``Optimal wireless resource allocation with random edge graph neural
  networks,''
\newblock {\em IEEE Transactions on Signal Processing}, vol. 68, pp.
  2977--2991, 2020.

\bibitem{gao2023decentralized}
Z.~Gao, Y.~Shao, D.~G{\"u}nd{\"u}z, and A.~Prorok,
\newblock ``Decentralized channel management in wlans with graph neural
  networks,''
\newblock in {\em IEEE International Conference on Communications (ICC)}, 2023.

\bibitem{tolstaya2020learning}
E.~Tolstaya, F.~Gama, J.~Paulos, G.~Pappas, V.~Kumar, and A.~Ribeiro,
\newblock ``Learning decentralized controllers for robot swarms with graph
  neural networks,''
\newblock in {\em Conference on Robot Learning (CoRL)}, 2020.

\bibitem{chen2021graph}
S.~Chen, J.~Dong, P.~Ha, Y.~Li, and S.~Labi,
\newblock ``Graph neural network and reinforcement learning for multi-agent
  cooperative control of connected autonomous vehicles,''
\newblock {\em Computer-Aided Civil and Infrastructure Engineering}, vol. 36,
  no. 7, pp. 838--857, 2021.

\bibitem{gao2023environment}
Z.~Gao and A.~Prorok,
\newblock ``Environment optimization for multi-agent navigation,''
\newblock in {\em IEEE International Conference on Robotics and Automation
  (ICRA)}, 2023.

\bibitem{gao2023constrained}
Z.~Gao and A.~Prorok,
\newblock ``Constrained environment optimization for prioritized multi-agent
  navigation,''
\newblock {\em IEEE Open Journal of Control Systems}, 2023.

\bibitem{ying2018graph}
R.~Ying, R.~He, K.~Chen, P.~Eksombatchai, W.~L. Hamilton, and J.~Leskovec,
\newblock ``Graph convolutional neural networks for web-scale recommender
  systems,''
\newblock in {\em ACM International Conference on Knowledge Discovery \& Data
  Mining (SIGKDD)}, 2018.

\bibitem{wu2019session}
S.~Wu, Y.~Tang, Y.~Zhu, L.~Wang, X.~Xie, and T.~Tan,
\newblock ``Session-based recommendation with graph neural networks,''
\newblock in {\em AAAI Conference on Artificial Intelligence}, 2019.

\bibitem{10042031}
Z.~Gao and E.~Isufi,
\newblock ``Learning stochastic graph neural networks with constrained
  variance,''
\newblock {\em IEEE Transactions on Signal Processing}, vol. 71, pp. 358--371,
  2023.

\bibitem{gama2018convolutional}
F.~Gama, A.~G. Marques, G.~Leus, and A.~Ribeiro,
\newblock ``Convolutional neural network architectures for signals supported on
  graphs,''
\newblock {\em IEEE Transactions on Signal Processing}, vol. 67, no. 4, pp.
  1034--1049, 2018.

\bibitem{gama2022synthesizing}
F.~Gama, Q.~Li, E.~Tolstaya, A.~Prorok, and A.~Ribeiro,
\newblock ``Synthesizing decentralized controllers with graph neural networks
  and imitation learning,''
\newblock {\em IEEE Transactions on Signal Processing}, vol. 70, pp.
  1932--1946, 2022.

\bibitem{gao2022wide}
Z.~Gao, F.~Gama, and A.~Ribeiro,
\newblock ``Wide and deep graph neural network with distributed online
  learning,''
\newblock {\em IEEE Transactions on Signal Processing}, vol. 70, pp.
  3862--3877, 2022.

\bibitem{belloni2004fading}
F.~Belloni,
\newblock ``Fading models,''
\newblock {\em Postgraduate Course in Radio Communications}, pp. 1--4, 2004.

\bibitem{peppas2011overview}
K.~P. Peppas, H.~E. Nistazakis, and G.~S. Tombras,
\newblock ``An overview of the physical insight and the various performance
  metrics of fading channels in wireless communication systems,''
\newblock {\em Advanced Trends in Wireless Communications}, pp. 1--22, 2011.

\bibitem{popa2008fading}
S.~Popa, N.~Draghiciu, and R.~Reiz,
\newblock ``Fading types in wireless communications systems,''
\newblock {\em Journal of Electrical and Electronics Engineering}, vol. 1, no.
  1, pp. 233--237, 2008.

\bibitem{gunduz2019machine}
D.~G{\"u}nd{\"u}z, P.~De~Kerret, N.~D. Sidiropoulos, D.~Gesbert, C.~R. Murthy,
  and M.~Van Der~Schaar,
\newblock ``Machine learning in the air,''
\newblock {\em IEEE Journal on Selected Areas in Communications}, vol. 37, no.
  10, pp. 2184--2199, 2019.

\bibitem{gunduz2020communicate}
D.~G{\"u}nd{\"u}z, D.~B. Kurka, M.~Jankowski, M.~M. Amiri, E.~Ozfatura, and
  S.~Sreekumar,
\newblock ``Communicate to learn at the edge,''
\newblock {\em IEEE Communications Magazine}, vol. 58, no. 12, pp. 14--19,
  2020.

\bibitem{he2021overview}
S.~He, S.~Xiong, Y.~Ou, J.~Zhang, J.~Wang, Y.~Huang, and Y.~Zhang,
\newblock ``An overview on the application of graph neural networks in wireless
  networks,''
\newblock {\em IEEE Open Journal of the Communications Society}, vol. 2, pp.
  2547--2565, 2021.

\bibitem{shen2019graph}
Y.~Shen, Y.~Shi, J.~Zhang, and K.~B. Letaief,
\newblock ``A graph neural network approach for scalable wireless power
  control,''
\newblock in {\em IEEE Global Communications Conference (GLOBECOM) Workshops},
  2019.

\bibitem{chowdhury2021efficient}
A.~Chowdhury, G.~Verma, C.~Rao, A.~Swami, and S.~Segarra,
\newblock ``Efficient power allocation using graph neural networks and deep
  algorithm unfolding,''
\newblock in {\em IEEE International Conference on Acoustics, Speech and Signal
  Processing (ICASSP)}, 2021.

\bibitem{gu2023graph}
Y.~Gu, C.~She, Z.~Quan, C.~Qiu, and X.~Xu,
\newblock ``Graph neural networks for distributed power allocation in wireless
  networks: Aggregation over-the-air,''
\newblock {\em IEEE Transactions on Wireless Communications}, 2023.

\bibitem{yang2023implementing}
Y.~Yang, Z.~Zhang, Y.~Tian, R.~Jin, and C.~Huang,
\newblock ``Implementing graph neural networks over wireless networks via
  over-the-air computing: A joint communication and computation framework,''
\newblock {\em IEEE Wireless Communications}, vol. 30, no. 3, pp. 62--69, 2023.

\bibitem{li2021energy}
B.~Li, G.~Verma, C.~Rao, and S.~Segarra,
\newblock ``Energy-efficient power allocation in wireless networks using graph
  neural networks,''
\newblock in {\em Asilomar Conference on Signals, Systems, and Computers
  (ACSSC)}, 2021.

\bibitem{naderializadeh2023learning}
N.~NaderiAlizadeh, M.~Eisen, and A.~Ribeiro,
\newblock ``Learning resilient radio resource management policies with graph
  neural networks,''
\newblock {\em IEEE Transactions on Signal Processing}, vol. 71, pp. 995--1009,
  2023.

\bibitem{he2020resource}
Z.~He, L.~Wang, H.~Ye, G.~Y. Li, and B.~F. Juang,
\newblock ``Resource allocation based on graph neural networks in vehicular
  communications,''
\newblock in {\em IEEE Global Communications Conference (GLOBECOM)}, 2020.

\bibitem{shen2020graph}
Y.~Shen, Y.~Shi, J.~Zhang, and K.~B. Letaief,
\newblock ``Graph neural networks for scalable radio resource management:
  Architecture design and theoretical analysis,''
\newblock {\em IEEE Journal on Selected Areas in Communications}, vol. 39, no.
  1, pp. 101--115, 2020.

\bibitem{zhang2021scalable}
X.~Zhang, H.~Zhao, J.~Xiong, X.~Liu, L.~Zhou, and J.~Wei,
\newblock ``Scalable power control/beamforming in heterogeneous wireless
  networks with graph neural networks,''
\newblock in {\em IEEE Global Communications Conference (GLOBECOM)}, 2021.

\bibitem{10140175}
M.~Lee, G.~Yu, and H.~Dai,
\newblock ``Privacy-preserving decentralized inference with graph neural
  networks in wireless networks,''
\newblock {\em IEEE Transactions on Wireless Communications}, vol. 23, no. 1,
  pp. 543--558, 2024.

\bibitem{9962800}
Z.~Zhao, G.~Verma, C.~Rao, A.~Swami, and S.~Segarra,
\newblock ``Link scheduling using graph neural networks,''
\newblock {\em IEEE Transactions on Wireless Communications}, vol. 22, no. 6,
  pp. 3997--4012, 2023.

\bibitem{goldenbaum2013harnessing}
M.~Goldenbaum, H.~Boche, and S.~Sta{\'n}czak,
\newblock ``Harnessing interference for analog function computation in wireless
  sensor networks,''
\newblock {\em IEEE Transactions on Signal Processing}, vol. 61, no. 20, pp.
  4893--4906, 2013.

\bibitem{amiri2020federated}
M.~M. Amiri and D.~G{\"u}nd{\"u}z,
\newblock ``Federated learning over wireless fading channels,''
\newblock {\em IEEE Transactions on Wireless Communications}, vol. 19, no. 5,
  pp. 3546--3557, 2020.

\bibitem{fiorellino2024topological}
S.~Fiorellino, C.~Battiloro, and P.~D. Lorenzo,
\newblock ``Topological neural networks over the air,''
\newblock in {\em IEEE International Conference on Acoustics, Speech and Signal
  Processing (ICASSP)}, 2024.

\bibitem{amiri2021blind}
M.~M. Amiri, T.~M. Duman, D.~G{\"u}nd{\"u}z, S.~R. Kulkarni, and H.~V. Poor,
\newblock ``Blind federated edge learning,''
\newblock {\em IEEE Transactions on Wireless Communications}, vol. 20, no. 8,
  pp. 5129--5143, 2021.

\bibitem{ruiz2021graph}
L.~Ruiz, F.~Gama, and A.~Ribeiro,
\newblock ``Graph neural networks: Architectures, stability, and
  transferability,''
\newblock {\em Proceedings of the IEEE}, vol. 109, no. 5, pp. 660--682, 2021.

\bibitem{goldenbaum2013robust}
M.~Goldenbaum and S.~Stanczak,
\newblock ``Robust analog function computation via wireless multiple-access
  channels,''
\newblock {\em IEEE Transactions on Communications}, vol. 61, no. 9, pp.
  3863--3877, 2013.

\bibitem{amiri2020machine}
M.~M. Amiri and D.~G{\"u}nd{\"u}z,
\newblock ``Machine learning at the wireless edge: Distributed stochastic
  gradient descent over-the-air,''
\newblock {\em IEEE Transactions on Signal Processing}, vol. 68, pp.
  2155--2169, 2020.

\bibitem{zhu2019broadband}
G.~Zhu, Y.~Wang, and K.~Huang,
\newblock ``Broadband analog aggregation for low-latency federated edge
  learning,''
\newblock {\em IEEE Transactions on Wireless Communications}, vol. 19, no. 1,
  pp. 491--506, 2019.

\bibitem{zhu2018mimo}
G.~Zhu and K.~Huang,
\newblock ``\textsc{MIMO} over-the-air computation for high-mobility multimodal
  sensing,''
\newblock {\em IEEE Internet of Things Journal}, vol. 6, no. 4, pp. 6089--6103,
  2018.

\bibitem{lin2022distributed}
Z.~Lin, Y.~Gong, and K.~Huang,
\newblock ``Distributed over-the-air computing for fast distributed
  optimization: Beamforming design and convergence analysis,''
\newblock {\em IEEE Journal on Selected Areas in Communications}, vol. 41, no.
  1, pp. 274--287, 2022.

\bibitem{li2019wirelessly}
X.~Li, G.~Zhu, Y.~Gong, and K.~Huang,
\newblock ``Wirelessly powered data aggregation for \textsc{I}o\textsc{T} via
  over-the-air function computation: Beamforming and power control,''
\newblock {\em IEEE Transactions on Wireless Communications}, vol. 18, no. 7,
  pp. 3437--3452, 2019.

\bibitem{liu2020over}
W.~Liu, X.~Zang, Y.~Li, and B.~Vucetic,
\newblock ``Over-the-air computation systems: Optimization, analysis and
  scaling laws,''
\newblock {\em IEEE Transactions on Wireless Communications}, vol. 19, no. 8,
  pp. 5488--5502, 2020.

\bibitem{tang2022node}
S.~Tang, H.~Yomo, C.~Zhang, and S.~Obana,
\newblock ``Node scheduling for af-based over-the-air computation,''
\newblock {\em IEEE Wireless Communications Letters}, vol. 11, no. 9, pp.
  1945--1949, 2022.

\bibitem{du2023gradient}
J.~Du, B.~Jiang, C.~Jiang, Y.~Shi, and Z.~Han,
\newblock ``Gradient and channel aware dynamic scheduling for over-the-air
  computation in federated edge learning systems,''
\newblock {\em IEEE Journal on Selected Areas in Communications}, vol. 41, no.
  4, pp. 1035--1050, 2023.

\bibitem{csahin2023survey}
A.~{\c{S}}ahin and R.~Yang,
\newblock ``A survey on over-the-air computation,''
\newblock {\em IEEE Communications Surveys \& Tutorials}, vol. 25, no. 3, pp.
  1877--1908, 2023.

\bibitem{yang2020federated}
K.~Yang, T.~Jiang, Y.~Shi, and Z.~Ding,
\newblock ``Federated learning via over-the-air computation,''
\newblock {\em IEEE Transactions on Wireless Communications}, vol. 19, no. 3,
  pp. 2022--2035, 2020.

\bibitem{aygun2022hierarchical}
O.~Ayg{\"u}n, M.~Kazemi, D.~G{\"u}nd{\"u}z, and T.~M. Duman,
\newblock ``Hierarchical over-the-air federated edge learning,''
\newblock in {\em IEEE International Conference on Communications (ICC)}, 2022.

\bibitem{jing2022federated}
S.~Jing and C.~Xiao,
\newblock ``Federated learning via over-the-air computation with statistical
  channel state information,''
\newblock {\em IEEE Transactions on Wireless Communications}, vol. 21, no. 11,
  pp. 9351--9365, 2022.

\bibitem{yilmaz2022over}
S.~F. Yilmaz, B.~Has{\i}rc{\i}o{\u{g}}lu, and D.~G{\"u}nd{\"u}z,
\newblock ``Over-the-air ensemble inference with model privacy,''
\newblock in {\em IEEE International Symposium on Information Theory (ISIT)},
  2022.

\bibitem{jankowski2022airnet}
M.~Jankowski, D.~G{\"u}nd{\"u}z, and K.~Mikolajczyk,
\newblock ``Air\textsc{N}et: Neural network transmission over the air,''
\newblock {\em IEEE Transactions on Wireless Communications}, 2024.

\bibitem{sandryhaila2013discrete}
A.~Sandryhaila and J.~M.~F. Moura,
\newblock ``Discrete signal processing on graphs,''
\newblock {\em IEEE Transactions on Signal Processing}, vol. 61, no. 7, pp.
  1644--1656, 2013.

\bibitem{gao2023online}
Z.~Gao, G.~Yang, and A.~Prorok,
\newblock ``Online control barrier functions for decentralized multi-agent
  navigation,''
\newblock in {\em IEEE International Symposium on Multi-Robot and Multi-Agent
  Systems (MRS)}, 2023.

\bibitem{gama2020stability}
F.~Gama, J.~Bruna, and A.~Ribeiro,
\newblock ``Stability properties of graph neural networks,''
\newblock {\em IEEE Transactions on Signal Processing}, vol. 68, pp.
  5680--5695, 2020.

\bibitem{pardalos1994nonconvex}
P.~Pardalos, J.~R. Birge, D.~Du, C.~A. Floudas, J.~Mockus, H.~D. Sherali, and
  G.~Stavroulakis,
\newblock {\em Nonconvex Optimization and Its Applications},
\newblock Springer, 1994.

\bibitem{jongen2007optimization}
H.~T. Jongen, K.~Meer, and E.~Triesch,
\newblock {\em Optimization theory},
\newblock Springer Science \& Business Media, 2007.

\bibitem{gao2022balancing}
Z.~Gao, A.~Koppel, and A.~Ribeiro,
\newblock ``Balancing rates and variance via adaptive batch-size for stochastic
  optimization problems,''
\newblock {\em IEEE Transactions on Signal Processing}, vol. 70, pp.
  3693--3708, 2022.

\bibitem{joshi2008sensor}
S.~Joshi and S.~Boyd,
\newblock ``Sensor selection via convex optimization,''
\newblock {\em IEEE Transactions on Signal Processing}, vol. 57, no. 2, pp.
  451--462, 2008.

\bibitem{Ba2010}
D.~Kingma and J.~Ba,
\newblock ``Adam: A method for stochastic optimization,''
\newblock in {\em International Conference on Learning Representations (ICLR)},
  2010.

\bibitem{Wang1986}
A.-D. Wang, T.-S. Kuo, and C.-F. Hsu,
\newblock ``Trace bounds on the solution of the algebraic matrix riccati and
  lyapunov equation,''
\newblock {\em IEEE Transactions on Automatics Control}, vol. AC-31, no. 7, pp.
  654--656, 1986.

\bibitem{Meyer2000}
C.~D. Meyer,
\newblock {\em Matrix Analysis and Applied Linear Algebra},
\newblock Society for Industrial and Applied Mathematics, 2000.

\bibitem{Zhang1999}
F.~Zhang,
\newblock {\em Matrix Theory},
\newblock New York: Springer-Verlag, 1999.

\bibitem{Chandler1987}
D.~Chandler,
\newblock {\em Introduction to Modern Statistical Mechanics},
\newblock Oxford University Press, 1987.

\end{thebibliography}

\end{document}